\documentclass[a4paper]{article}
\usepackage{fullpage}

%



\usepackage{enumitem}
\usepackage{calc}

\usepackage{lmodern} 

\usepackage{graphicx} 

\usepackage{amsmath, accents}
\usepackage{amssymb,tikz}
\usepackage{pict2e}
\usepackage{amsthm}
\usepackage{amscd}
\usepackage{mathrsfs}
\usepackage{todonotes}

\usepackage[colorlinks=true, pdfstartview=FitV, linkcolor=blue, citecolor=blue, urlcolor=blue]{hyperref}

\usepackage{yfonts}

\usepackage{marvosym}

\newtheorem{theorem}			     {Theorem} [section]
\newtheorem{proposition}[theorem]	 {Proposition}	
\newtheorem{corollary}	  [theorem]	 {Corollary}	
\newtheorem{lemma}	      [theorem]  {Lemma}		
\theoremstyle{definition}
\newtheorem{definition}	         {Definition}

\newtheorem{remark} {Remark}


\newcommand{\C}{\mathbb{C}}
\newcommand{\R}{\mathbb{R}}

\newcommand{\Or}{\mathcal{O}}

\newcommand{\Ai}{{\rm Ai}}

\usepackage{tikz}
\usetikzlibrary{arrows}
\usetikzlibrary{decorations.pathmorphing}
\usetikzlibrary{decorations.markings}
\usetikzlibrary{patterns}
\usetikzlibrary{automata}
\usetikzlibrary{positioning}
\usepackage{tikz-cd}

\usepackage{pgfplots}

\numberwithin{equation}{section}

\def\ds{\displaystyle}
\def\bigO{{\cal O}}

\tikzset{->-/.style={decoration={markings,mark=at position #1 with {\arrow[thick]{>}}},postaction={decorate}}}
	
\tikzset{-<-/.style={decoration={markings,mark=at position #1 with{\arrow[thick]{<}}},postaction={decorate}}}
				
	
\begin{document}
\title{Oscillatory asymptotics for the Airy kernel determinant \\ on two intervals}
\author{Elliot Blackstone, Christophe Charlier, Jonatan Lenells \footnote{Department of Mathematics, KTH Royal Institute of Technology, Lindstedtsv\"{a}gen 25, SE-114 28 Stockholm, Sweden. We are grateful to the referees for several excellent suggestions. We are also grateful to B. Fahs, I. Krasovsky, and T. Maroudas for useful discussions. The work of all three authors is supported by the European Research Council, Grant Agreement No. 682537. Lenells also acknowledges support from the G\"oran Gustafsson Foundation, the Swedish Research Council, Grant No. 2015-05430, and the Ruth and Nils-Erik Stenb\"ack Foundation.
E-mail: elliotb@kth.se, cchar@kth.se, jlenells@kth.se.}}
\maketitle

\begin{abstract}
We obtain asymptotics for the Airy kernel Fredholm determinant on two intervals. We give explicit formulas for all the terms up to and including the oscillations of order $1$, which are expressed in terms of Jacobi $\theta$-functions.
\end{abstract}
 


\section{Introduction}


The Airy point process is a well-known universal point process from the theory of random matrices and is used to model the behavior of the largest eigenvalue for a wide class of large Hermitian random matrices \cite{BEY, Deift, DeiftGioev, Forrester2, Soshnikov2}. The Airy point process also appears in other related models. For example, it describes the transition between the solid and liquid regions in random tiling models \cite{Johansson} and the largest parts of Young diagrams with respect to the Plancherel measure \cite{BaikDeiftRains, BOO}. This is a determinantal point process on $\mathbb{R}$ whose correlation kernel is given by
\begin{equation}\label{Airy kernel}
K^\Ai(u, v) = \frac{ \Ai(u) \Ai'(v) - \Ai'(u) \Ai(v) }{ u - v },
\end{equation}
where $\Ai$ denotes the Airy function. Let $\vec{x} = (x_{1},x_{2},x_{3})$ be such that $x_3<x_{2}<x_1<0$ and let $F(\vec{x}) = F(x_{1},x_{2},x_{3})$ be the probability of finding no points on $(x_{3},x_{2})\cup(x_{1},+\infty)$. In other words, $F$ represents a gap probability on two disjoint intervals. It is well-known \cite{Soshnikov} that $F(\vec{x})$ can be written as a Fredholm determinant
\begin{equation}\label{generating}
F(\vec{x}) = \det\left( 1 - \mathcal{K}^{\Ai} \chi_{(x_{3},x_{2}) \cup (x_{1},+\infty)} \right),
\end{equation}
where $\mathcal{K}^{\Ai}$ is the integral operator whose kernel is $K^{\Ai}$. The purpose of this paper is to obtain asymptotics for $F(r\vec{x}) = F(rx_{1},rx_{2},rx_{3})$ as $r \to + \infty$. Such asymptotics are often referred to as large gap asymptotics.

\vspace{0.2cm}The history of gap expansions began with the work of Dyson \cite{Dyson} for the sine process. Many authors contributed over the years to the analysis of large gap asymptotics, including Widom \cite{Widom1995}, des Cloizeaux and Mehta \cite{CloizeauxMehta}, Krasovsky \cite{KrasovskySine}, Ehrhardt \cite{Ehr sine}, etc., and we refer to \cite{BotDeiItsKra1} and the references therein for a historical review. 

\vspace{0.2cm}Large gap asymptotics on a single interval for the Airy kernel have been analysed in \cite{BBD, DIK, TracyWidom} and are now well-understood. The probability of a gap on $(-r, +\infty)$ enjoys the following asymptotics
\begin{equation}\label{largegapAiry1}
 \det\left( 1 - \mathcal{K}^{\Ai} \chi_{(-r,+\infty)} \right) = \exp\Big(-\frac{|r|^{3}}{12} - \frac{1}{8} \log |r| + \zeta'(-1) + \frac{1}{24}\log 2 + \bigO(r^{-\frac{3}{2}})\Big), \qquad \mbox{as } r \to + \infty,
\end{equation}
where $\zeta'$ denotes the derivative of the Riemann zeta function. This expansion was first obtained in \cite{TracyWidom}, but without rigorously proving the term of order $1$. Note that the term of order $1$ in \eqref{largegapAiry1} is a constant (independent of $r$). In Theorem \ref{thm: main result} below, we show that this in no longer the case when $x_{3}<x_{2}<x_{1}<0$; instead the term of order $1$ presents some oscillations expressed in terms of a Jacobi $\theta$-function.

\vspace{0.2cm}We briefly introduce the necessary material to present our result. 
Let us define
\begin{align}\label{def of Q intro}
\sqrt{\mathcal{Q}(z)} = \sqrt{(z-x_{1})(z-x_{2})(z-x_{3})},
\end{align}
where the principal branch is taken for each square root. Thus $\sqrt{\mathcal{Q}(z)}$ is analytic on $\mathbb{C}\setminus \big( (-\infty,x_{3}] \cup [x_{2},x_{1}] \big)$ and $\sqrt{\mathcal{Q}(z)} \sim z^\frac{3}{2}$ as $z \to \infty$. Let us also define
\begin{align}\label{def of p intro}
q(z) = -z^{2} + \frac{x_{1}+x_{2}+x_{3}}{2}z -q_{0},
\end{align}
where $q_{0} \in \mathbb{R}$ is given by
\begin{align}\label{def of q0}
q_{0} = \left( \int_{x_{3}}^{x_{2}} \frac{ds}{\sqrt{\mathcal{Q}(s)}} \right)^{-1} \int_{x_{3}}^{x_{2}} \frac{-s^{2} + \frac{x_{1}+x_{2}+x_{3}}{2}s}{\sqrt{\mathcal{Q}(s)}}ds.
\end{align}
The constant $\Omega$, defined by
\begin{align*}
\Omega = \int_{x_{2}}^{x_{1}} \frac{2q(x)}{\big|\sqrt{\mathcal{Q}(x)}\big|}dx,
\end{align*}
also appears in our final result. We show in Proposition \ref{p_prop} (see also Remark \ref{remark: x/y plane}) that $q$ has one simple root in $(x_{3},x_{2})$ and another simple root in $(x_{1},+\infty)$, so $q(x)>0$ for $x \in (x_{2},x_{1})$, which implies that $\Omega > 0$. We define $c_0$ and $\tau$ by
\begin{align}\label{def of c0 and tau}
& c_{0} = \left( \int_{x_{3}}^{x_{2}} \frac{2dx}{\big|\sqrt{\mathcal{Q}(x)}\big|} \right)^{-1} \in \mathbb{R}^{+}, \qquad \tau = \int_{x_{2}}^{x_{1}} \frac{2ic_{0}dx}{\big| \sqrt{\mathcal{Q}(x)}\big|} \in i \mathbb{R}^{+}.
\end{align}
The $\theta$-function $\theta(z) = \theta(z;\tau)$ associated to the constant $\tau$ of \eqref{def of c0 and tau} is given by (see e.g. \cite{NIST})
\begin{align}
\theta(z) = \sum_{m=-\infty}^{\infty} e^{2\pi i m z}e^{\pi i m^{2} \tau}.
\end{align}
It is an entire function which satisfies
\begin{align}\label{periodicity property of theta function}
& \theta(z+1) = \theta(z), & &  \theta(z+\tau) = e^{-2\pi i z}e^{-\pi i \tau}\theta(z), & & \theta(-z) = \theta(z), & & \mbox{for all } z \in \mathbb{C}.
\end{align}
We are now ready to state our main result.
\begin{theorem}\label{thm: main result}
Let $x_3<x_{2}<x_1<0$ be fixed. As $r \to + \infty$, we have
\begin{align}
& F(r\vec{x}) = \exp \Big( c\, r^{3} - \frac{1}{2} \log r + \log \theta(\nu) + C + \bigO(r^{-1}) \Big), \label{F asymp} \\
& c = \frac{x_{1}^{3}+x_{2}^{3}+x_{3}^{3}-(x_{1}+x_{2})(x_{1}+x_{3})(x_{2}+x_{3})}{12}+\frac{q_{0}}{3}\big( x_{1}+x_{2}+x_{3}\big), \label{def of C1}
\end{align}
where $C=C(\vec{x})$ is independent of $r$ and $\nu = \nu(r,\vec{x})$ is given by
\begin{align}\label{def of a}
\nu = - \frac{\Omega}{2\pi}r^\frac{3}{2}.
\end{align}
\end{theorem}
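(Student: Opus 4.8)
The plan is to reduce the computation of $\log F(r\vec x)$ to the asymptotic analysis of a $2\times 2$ Riemann--Hilbert (RH) problem via a differential identity, and then to run a Deift--Zhou steepest descent analysis governed by the genus-one Riemann surface $w^{2}=\mathcal{Q}(z)$. Writing $J_{r}=(rx_{3},rx_{2})\cup(rx_{1},+\infty)$, the first step is to express $\partial_{r}\log F(r\vec x)$ in terms of the resolvent of $\mathcal{K}^{\Ai}\chi_{J_{r}}$ at the three endpoints $rx_{1},rx_{2},rx_{3}$, each contributing one term weighted by the corresponding $x_{j}$ through the chain rule. Realising the Airy-kernel operator as an integrable operator, this resolvent data is encoded in the solution $Y=Y(z;r)$ of an RH problem normalised to the identity at $\infty$ with jumps on $J_{r}$; after rescaling, the jump contour becomes the fixed set $(x_{3},x_{2})\cup(x_{1},+\infty)$ with all $r$-dependence placed in the conjugating factors, and the task reduces to the large-$r$ behaviour of $Y$ near $x_{1},x_{2},x_{3}$.

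The core of the argument is the steepest descent analysis, run through the usual chain of transformations. First a $g$-function transformation $Y\mapsto T$ built from $\phi(z)=2\int_{x_{1}}^{z}\frac{q(s)}{\sqrt{\mathcal{Q}(s)}}\,ds$: its rescaled version produces the cubic rate, the polynomial $q$ of \eqref{def of p intro} is forced by the requirement that $\phi$ be a consistent \quotes{effective potential} vanishing like a square root at the soft edges $x_{1},x_{2},x_{3}$, and $q_{0}$ of \eqref{def of q0} is precisely the constant making $\int_{x_{3}}^{x_{2}}\frac{q(s)}{\sqrt{\mathcal{Q}(s)}}\,ds=0$, i.e. the genus-one period condition gluing the two bands across the void $(x_{3},x_{2})$. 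One then opens lenses around the bands $(-\infty,x_{3}]$ and $[x_{2},x_{1}]$, $T\mapsto S$, using the sign of $\mathrm{Re}\,\phi$ supplied by Proposition~\ref{p_prop}; constructs a global parametrix $P^{(\infty)}$ solving the model problem on $w^{2}=\mathcal{Q}(z)$, which is where the Jacobi $\theta$-function with modulus $\tau$ of \eqref{def of c0 and tau} appears, its argument carrying the $r$-dependent shift $\nu=-\frac{\Omega}{2\pi}r^{3/2}$ generated by the phase $r^{3/2}\Omega$ accumulated over the band $[x_{2},x_{1}]$ (equivalently, minus the filling fraction of that band times $r^{3/2}$); builds Airy local parametrices in fixed-size disks around $x_{1},x_{2},x_{3}$; and finally runs a small-norm argument for $R=S(P^{(\infty)})^{-1}$, suitably modified inside the disks, whose jumps are $I+\bigO(r^{-3/2})$ on the lens boundaries and $I+\bigO(r^{-1})$ on the disk boundaries, so that $R=I+\bigO(r^{-1})$ uniformly, its leading coefficients being explicit in $\theta$ evaluated at $\nu$ and at nearby points.

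To conclude, one unwinds the transformations, so that $\partial_{r}\log F(r\vec x)$ becomes an explicit function of $r$ up to $\bigO(r^{-2})$: a polynomial part equal to $3c\,r^{2}$ with $c$ as in \eqref{def of C1}; a term $-\tfrac{1}{2r}$ coming from the Airy parametrices; a term of the form $\frac{\theta'(\nu)}{\theta(\nu)}\,\nu'(r)$ with $\nu'(r)=-\tfrac{3\Omega}{4\pi}r^{1/2}$; and further terms absolutely integrable on $[r_{0},+\infty)$. Integrating from a fixed large $r_{0}$ to $r$ and using $\frac{d}{dr}\log\theta(\nu)=\frac{\theta'(\nu)}{\theta(\nu)}\nu'(r)$ then gives \eqref{F asymp}: the polynomial part integrates to $c\,r^{3}$, the $-\tfrac{1}{2r}$ to $-\tfrac12\log r$, the $\theta$-term to $\log\theta(\nu)$, and the remaining integrals together with the lower-limit contribution at $r_{0}$ form the $r$-independent constant $C=C(\vec x)$, with total error $\bigO(r^{-1})$. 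The quasi-periodicity relations \eqref{periodicity property of theta function} ensure that $\log\theta(\nu)$ is the correct bounded, oscillatory object (well defined modulo the period lattice) and that the order-$1$ oscillations are genuine.

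The main obstacle is the genus-one global parametrix: building $P^{(\infty)}$ from $\theta$-functions, fixing the normalisation of $\phi$ through $q_{0}$ so that the half-periods attached to the three soft edges and the Abel-map image of $\infty$ combine to make the $\theta$-argument exactly $\nu$ with no spurious $r$-dependent constant, and verifying that, after the $r$-integration, this contribution splits cleanly into $\log\theta(\nu)$ plus an $r$-independent term. A secondary difficulty is the uniformity of the analysis up to the three endpoints and the compatibility of the local parametrix at $x_{1}$ with the semi-infinite interval $(rx_{1},+\infty)$ and with the behaviour of the Airy RH problem at $\infty$. Pinning down $C$ explicitly would require a further deformation linking the two-interval problem to the one-interval case \eqref{largegapAiry1}, or a direct evaluation of the $r_{0}$-dependent integrals; since the theorem asserts only the existence and $r$-independence of $C$, this is not needed.
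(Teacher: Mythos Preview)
Your overall architecture is right --- differential identity via IIKS, Deift--Zhou steepest descent on a genus-one surface, global parametrix in $\theta$-functions, local parametrices at the three endpoints, small-norm closing, then integrate --- and matches the paper's proof in outline. But two of the steps you describe would not go through as written.

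First, the local parametrices at $x_{1},x_{2},x_{3}$ are \emph{Bessel}, not Airy. The RH problem for $\Psi$ has $\Psi(z)=\bigO(\log|z-y_{j}|)$ at each endpoint, and the $g$-function satisfies $g(z)\sim c\,(z-y_{j})^{1/2}$ (since $p(y_{j})\neq 0$ by Proposition~\ref{p_prop}), not $(z-y_{j})^{3/2}$. These are hard edges: the conformal map is $f_{y_{j}}(z)=\tfrac14 g(z)^{2}$ (linear in $z-y_{j}$), the model problem is $\Phi_{\mathrm{Be}}$, and the matching on $\partial\mathbb{D}_{y_{j}}$ gives jumps $I+\bigO(r^{-3/2})$, not $I+\bigO(r^{-1})$. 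An Airy parametrix would neither match the logarithmic endpoint behaviour nor the jump structure near $y_{j}$.

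Second, and more seriously, neither the $-\tfrac{1}{2r}$ term nor the coefficient $1$ in front of $\partial_{r}\log\theta(\nu)$ comes ``for free'' from the local parametrices; this is the heart of the paper. After unwinding, $\partial_{r}\log F$ splits into (i) a leading $r^{2}$ term giving $3c\,r^{2}$, (ii) a term $\sum_{j}(-1)^{j+1}x_{j}[E_{y_{j}}^{-1}E_{y_{j}}']_{21}/(2\pi i r)$, and (iii) a term involving $R^{(1)\prime}(y_{j})$. The $\theta$-term comes from (ii): each summand is a rather complicated combination of $\theta$ and its derivatives at $\nu,\nu+\tfrac12,\nu+\tfrac{\tau}{2},\nu+\tfrac{1+\tau}{2}$, and showing the sum equals $c_{3}\,\partial_{r}\log\theta(\nu)$ with $c_{3}=1$ requires first a battery of $\theta$/Abel-map identities (the paper's Section~\ref{Section: theta identities}) and then Riemann's bilinear identity to collapse $c_{3}$ to $1$. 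The $-\tfrac12\log r$ term comes from (iii): it is a sum of residue contributions $(\mathcal{J}_{y_{j}})_{y_{j'}}^{(k)}$ that a priori contain oscillations through $a=\varphi_{A}^{-1}(\nu)$, and one must show both that the oscillatory part integrates to a constant (Proposition~\ref{prop: asymptotic of Abel integral}) and that the resulting coefficient simplifies, via explicit complete-elliptic-integral identities, to exactly $-\tfrac12$ (Proposition~\ref{prop: c2 is -1/2}). Your proposal asserts both of these as immediate; they are in fact the technical core of the argument, and without them one would only obtain \eqref{F asymp c1 c2 c3} with unevaluated $c_{2},c_{3}$.
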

\begin{remark}
Theorem \ref{thm: main result} has been verified numerically using the Linear Algebra package of Bornemann for Fredholm determinants \cite{Bornemann}.
\end{remark}
\begin{remark}\label{remark:consistency check}
The asymptotic formula \eqref{F asymp} is not valid if $r \to + \infty$ and simultaneously $x_{2}\uparrow x_{1}$ or $x_{2} \downarrow x_{3}$. However, it is possible to naively expand our expression for the constant $c$ as $x_{2}\uparrow x_{1}$ or as $x_{2} \downarrow x_{3}$ and see if it agrees with the constants for the leading term of \eqref{largegapAiry1}. The expression \eqref{def of C1} for $c$ is given in terms of $q_{0}$ defined in \eqref{def of q0}. We show in Lemma \ref{lemma:g1 asymptotics} that
\begin{align}
& q_{0} = \frac{x_{1}x_{3}}{2} + \bigO\big(x_{2}-x_{3}\big), & & \mbox{as } x_{2} \downarrow x_{3}, \label{asymp for q0 when x2 to x3} \\
& q_{0} = \frac{x_{1}x_{3}}{2}+\bigO\left(\frac{1}{\log(x_{1}-x_{2})}\right), & & \mbox{as } x_{2} \uparrow x_{1}. \label{asymp for q0 when x2 to x1}
\end{align}
Using these expansions, we obtain
\begin{align*}
\lim_{x_{2}\uparrow x_{1}} c = -\frac{|x_{3}|^{3}}{12}, \qquad \lim_{x_{2}\downarrow x_{3}} c = -\frac{|x_{1}|^{3}}{12},
\end{align*}
which is consistent with \eqref{largegapAiry1}. Interestingly, the coefficient $-\frac{1}{2}$ of the $\log r$ term in \eqref{F asymp} is independent of $x_{1},x_{2},x_{3}$ and different from the coefficient $-\frac{1}{8}$ of the $\log r$ term in \eqref{largegapAiry1}.
\end{remark}
\begin{remark}
In \cite{DIZ}, Deift, Its and Zhou have obtained large gap asymptotics for the sine point process on disjoint intervals. Their final asymptotic formula is similar to our formula \eqref{F asymp}, in the sense that the subleading term in \cite{DIZ} is also logarithmic, and the third term presents oscillations also described in terms of Jacobi $\theta$-functions. However, their constant associated to the logarithmic term is quite involved and expressed in terms of hyperelliptic integrals, while in our case the constant is simply $-\frac{1}{2}$. We were informed by B. Fahs at an early stage of our project that the prefactor of log s is also $-\frac{1}{2}$ for the sine kernel determinant on two large intervals \cite{FK20}.
\end{remark}

\subsection{Outline of the proof}
We first follow the procedure developed by Its-Izergin-Korepin-Slavnov \cite{IIKS} to obtain an identity for $\partial_{r} \log F(r\vec{x})$ in terms of a certain resolvent operator. Next, we use a result of Claeys and Doeraene \cite{ClaeysDoeraene} to express this resolvent in terms of a model Riemann-Hilbert (RH) problem, whose solution is denoted $\Psi$, see \eqref{final formula diff identity} for the exact expression.  We employ the Deift/Zhou \cite{DKMVZ1,DeiftZhou} steepest descent method to obtain asymptotics for $\Psi$ as $r \to + \infty$. This method is well-established, but its application in the present situation leads to rather involved analysis. By integrating the differential identity $\partial_{r} \log F(r\vec{x})$, we have
\begin{align}\label{integration diff identity general form}
\log F(r\vec{x}) = \log F(M\vec{x}) + \int_{M}^{r} \partial_{r'} \log F(r'\vec{x})dr',
\end{align}
where $M>0$ is a sufficiently large constant. We obtain our main result by substituting the large $r'$ asymptotics for $\partial_{r'} \log F(r'\vec{x})$ and then performing integration with respect to $r'$. 

\vspace{0.2cm}Our analysis of the RH problem for $\Psi$ involves several quantities related to a genus one Riemann surface, which is realized as two copies of the complex plane glued together along two disjoint intervals. In particular, Jacobi $\theta$-functions, the Abel map, see \eqref{def of Abel}, and the solution of a Jacobi inversion problem, all play an important role.  The simple fact that the large $r$ asymptotics for $F(r\vec{x})$ are of the form
\begin{align}
& F(r\vec{x}) = \exp \Big( c\, r^{3} +c_{2} \log r + c_{3}\log \theta(\nu) + C + \bigO(r^{-1}) \Big), \label{F asymp c1 c2 c3}
\end{align}
for certain constants $c$, $c_{2}$, $c_{3}$ and $C$ requires a non-trivial amount of work. In particular, it is not straightforward to show that there are no terms proportional to $r^\frac{3}{2}$, no oscillations of order $\log r$, and that all oscillations of order $1$ can be written in the form $c_{3}\log \theta(\nu)$. An important part of this work is devoted to understanding the algebra related to the $\theta$-functions and the Abel map (see \cite{DIZ} for pioneering work in this direction). We obtain very simple expressions for $c_{2}$ and $c_{3}$ only after considerable simplifications using many identities involving $\theta$-functions and the Abel map. Furthermore, the identity $c_{3} = 1$ requires the use of Riemann's bilinear identity, while $c_{2} = -\frac{1}{2}$ requires the explicit evaluation of certain elliptic integrals, see Proposition \ref{prop: c2 is -1/2}. The fact that we manage to prove $c_{2} = -\frac{1}{2}$ and $c_{3} = 1$ can be viewed as one of the main contributions of this paper. 

\vspace{0.2cm}The integration constant $\log F(M\vec{x})$ in \eqref{integration diff identity general form} is unknown and contributes to the expression $C$ of \eqref{F asymp}, so our method does not allow for an explicit expression for $C$. Such constants are notoriously complicated to obtain (see \cite{Krasovsky,FahsKrasovsky}), and we leave this for future works.

\begin{remark}
I. Krasovsky and T. Maroudas \cite{KraMarou} have simultaneously and independently analyzed the large $r$ asymptotics of the Fredholm determinant $F(r\vec{x})$. Their method is based on a different differential identity than the one considered here.
\end{remark}

\subsection{Organization of the paper}
First, we present the model RH problem for $\Psi$ and obtain the differential identity \eqref{final formula diff identity} in Section \ref{section:diffid}. We perform the first steps of the steepest descent analysis in Section \ref{section:RH1}. These steps are denoted by $\Psi \mapsto T \mapsto S$ and consists of constructing the so-called $g$-function and the opening of lenses. The second main part of the steepest descent method consists in constructing approximations to $S$ in four distinct regions of the complex plane. The global parametrix $P^{(\infty)}$ approximates $S$ everywhere in the complex plane except near the points $y_{j}:=x_{j}-x_{3}$, and the local parametrix $P^{(y_j)}$ approximates $S$ near the point $y_j$, $j=1,2,3$. In Section \ref{section: global parametrix}, we construct $P^{(\infty)}$ explicitly in terms of Jacobi $\theta$-functions. This construction is technically different from the construction in \cite{DKMVZ2}, but contains similar ideas \cite{DIKZ1999}. In Section \ref{subsec:Besselparametrix}, we use some results from \cite{DIZ} to build local parametrices $P^{(y_{j})}$, $j=1,2,3$, in terms of Bessel functions. The last step $S \mapsto R$ of the steepest descent method is done in Section \ref{section:smallnorm}; in particular we show that $P^{(y_j)}$ approximates $S$ in a small neighborhood of $y_{j}$, $j=1,2,3$, for sufficiently large $r$. In Section \ref{Section: theta identities}, we summarize certain $\theta$-function identities that are required to analyze the differential identity \eqref{final formula diff identity}.  In Section \ref{section:integration1}, we complete the proof of Theorem \ref{thm: main result} by obtaining large $r$ asymptotics for $\partial_{r} \log F(r \vec{x})$ via \eqref{final formula diff identity}, and subsequently large $r$ asymptotics for $\log F(r \vec{x})$ via \eqref{integration diff identity general form}.

\section{Differential identity for $F$}\label{section:diffid}
In this section, we use the method of Its-Izergin-Korepin-Slavnov \cite{IIKS} to obtain a differential identity for $\partial_{r} \log F(r\vec{x})$ in terms of a model RH problem $\Psi$ obtained in \cite{ClaeysDoeraene}, which is a simple generalization of the problem in \cite[p. 374]{CIK} and the corresponding formula \cite[eq (2.17)]{CIK}.

\vspace{0.2cm}A kernel $K : \mathbb{R}^{2} \to \mathbb{R}$ is said to be integrable if it can be written in the form $K(x,y)=\frac{f^T(x)g(y)}{x-y}$, where $f(x)$ and $g(y)$ are column vectors satisfying $f^T(x)g(x)=0$. It is well-known and easy to see from \eqref{Airy kernel} that the Airy kernel $K^{\Ai}$ is integrable. In our case, we need to consider the trace-class kernel
\begin{align}\label{def of Krx}
K_{\vec x}(x,y) =  K^\Ai(x,y)  \big(\chi_{(x_3, x_{2})}(y) + \chi_{(x_1, +\infty)}(y)\big).
\end{align}
This kernel is also integrable with $f$ and $g$ given by
\begin{align*}
f(x)=\begin{pmatrix}\Ai(x)\\
\Ai'(x)\end{pmatrix}, \qquad g(y)=\begin{pmatrix}
\Ai'(y)\big(\chi_{(x_3, x_{2})}(y) + \chi_{(x_1, +\infty)}(y)\big) \\
-\Ai(y)\big(\chi_{(x_3, x_{2})}(y) + \chi_{(x_1, +\infty)}(y)\big)
\end{pmatrix}.
\end{align*}
The associated integral operator $\mathcal K_{\vec x}:L^{2}((x_{3},x_{2})\cup(x_{1},+\infty)) \to L^{2}((x_{3},x_{2})\cup(x_{1},+\infty))$ is given by
\begin{equation}\label{eq:defoperator}
\mathcal{K}_{\vec x}f(x)=\int_{-\infty}^{+\infty}K^\Ai(x,y) \big(\chi_{(x_3, x_{2})}(y) + \chi_{(x_1, +\infty)}(y)\big) f(y)dy.
\end{equation}
Let us now scale the size of the intervals with a parameter $r>0$, i.e.\ we replace $\vec{x}$ by $r \vec{x}$. Since $\partial_{r} \log F(r\vec{x}) =\partial_r\log\det\left(I-\mathcal{K}_{r\vec{x}}\right) $, by using general theory of integrable kernel operators, we obtain
\begin{align}
    \partial_{r} \log F(r\vec{x})&=-\text{Tr}\left[\left(I-\mathcal{K}_{r\vec{x}}\right)^{-1}\partial_r\mathcal{K}_{r\vec{x}}\right]=\sum_{j=1}^{3} (-1)^{j+1}x_j\text{Tr}\left[\left(I-\mathcal{K}_{r\vec{x}}\right)^{-1}\mathcal{K}_{r\vec{x}} \delta_{rx_j}\right] \nonumber \\
    &=\sum_{j=1}^{3}(-1)^{j+1}x_j\text{Tr}\left[\mathcal{R}_{r\vec{x}} \delta_{rx_j}\right] =\sum_{j=1}^{3}(-1)^{j+1}x_j\lim_{u\to rx_j}R_{r\vec{x}}(u,u), \label{F_r identity}
\end{align}
where $\delta_{rx_j}$ is the Dirac delta operator, $\mathcal{R}_{r\vec{x}}$ is the resolvant operator given by
\begin{align*}
\mathcal{R}_{r\vec{x}} := \left(I-\mathcal{K}_{r\vec{x}}\right)^{-1}\mathcal{K}_{r\vec{x}} = \left(I-\mathcal{K}_{r\vec{x}}\right)^{-1} - I,
\end{align*}
$R_{r\vec{x}}$ is the kernel of $\mathcal{R}_{r\vec{x}}$, and the limits $u\to rx_j$, $j = 1,2,3$ in \eqref{F_r identity} are taken from the interior of $(rx_3,rx_2) \cup (rx_1,+\infty)$. 

Next, we use a result of \cite{ClaeysDoeraene} (based on \cite{IIKS}) which relates $\mathcal{R}_{r\vec{x}}$ to the model RH problem $\Psi$.
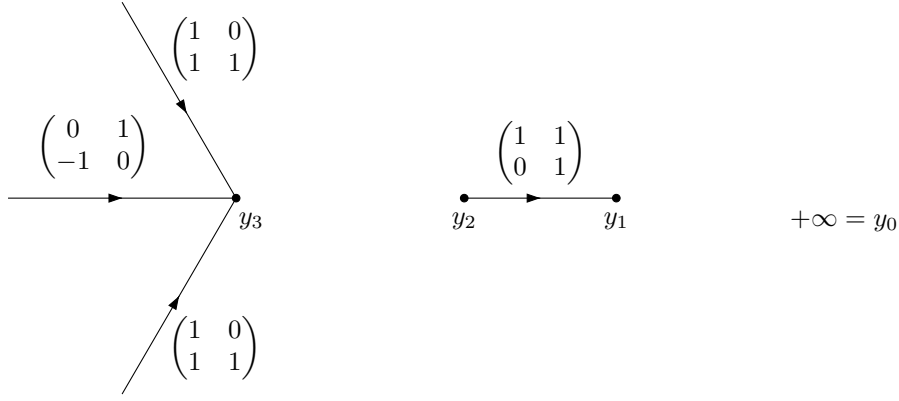
\begin{figure}
\centering
\begin{tikzpicture}
\draw[fill] (0,0) circle (0.05);
\draw (3,0) -- (5,0);
\draw (0,0) -- (120:3);
\draw (0,0) -- (-120:3);
\draw (0,0) -- (-3,0);
\draw[fill] (3,0) circle (0.05);
\draw[fill] (5,0) circle (0.05);

\node at (0.2,-0.3) {$y_{3}$};
\node at (3,-0.3) {$y_{2}$};
\node at (5,-0.3) {$y_{1}$};
\node at (8,-0.3) {$+\infty=y_{0}$};

\node at (98:2) {$\begin{pmatrix} 1 & 0 \\ 1 & 1 \end{pmatrix}$};
\node at (160:2) {$\begin{pmatrix} 0 & 1 \\ -1 & 0 \end{pmatrix}$};
\node at (-98:2) {$\begin{pmatrix} 1 & 0 \\ 1 & 1 \end{pmatrix}$};

\node at (4,0.6) {$\begin{pmatrix} 1 & 1 \\ 0 & 1 \end{pmatrix}$};

\draw[black,arrows={-Triangle[length=0.18cm,width=0.12cm]}]
(-120:1.5) --  ++(60:0.001);
\draw[black,arrows={-Triangle[length=0.18cm,width=0.12cm]}]
(120:1.3) --  ++(-60:0.001);
\draw[black,arrows={-Triangle[length=0.18cm,width=0.12cm]}]
(180:1.5) --  ++(0:0.001);

\draw[black,arrows={-Triangle[length=0.18cm,width=0.12cm]}]
(0:4) --  ++(0:0.001);

\end{tikzpicture}
\caption{Jump contours for the model RH problem for $\Psi$.}
\label{fig:modelRHcontours}
\end{figure}
For convenience, we define
\begin{align*}
y_{j} = x_{j}-x_{3}, \quad j=1,2,3.
\end{align*}
Using the Its-Izergin-Korepin-Slavnov method, it was shown in \cite[proof of Proposition 1]{ClaeysDoeraene} that the resolvent kernel $R_{r\vec x}(u;u)$ can be expressed in terms of the solution of a naturally associated RH problem.
For $u \in (rx_3,rx_2) \cup (rx_1,+\infty)$, we have
\begin{equation}\label{link between resolvant and RHP}
R_{\vec x}(u,u)=\frac{1}{2\pi i}\left(\Psi_+^{-1}\Psi_+'\right)_{21}(u-x_3;x_{3},\vec{y}),
\end{equation}
where $\vec y=(y_1, y_{2},y_3)$ and $\Psi(z;x_{3},\vec{y})$ is the solution of the following RH problem. 

\subsubsection*{RH problem for $\Psi$}
\begin{enumerate}[label={(\alph*)}]
\item[(a)] $\Psi : \C \setminus \Gamma \rightarrow \C^{2\times 2}$ is analytic, with
\begin{equation}\label{eq:defGamma}
\Gamma=e^{\pm\frac{2\pi i}{3}} (0,+\infty)\cup (-\infty,0] \cup [y_{2},y_{1}]
\end{equation}
and $\Gamma$ oriented as in Figure \ref{fig:modelRHcontours}.
\item[(b)] $\Psi(z)$ has continuous boundary values as $z\in\Gamma\backslash \{y_{1},y_{2},y_{3}\}$ is approached from the left ($+$ side) or from the right ($-$ side) and they are related by
\begin{equation*}
\begin{array}{ll}
\Psi_+(z) = \Psi_-(z) \begin{pmatrix} 1 & 0 \\ 1 & 1 \end{pmatrix} & \textrm{for } z \in e^{\pm\frac{2\pi i}{3}} (0,+\infty), \\
\Psi_+(z) = \Psi_-(z) \begin{pmatrix} 0 & 1 \\ -1 & 0 \end{pmatrix} & \textrm{for } z \in (-\infty,0), \\
\Psi_+(z) = \Psi_-(z) \begin{pmatrix} 1 & 1 \\ 0 & 1 \end{pmatrix} & \textrm{for } z \in (y_{2}, y_{1}).
\end{array}
\end{equation*}
\item[(c)] As $z \rightarrow \infty$, we have 
\begin{equation}
\label{eq:psiasympinf}
\Psi(z) = \left( I + \bigO( z^{-1})\right) z^{\frac{1}{4} \sigma_3} M^{-1} e^{-(\frac{2}{3}z^\frac{3}{2} + x_{3}z^{\frac{1}{2}}) \sigma_3},
\end{equation}
where $M = (I + i \sigma_1) / \sqrt{2}$, $\sigma_1 = \begin{pmatrix} 0 & 1 \\ 1 & 0 \end{pmatrix}$ and $\sigma_3 = \begin{pmatrix} 1 & 0 \\ 0 & -1 \end{pmatrix}$, and where principal branches of $z^\frac{3}{2}$ and $z^{\frac{1}{2}}$ are taken.
\item[(d)] $\Psi(z) = \Or(\log|z-y_j|)$ as $z \rightarrow y_j$, $j = 1,2,3$.
\end{enumerate}
We conclude from the definition of $F$ \eqref{generating} together with \eqref{F_r identity} and \eqref{link between resolvant and RHP} that
\begin{align}\label{final formula diff identity}
\partial_{r} \log F(r\vec{x})&=\sum_{j=1}^3\frac{(-1)^{j+1}x_j}{2\pi i}\lim_{z \to y_{j}}\left(\Psi_+^{-1}\Psi_+'\right)_{21}(rz;rx_{3},r\vec y),
\end{align}
where the limits as $z\to y_j$, $j=1,2,3,$ are taken such that $z \in (0,y_2) \cup (y_1,+\infty)$. The differential identity \eqref{final formula diff identity} expresses $\partial_{r} \log F(r\vec{x})$ in terms of $\Psi$, which is the solution to a RH problem. In Sections \ref{section:RH1}-\ref{section:smallnorm}, we employ the Deift/Zhou steepest descent method \cite{DeiftZhou} to obtain the asymptotics of $\Psi(z), \Psi'(z)$ as $r\to+\infty$ for $z$ in small neighborhoods of $y_1,y_2,y_3$.

\section{Steepest descent for $\Psi$: first steps}\label{section:RH1}
In this section, we perform the first steps of the steepest descent method. We obtain the so-called $g$-function in Subsection \ref{subsection: g-function}. This function is used to normalize the RH problem at $\infty$ via the $\Psi \to T$ transformation of Subsection \ref{Subsection: Phi to T}. Finally, we proceed with the opening of the lenses $T \mapsto S$ in Subsection \ref{Subsection: T to S}. 
\subsection{$g$-function}\label{subsection: g-function}
The first step of the analysis consists of normalizing the RH problem at $\infty$ by means of an appropriate $g$-function; see \cite{DIZ} for a similar construction. Let us define
\begin{align*}
\sqrt{\mathcal{R}(z)} = \sqrt{(z-y_{1})(z-y_{2})z}
\end{align*}
where the principal branch is taken for each square root. More precisely, $\sqrt{\mathcal{R}(z)}$ is analytic on $\mathbb{C}\setminus \big( (-\infty,0] \cup [y_{2},y_{1}] \big)$ and $\sqrt{\mathcal{R}(z)} \sim z^\frac{3}{2}$ as $z \to \infty$. In particular it satisfies the jumps
\begin{align}\label{jump for R 2cuts}
& \sqrt{\mathcal{R}(z)}_{+} + \sqrt{\mathcal{R}(z)}_{-} = 0 & & \mbox{for } z \in (-\infty,0)\cup(y_{2},y_{1}).
\end{align}
We define the derivative of the $g$-function by
\begin{align}\label{gprime}
g'(z) = \frac{p(z)}{\sqrt{\mathcal{R}(z)}},
\end{align}
where
\begin{align}\label{p}
    p(z):=-z^{2}+\frac{y_1+y_2-x_3}{2}z+\frac{x_3(y_1+y_2)}{4}+\frac{(y_1-y_2)^2}{8}-g_{1}
\end{align}
and $g_{1} \in \mathbb{R}$ is given by
\begin{equation}\label{def of g1 2cuts}
g_{1} = \left(\int_{0}^{y_{2}}\frac{ds}{\sqrt{\mathcal{R}(s)}}\right)^{-1}\left(\int_{0}^{y_{2}}\frac{-s^{2}+\frac{y_1+y_2-x_3}{2}s+\frac{x_3(y_1+y_2)}{4}+\frac{(y_1-y_2)^2}{8}}{\sqrt{\mathcal{R}(s)}}ds\right).
\end{equation}
By definition of $g_{1}$, we have
\begin{align}\label{g' integrate to 0 2cuts}
\int_{0}^{y_{2}} g'(s)ds = 0.
\end{align}
\begin{remark}\label{remark: x/y plane}
The functions $\mathcal{Q}$ and $q$ defined in \eqref{def of Q intro} and \eqref{def of p intro} have the relations with $\mathcal{R}$ and $p$
\begin{align*}
\mathcal{Q}(z) = \mathcal{R}(z-x_{3}) \quad \mbox{ and } \quad q(z) = p(z-x_{3}).
\end{align*}
\end{remark}
\begin{proposition}\label{p_prop}
The degree two polynomial $p$ has two simple real zeros $z_\pm$, with $z_-\in(0,y_2)$ and $z_+\in(y_1,\infty)$. Moreover, we have
\begin{align*}
p(y_{3})<0, \qquad p(y_{2}) > 0, \qquad p(y_{1}) > 0.
\end{align*}
\end{proposition}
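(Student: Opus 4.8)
The plan is to extract everything from the single normalization \eqref{g' integrate to 0 2cuts}, i.e.\ $\int_{0}^{y_2}\frac{p(s)}{\sqrt{\mathcal{R}(s)}}\,ds=0$, together with the explicit degree‑$2$ shape of $p$. Write $\sigma:=\frac{y_1+y_2-x_3}{2}$ for the coefficient of $z$ in $p$, so that $\sigma$ is the sum of the two zeros of $p$. On the open interval $(0,y_2)$ one has $\mathcal{R}(s)=s(y_2-s)(y_1-s)>0$, so $\sqrt{\mathcal{R}(s)}$ is real, nowhere zero, and therefore of one fixed sign there, with integrable square‑root singularities at $s=0$ and $s=y_2$; in particular $\int_0^{y_2}\frac{ds}{\sqrt{\mathcal{R}(s)}}\neq 0$, so $g_1$ is well defined. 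Since $p$ is a nonzero polynomial and the weight $1/\sqrt{\mathcal{R}(s)}$ has constant sign, the vanishing integral forces $p$ to take both signs on $(0,y_2)$ (a polynomial that were $\geq 0$, resp.\ $\leq 0$, throughout would integrate against the weight to a nonzero number). By the intermediate value theorem $p$ has a zero $z_-\in(0,y_2)$; as $p$ has leading coefficient $-1$ and takes positive values, it cannot equal $-(z-z_-)^2$, so it has two \emph{distinct} real zeros $z_\pm$ and, being a downward parabola, $p>0$ strictly between them and $p<0$ strictly outside the closed interval between them.

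The crux is to show $p(y_1)>0$; the rest then falls out. From \eqref{def of g1 2cuts}, $g_1$ is the $\frac{ds}{\sqrt{\mathcal{R}(s)}}$‑weighted average over $(0,y_2)$ of the polynomial $\tilde p(s):=p(s)+g_1=-s^2+\sigma s+(\text{const})$, hence for every $t$
\[
p(t)=\tilde p(t)-g_1=\frac{\int_{0}^{y_2}\bigl(\tilde p(t)-\tilde p(s)\bigr)\frac{ds}{\sqrt{\mathcal{R}(s)}}}{\int_{0}^{y_2}\frac{ds}{\sqrt{\mathcal{R}(s)}}}=\frac{\int_{0}^{y_2}(t-s)(\sigma-t-s)\frac{ds}{\sqrt{|\mathcal{R}(s)|}}}{\int_{0}^{y_2}\frac{ds}{\sqrt{|\mathcal{R}(s)|}}},
\]
where in the last step the common sign of $1/\sqrt{\mathcal{R}}$ cancels and the denominator is now positive. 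Take $t=y_1$, put $\alpha:=\sigma-y_1$, and set $h(s):=\frac{y_1-s}{\sqrt{|\mathcal{R}(s)|}}=\sqrt{\tfrac{y_1-s}{s(y_2-s)}}>0$ on $(0,y_2)$; then the sign of $p(y_1)$ equals that of $\int_0^{y_2}(\alpha-s)h(s)\,ds$. Two elementary facts finish the argument. First, $\alpha-\tfrac{y_2}{2}=\sigma-y_1-\tfrac{y_2}{2}=-\tfrac{x_1}{2}>0$ (since $x_1<0$ and $y_1+x_3=x_1$), so $\alpha>\tfrac{y_2}{2}$. Second, the $h$‑weighted mean of $s$ on $(0,y_2)$ is $<\tfrac{y_2}{2}$: folding the interval at $y_2/2$ gives $\int_0^{y_2}(2s-y_2)h(s)\,ds=\int_0^{y_2/2}(2s-y_2)\bigl(h(s)-h(y_2-s)\bigr)\,ds<0$, the integrand being negative because $h(s)^2/h(y_2-s)^2=(y_1-s)/(y_1-y_2+s)>1$ for $s<y_2/2$. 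Hence $\int_0^{y_2}(\alpha-s)h(s)\,ds=(\alpha-\text{mean})\int_0^{y_2}h>0$, so $p(y_1)>0$.

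It remains to assemble the conclusion. Since $p$ is a downward parabola with $p(y_1)>0$, the point $y_1$ lies strictly between $z_-$ and $z_+$; because $z_-\in(0,y_2)\subset(0,y_1)$, the zero below $y_1$ must be $z_-$, and therefore $z_+>y_1$. Consequently $z_-$ is the unique zero of $p$ in $(0,y_2)$, $z_+\in(y_1,+\infty)$, and both are simple, which is the first assertion of the Proposition. Finally $z_-<y_2<y_1<z_+$ gives $[y_2,y_1]\subset(z_-,z_+)$, so $p>0$ on $[y_2,y_1]$ and in particular $p(y_2)>0$ and $p(y_1)>0$; and $y_3=0<z_-$ places $y_3$ outside $[z_-,z_+]$, so $p(y_3)<0$.

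The main obstacle is Step 2. The weak information $z_-\in(0,y_2)$ does \emph{not} by itself force $z_+>y_1$ — that would require $-x_3\ge y_1+y_2$, i.e.\ $x_3\ge x_1+x_2$, which can fail — so one genuinely has to use the vanishing of $\int_0^{y_2}g'$ quantitatively rather than just qualitatively. The useful device is rewriting $p(t)$ as a weighted average of $(t-s)(\sigma-t-s)$; after this, the sign of $p(y_1)$ reduces to comparing the constant $\alpha=\sigma-y_1$ with a weighted mean of $s$, and the whole thing becomes transparent via the symmetrization $h(s)>h(y_2-s)$ on $(0,y_2/2)$ and the clean identity $\sigma-y_1-\tfrac{y_2}{2}=-\tfrac{x_1}{2}>0$. (The same weighted‑average representation also yields $p(y_3)<0$ directly, since for $t=0$ the integrand $(-s)(\sigma-s)$ has constant sign on $(0,y_2)$; but once $p(y_1)>0$ is known, the parabola shape is the quickest route.)
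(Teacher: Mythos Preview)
Your proof is correct and takes a genuinely different route from the paper. The paper evaluates the three integrals $\int_0^{y_2}s^k/\sqrt{\mathcal{R}(s)}\,ds$, $k=0,1,2$, explicitly in terms of the complete elliptic integrals $\mathbf{K}(k_\star)$ and $\mathbf{E}(k_\star)$ with $k_\star=\sqrt{y_2/y_1}$, obtains a closed formula for $p(y_1)$ involving $\mathbf{E}/\mathbf{K}$, and then invokes the known inequality $\mathbf{E}(k)/\mathbf{K}(k)>\sqrt{1-k^2}$, weakened by an ad hoc algebraic manipulation, to conclude $p(y_1)>0$. Your argument bypasses special functions entirely: the weighted-average representation $p(t)=\bigl(\int(t-s)(\sigma-t-s)\,h\bigr)/\bigl(\int h\bigr)$ with $h(s)=\sqrt{(y_1-s)/[s(y_2-s)]}$ reduces $p(y_1)>0$ to showing that $\alpha:=\sigma-y_1$ exceeds the $h$-weighted mean of $s$ over $(0,y_2)$, and this follows from the symmetrization $h(s)>h(y_2-s)$ on $(0,y_2/2)$ together with the clean identity $\alpha-\tfrac{y_2}{2}=-\tfrac{x_1}{2}>0$. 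Your approach is more elementary and self-contained (no external inequality from tables), and it makes transparent \emph{why} the hypothesis $x_1<0$ enters; the paper's approach, on the other hand, yields the explicit expression \eqref{alpha as complete integral} for the constant term of $p$, which is reused later (e.g.\ in Lemma~\ref{lemma:g1 asymptotics} and Proposition~\ref{prop: c2 is -1/2}).
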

\begin{proof}
The polynomial $p$ has real coefficients, thus $p(z) \in \mathbb{R}$ for $z \in \mathbb{R}$. Then we immediately conclude from \eqref{g' integrate to 0 2cuts} that $p(z)$ must have two real simple zeros and that at least one of them lies on $(0,y_2)$, because $\sqrt{\mathcal{R}(z)}<0$ for all $z\in(0,y_2)$. However it is not straightforward to prove that $p$ has one zero in $(y_{1},+\infty)$, or equivalently that $p(y_{1}) > 0$ (since $p(z) \to -\infty$ as $z \to + \infty$). Note that $p(z)$ can be written as
\begin{align*}
p(z)=-z^2+\frac{z}{2}(y_1+y_2-x_3)+\alpha, ~ \mbox{ where } ~ \alpha:=\left(\int_{0}^{y_{2}}\frac{ds}{\sqrt{\mathcal{R}(s)}}\right)^{-1}\left(\int_{0}^{y_{2}}\frac{s^2-\frac{s}{2}(y_1+y_2-x_{3})}{\sqrt{\mathcal{R}(s)}}ds\right).
\end{align*}
The constant $\alpha$ can be written explicitly in terms of complete elliptic integrals of the first ($\mathbf{K}$) and second ($\mathbf{E}$) kind, which are defined as \cite[eq 8.112]{GRtable}
\begin{align*}
\mathbf{K}(k) = \int_{0}^{1} \frac{dx}{\sqrt{(1-x^{2})(1-k^{2}x^{2})}}, \qquad \mathbf{E}(k) = \int_{0}^{1} \frac{\sqrt{1-k^{2}x^{2}}}{\sqrt{1-x^{2}}}dx.
\end{align*}
Let us define $k_{\star} := \sqrt{\frac{y_{2}}{y_{1}}}$. Using \cite[eq 3.131.4, 3.141.15 and 3.141.27]{GRtable}, we obtain
\begin{align*}
\int_{0}^{y_{2}}\frac{ds}{\sqrt{\mathcal{R}(s)}}&=\frac{-2}{\sqrt{y_1}}\mathbf{K} (k_{\star} ), ~~~ \int_{0}^{y_{2}}\frac{s~ds}{\sqrt{\mathcal{R}(s)}}=-2\sqrt{y_1}\big( \mathbf{K} (k_{\star} )-\mathbf{E} (k_{\star} )\big), \\
\int_{0}^{y_{2}}\frac{s^2~ds}{\sqrt{\mathcal{R}(s)}}&=\frac{2}{3}\sqrt{y_1}\big( 2(y_1+y_2)\mathbf{E} (k_{\star} )-(2y_1+y_2)\mathbf{K} (k_{\star} )\big),
\end{align*}
from which we deduce 
\begin{align}\label{alpha as complete integral}
\alpha=\frac{y_1}{2}\left[\frac{y_1-y_2}{3}+x_3-\frac{\mathbf{E}(k_{\star})}{\mathbf{K} (k_{\star})}\left(\frac{y_1+y_2}{3}+x_3\right)\right].
\end{align}
From \cite[eq 19.9.8]{NIST}, we have the lower bound
\begin{align*}
\frac{\mathbf{E} (k_{\star})}{\mathbf{K} (k_{\star} )}>\sqrt{1-k_{\star}^{2}} = \sqrt{1-\frac{y_{2}}{y_{1}}} =\sqrt{\frac{x_1-x_2}{x_1-x_3}}.
\end{align*}
We weaken this bound for our purpose.  Notice
\begin{align*}
4(x_1-x_2)(x_1-x_3)&=(x_1+x_2+x_3)^2+3x_1(x_1-2x_2-2x_3)-(x_2-x_3)^2<(x_1+x_2+x_3)^2,
\end{align*}
because $x_1-2x_2-2x_3>0$ and $x_{1}<0$, so we have 
\begin{align*}
\frac{x_1-x_2}{x_1-x_3}=\frac{4(x_1-x_2)^2}{4(x_1-x_2)(x_1-x_3)}>\frac{4(x_1-x_2)^2}{(x_1+x_2+x_3)^2}.
\end{align*}
The new bound is
\begin{align*}
\frac{\mathbf{E} (k_{\star} )}{\mathbf{K} (k_{\star} )}>\frac{2(x_2-x_1)}{x_1+x_2+x_3},
\end{align*}
from which we have
\begin{align*}
p(y_1)&=\frac{y_1}{2}(y_2-y_1-x_3)+\alpha=\frac{y_1}{6}\left[2(x_2-x_1)-\frac{\mathbf{E} (k_{\star} )}{\mathbf{K} (k_{\star} )}\left(x_1+x_2+x_3\right)\right]>0,
\end{align*}
as desired.
\end{proof}
We define the $g$-function by
\begin{align}\label{g function def 2 cut}
g(z) = \int_{y_{1}}^{z} g'(s)ds,
\end{align}
where the path of integration does not cross $(-\infty,y_{1}]$. 
\begin{lemma}\label{g_lemma}
\begin{enumerate}
Let $x_{3}<x_{2}<x_{1}<0$.
\item The $g$-function is analytic in $\mathbb{C}\setminus(-\infty,y_1]$ and satisfies $g(z) = \overline{g(\overline{z})}$.
\item The $g$-function satisfies the jump conditions 
\begin{align}
& g_{+}(z) + g_{-}(z) = 0, & & z \in (-\infty,0)\cup(y_{2},y_{1}), \label{jump1 g 2cuts} \\
& g_{+}(z) - g_{-}(z) = i \Omega, & & z \in (0,y_{2}), \label{jump3 g 2cuts}
\end{align}
where $\Omega = 2i\int_{y_{2}}^{y_{1}} g_{+}'(s)ds = -2 i g_{+}(y_{2}) \in \mathbb{R}_+.$
\item As $z\to\infty$, $z \notin (-\infty,0)$, we have
\begin{equation}\label{asymp g 2cuts}
g(z) = -\frac{2}{3}z^\frac{3}{2} - x_{3} z^{\frac{1}{2}} +2g_{1}z^{-\frac{1}{2}} + \bigO(z^{-\frac{3}{2}}).
\end{equation}
\item Let $N_{0}$ be an open neighborhood of $(-\infty,y_{1}]$, such that 
\begin{align}\label{full sector at -inf}
\Big\{z \in \mathbb{C} : \arg z \in \Big(-\pi,-\pi+\frac{\pi}{3}+\epsilon\Big)\cup \Big(\pi-\frac{\pi}{3}-\epsilon,\pi\Big) \mbox{ and } |z| \geq M \Big\} \subset N_{0}
\end{align}
for some $\epsilon > 0$ and $M>0$.
We have
\begin{equation}\label{real g}
\Re\left[g(z)\right]\geq0, \qquad \mbox{for all }z\in N_0
\end{equation}
where equality holds only when $z\in(-\infty,0]\cup[y_2,y_1]$.
\end{enumerate}
\end{lemma}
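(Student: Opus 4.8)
The plan is to establish the four assertions of Lemma~\ref{g_lemma} in order, as each builds on the previous ones. For item~1, analyticity of $g$ in $\mathbb{C}\setminus(-\infty,y_1]$ follows because $g'(z) = p(z)/\sqrt{\mathcal{R}(z)}$ is analytic off $(-\infty,0]\cup[y_2,y_1]$ and, by \eqref{g' integrate to 0 2cuts}, has vanishing period around $[0,y_2]$, so the integral \eqref{g function def 2 cut} is path-independent in $\mathbb{C}\setminus(-\infty,y_1]$; the Schwarz symmetry $g(z)=\overline{g(\bar z)}$ comes from $p$ having real coefficients and $\sqrt{\mathcal{R}(z)}$ satisfying the same reflection, together with the choice $y_1\in\mathbb{R}$ as base point.

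For item~2, I would use the jump relation \eqref{jump for R 2cuts} for $\sqrt{\mathcal{R}}$. On $(-\infty,0)\cup(y_2,y_1)$ we have $g_\pm'(z)=p(z)/\sqrt{\mathcal{R}(z)}_\pm$ with $\sqrt{\mathcal{R}}_+ = -\sqrt{\mathcal{R}}_-$, hence $g_+'(z)+g_-'(z)=0$; integrating from the base point $y_1$ (where $g_+(y_1)=g_-(y_1)=0$, since $y_1$ is an endpoint) along each of the bands gives $g_+(z)+g_-(z)=0$ on $(y_2,y_1)$, and then the same on $(-\infty,0)$ after checking the constant is zero, which uses \eqref{g' integrate to 0 2cuts} to transport the relation across $(0,y_2)$. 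On the gap $(0,y_2)$, $\sqrt{\mathcal{R}}$ has no jump, so $g_+'-g_-'=0$ there and $g_+-g_-$ is a constant; evaluating at $y_2$ gives $g_+(y_2)-g_-(y_2)$, and using \eqref{jump1 g 2cuts} on $(y_2,y_1)$ near $y_2$ one gets $g_-(y_2)=-g_+(y_2)$, so the constant is $2g_+(y_2)=:i\Omega$. The identity $i\Omega = 2i\int_{y_2}^{y_1}g_+'(s)\,ds$ follows by writing $g_+(y_2)=-\int_{y_2}^{y_1}g_+'(s)\,ds$ (integrating $g_+$ from $y_1$), and one must then argue $\Omega\in\mathbb{R}_+$: reality follows from Remark~\ref{remark: x/y plane} identifying $\Omega$ with $\int_{x_2}^{x_1}2q(x)/|\sqrt{\mathcal{Q}(x)}|\,dx$ (using that $\sqrt{\mathcal{R}}_+$ on $(y_2,y_1)$ is purely imaginary with a definite sign), and positivity follows from Proposition~\ref{p_prop}, which gives $p(x)=q(x+x_3)>0$ on $(y_2,y_1)$.

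For item~3, I expand $g'(z)=p(z)/\sqrt{\mathcal{R}(z)}$ at $z=\infty$: since $\sqrt{\mathcal{R}(z)}=z^{3/2}(1-y_1/z)^{1/2}(1-y_2/z)^{1/2}(1)^{1/2}$ expanded in powers of $z^{-1}$, and $p(z)=-z^2+\tfrac{y_1+y_2-x_3}{2}z+\cdots$, the quotient is $-z^{1/2}-x_3 z^{-1/2}+O(z^{-3/2})$ term by term after using the explicit formula \eqref{p} for $p$ (the $z^{-1/2}$ coefficient must work out to exactly $-x_3$, which is a short computation with the coefficients of $p$ and the binomial expansion of $\sqrt{\mathcal{R}}$). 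Integrating from $y_1$ and fixing the constant of integration — note \eqref{asymp g 2cuts} asserts no constant term — requires checking that $g(y_1)=0$ is compatible with the absence of an $O(1)$ term, i.e. that $\int_{y_1}^{\infty}(g'(s)+s^{1/2}+x_3 s^{-1/2})\,ds$ combines with the antiderivative boundary terms to cancel; the coefficient of $z^{-1/2}$ in \eqref{asymp g 2cuts} being $2g_1$ is forced by the definition of $p$ via $g_1$.

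The main obstacle is item~4, the sign of $\Re g$ on the neighborhood $N_0$ of $(-\infty,y_1]$. The standard approach: on each band and gap of the real line one reads off $\Re g$ directly from item~2 — $\Re g=0$ on $(-\infty,0]\cup[y_2,y_1]$ by \eqref{jump1 g 2cuts}, and on $(0,y_2)$ one has $g=\tfrac{i\Omega}{2}+(\text{real})$ with the real part being $\Re\int_{y_2}^{x}g'$, which I must show is $\geq 0$ (strictly positive in the open interval) using the sign of $p/\sqrt{\mathcal{R}}$ there. Then I would show $\partial_n \Re g$ has the correct sign as one leaves the real axis — equivalently that $\Re g$ has no interior critical points in $N_0\setminus\mathbb{R}$ — by analyzing the zeros of $g'$: $g'$ vanishes only at the two zeros $z_\pm$ of $p$, both real and located in $(0,y_2)$ and $(y_1,\infty)$ by Proposition~\ref{p_prop}, hence outside the closure of the bands, so on $N_0$ (chosen as a thin enough neighborhood, using \eqref{full sector at -inf} for the behavior near $\infty$) the level set $\{\Re g=0\}$ consists exactly of $(-\infty,0]\cup[y_2,y_1]$ and $\Re g>0$ elsewhere. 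Near $z=\infty$ in the sector \eqref{full sector at -inf} one uses the leading term $\Re g\sim \Re(-\tfrac23 z^{3/2})$ from item~3, which is positive in the relevant range of $\arg z$ (the sector where $\arg(z^{3/2})$ avoids the imaginary axis appropriately); the inclusion \eqref{full sector at -inf} is precisely what guarantees $N_0$ can be taken to contain a full neighborhood of $-\infty$ in which this estimate controls the sign. Tying the local sign analysis near the bands to the global behavior at $\infty$ — i.e. verifying the trajectory structure of the quadratic differential $g'(z)^2\,dz^2$ closes up correctly so that $N_0$ can indeed be chosen with the stated properties — is the delicate part and will require a careful case analysis of the signs of $p$ and $\mathcal{R}$ in the several real subintervals and a continuity/connectedness argument off the axis.
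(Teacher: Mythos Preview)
Your treatment of items~1 and~2 follows essentially the same route as the paper (jump of $\sqrt{\mathcal R}$, contour deformation around $[y_2,y_1]$, positivity of $\Omega$ via Proposition~\ref{p_prop}), so nothing to add there.

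For item~3 you correctly isolate the issue that the constant of integration $g_0$ must be shown to vanish, but your proposed route---evaluating the regularized improper integral from $y_1$ to $\infty$---is harder than necessary and you do not indicate how it would be carried out. The paper instead exploits item~2: once the expansion $g(z)=-\tfrac23 z^{3/2}-x_3 z^{1/2}+g_0+2g_1 z^{-1/2}+\bigO(z^{-3/2})$ is in hand, evaluate $g_+(z)+g_-(z)$ along $z\to-\infty$ on the real axis to get $2g_0+\bigO(|z|^{-1/2})$, and compare with \eqref{jump1 g 2cuts} to conclude $g_0=0$. (A minor slip: in the expansion of $g'(z)$ the coefficient of $z^{-1/2}$ is $-x_3/2$, not $-x_3$; the missing factor of~$2$ is restored upon integrating.)

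For item~4 your proposed route via the critical-point structure of $\Re g$ (equivalently, the trajectory structure of $g'(z)^2\,dz^2$) is valid in principle, and you are right that the only zeros of $g'$ are $z_\pm$, both real; but as you acknowledge, turning this into a conclusion about the sign of $\Re g$ on a full neighborhood $N_0$ still requires a connectedness/level-set argument that you leave unfinished. The paper avoids this entirely by a local Cauchy--Riemann computation: on each band, Proposition~\ref{p_prop} gives $\Im g_+'(x)<0$, so $\Im g_+(x)$ is strictly decreasing; Cauchy--Riemann then gives $\partial_y\Re g(x+iy)\big|_{y=0^+}=-\partial_x\Im g_+(x)>0$, hence $\Re g(x+i\epsilon)>0$ for small $\epsilon>0$. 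Reflection symmetry from item~1 handles the lower half-plane, a direct sign check handles $(0,y_2)$, local square-root expansions $g(z)\sim c(z-y_j)^{1/2}$ handle neighborhoods of the endpoints $y_j$, and the asymptotics of item~3 handle the sector~\eqref{full sector at -inf} at $-\infty$. This yields the neighborhood $N_0$ directly without any global trajectory analysis.
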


\begin{proof}

\begin{enumerate}
\item Analyticity follows from the definition \eqref{g function def 2 cut}. It is straightforward to verify from \eqref{gprime} that $g'(z) = \overline{g'(\overline{z})}$, from which we deduce
\begin{align*}
\overline{g(\overline{z})} = \overline{\int_{y_{1}}^{\overline{z}} g'(s)ds} = \overline{\int_{0}^{1} g'(y_{1}+t(\overline{z}-y_{1}))(\overline{z}-y_{1})dt} = \int_{0}^{1} g'(y_{1}+t(z-y_{1}))(z-y_{1})dt = g(z).
\end{align*}
\item The jumps \eqref{jump1 g 2cuts} follow from \eqref{jump for R 2cuts} and \eqref{g' integrate to 0 2cuts}. Let $\gamma$ be a closed curve with positive orientation surrounding $[y_{2},y_{1}]$ and not intersecting $(-\infty,0]$. For $z \in (0,y_{2})$, we use \eqref{jump1 g 2cuts} to write
\begin{align*}
g_{+}(z) - g_{-}(z) = \int_{\gamma} g'(s)ds = -2 \int_{y_{2}}^{y_{1}}g_{+}'(s)ds,
\end{align*}
and this is \eqref{jump3 g 2cuts}. It follows from Proposition \ref{p_prop} that
\begin{align*}
\Omega = 2i \int_{y_{2}}^{y_{1}} g_{+}'(s)ds = 2 \int_{y_{2}}^{y_{1}} \frac{p(s)ds}{\sqrt{s} \sqrt{s-y_{2}}\sqrt{y_{1}-s}} > 0.
\end{align*}
\item From \eqref{gprime}, a computation shows that
\begin{align*}
g'(z) = -\sqrt{z} - \frac{x_{3}}{2}z^{-\frac{1}{2}} - g_{1}z^{-\frac{3}{2}} + \bigO(z^{-\frac{5}{2}}), & & \mbox{as } z \to \infty,
\end{align*}
and then after integration we have
\begin{align}\label{lol1}
g(z) = -\frac{2}{3}z^\frac{3}{2} - x_{3} z^{\frac{1}{2}} + g_{0} +2g_{1}z^{-\frac{1}{2}} + \bigO(z^{-\frac{3}{2}}), & & \mbox{as } z \to \infty,
\end{align}
for a certain $g_{0} \in \mathbb{C}$. We deduce from \eqref{lol1} that
\begin{align*}
g_{+}(z)+g_{-}(z) = 2g_{0} + \bigO(z^{-\frac{1}{2}}), & & \mbox{as } z \to -\infty, \; z \in \mathbb{R}^{-}.
\end{align*}
Comparing the above with \eqref{jump1 g 2cuts}, we conclude that $g_{0} = 0$.
\item It follows from part 2 that $\Re[g(z)]$ is single valued on $\mathbb{C}$ and furthermore $\Re[g(z)]\equiv 0$ for $z\in(-\infty,0)\cup(y_2,y_1)$. Due to Proposition \ref{p_prop}, we have
\begin{align*}
\Im[g'_+(x)]=\Im\bigg[\frac{p(x)}{\sqrt{\mathcal{R}(x)}_{\smash{+}}}\bigg]<0, \qquad \mbox{for } x\in(-\infty,0)\cup (y_{2},y_{1}).
\end{align*}
So $\Im[g_+(x)]$ decreases as $x\in(-\infty,0)\cup(y_{2},y_{1})$ increases, and it follows from the Cauchy-Riemann equations that $\Re[g(x+i\epsilon)]$ increases (and is therefore positive) for sufficiently small $\epsilon > 0$. We deduce from part 1 that $\Re[g(x-i\epsilon)]>0$ for sufficiently small $\epsilon>0$. Another immediate consequence of Proposition \ref{p_prop} is that $\Re[g(z)]>0$ for all $z \in (0,y_{2})$. Furthermore, from \eqref{g function def 2 cut} we have
\begin{align*}
\Re[g(z)] \sim c(z-y_{1})^{\frac{1}{2}},  \qquad c>0,
\end{align*}
(the exact value of $c$ is unimportant here) as $z\to y_1$ so there is an open disk $D_{y_{1}}$ around $y_{1}$ such that $\Re[g(z)]\geq 0$ for all $z \in D_{y_{1}}$ with equality $\Re[g(z)]=0$ if and only if $z \in D_{y_{1}} \cap (y_{2},y_{1})$. We conclude similarly that there also exists open disks $D_{y_{2}}$ and $D_{y_{3}}$ around $y_{2}$ and $y_{3}$, respectively, such that
\begin{align*}
\Re[g(z)]\geq 0, \qquad \mbox{for all } z \in D_{y_{2}}\cup D_{y_{3}}
\end{align*}
with equality only when $z\in(-\infty,0]\cup[y_2,y_1]$. The fact that we can choose $\epsilon > 0$ and $M>0$ such that \eqref{full sector at -inf} holds follows directly from \eqref{asymp g 2cuts}.
\end{enumerate}
\end{proof}
In Lemma \ref{lemma:g1 asymptotics} below, we obtain asymptotics for $g_{1}$ as $y_2\downarrow0$ and as $y_2\uparrow y_1$. This is useful to compare our results with \cite{DIK}, see Remark \ref{remark:consistency check}. Indeed, we deduce \eqref{asymp for q0 when x2 to x3}-\eqref{asymp for q0 when x2 to x1} from \eqref{g1 as y2 tends to 0}-\eqref{g1 as y2 tends to y1} by noting that
\begin{align}
q_{0} = g_{1}+\frac{2x_{1}x_{2}+2x_{1}x_{3}+2x_{2}x_{3}-x_{1}^{2}-x_{2}^{2}}{8}.
\end{align}
\begin{lemma}\label{lemma:g1 asymptotics}
Fix $y_{1}$ and $x_{3}$. We have
\begin{align}
& g_{1} =\frac{y_1}{4}\left(x_3+\frac{y_1}{2}\right)+\bigO(y_2^2), & & \mbox{as } y_2 \downarrow 0 \label{g1 as y2 tends to 0}\\
& g_1=-\frac{y_1(2y_1+3x_3)}{3\log(y_1-y_2)}+\bigO\left(\frac{1}{\log^{2}(y_1-y_2)}\right), & & \mbox{as } y_{2}\uparrow y_{1}. \label{g1 as y2 tends to y1}
\end{align}
\end{lemma}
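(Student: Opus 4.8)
The plan is to reduce both limits to a closed-form expression for $g_1$ that is essentially already contained in the proof of Proposition \ref{p_prop}. Comparing the two representations of the degree-two polynomial $p$ — the defining formula \eqref{p} and the alternative form $p(z) = -z^2 + \tfrac{z}{2}(y_1+y_2-x_3) + \alpha$ used there — and equating constant terms gives $g_1 = \tfrac{x_3(y_1+y_2)}{4} + \tfrac{(y_1-y_2)^2}{8} - \alpha$, so that by \eqref{alpha as complete integral}
\[
g_1 = \frac{x_3(y_1+y_2)}{4} + \frac{(y_1-y_2)^2}{8} - \frac{y_1}{2}\left[\frac{y_1-y_2}{3} + x_3 - \frac{\mathbf{E}(k_\star)}{\mathbf{K}(k_\star)}\left(\frac{y_1+y_2}{3} + x_3\right)\right], \qquad k_\star = \sqrt{y_2/y_1}.
\]
Since $y_1$ and $x_3$ are fixed, everything now hinges on the behaviour of the ratio $\mathbf{E}(k_\star)/\mathbf{K}(k_\star)$ as $k_\star \to 0$ (the case $y_2\downarrow 0$) and as $k_\star \to 1$ (the case $y_2\uparrow y_1$).

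For $y_2\downarrow 0$, I would insert the classical small-modulus expansions $\mathbf{K}(k) = \tfrac{\pi}{2}(1 + \tfrac14 k^2 + \bigO(k^4))$ and $\mathbf{E}(k) = \tfrac{\pi}{2}(1 - \tfrac14 k^2 + \bigO(k^4))$, which give $\mathbf{E}(k_\star)/\mathbf{K}(k_\star) = 1 - \tfrac{y_2}{2y_1} + \bigO(y_2^2)$, into the displayed formula and expand in powers of $y_2$. The $y_2$-independent part collapses to $\tfrac{x_3 y_1}{4} + \tfrac{y_1^2}{8} = \tfrac{y_1}{4}(x_3 + \tfrac{y_1}{2})$, and — the one mildly surprising point — a short computation shows the coefficient of $y_2^1$ vanishes identically, leaving the remainder $\bigO(y_2^2)$; this is \eqref{g1 as y2 tends to 0}. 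Note that one genuinely needs the $k^2$-term in the expansion of the ratio: an order-$1$ expansion would not exhibit the cancellation.

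For $y_2\uparrow y_1$, let $k_\star' = \sqrt{1-k_\star^2} = \sqrt{(y_1-y_2)/y_1}\to 0$ be the complementary modulus, and use the standard behaviour of the complete elliptic integrals near $k=1$, namely $\mathbf{K}(k) = \log(4/k') + \bigO(k'^2\log(1/k'))$ and $\mathbf{E}(k)\to 1$. With $k'=k_\star'$ this yields $\mathbf{K}(k_\star) = -\tfrac12\log(y_1-y_2) + \bigO(1)$ and hence
\[
\frac{\mathbf{E}(k_\star)}{\mathbf{K}(k_\star)} = \frac{-2}{\log(y_1-y_2)} + \bigO\!\left(\frac{1}{\log^2(y_1-y_2)}\right).
\]
Substituting into the formula for $g_1$: the $x_3$-proportional constants $\tfrac{x_3 y_1}{2}$ (from the first term evaluated at $y_2=y_1$) and $-\tfrac{y_1 x_3}{2}$ cancel; every purely algebraic remainder is $\bigO(y_1-y_2)$, hence dominated by $\bigO(1/\log^2(y_1-y_2))$; and the surviving term is $\tfrac{y_1}{2}\bigl(\tfrac{2y_1}{3}+x_3\bigr)\bigl(-\tfrac{2}{\log(y_1-y_2)}\bigr) = -\tfrac{y_1(2y_1+3x_3)}{3\log(y_1-y_2)}$, which is \eqref{g1 as y2 tends to y1}.

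The only genuine care needed is bookkeeping — carrying the $k\to 0$ expansion of $\mathbf{E}/\mathbf{K}$ far enough to see the linear-in-$y_2$ term cancel, and checking in the $k\to 1$ case that the algebraic error terms decay faster than the $\bigO(1/\log^2)$ error coming from the elliptic-integral ratio — both of which are routine. Finally, the consequences \eqref{asymp for q0 when x2 to x3}--\eqref{asymp for q0 when x2 to x1} for $q_0$ follow immediately by combining \eqref{g1 as y2 tends to 0}--\eqref{g1 as y2 tends to y1} with the relation $q_0 = g_1 + \tfrac18(2x_1x_2+2x_1x_3+2x_2x_3-x_1^2-x_2^2)$ and $y_j = x_j-x_3$.
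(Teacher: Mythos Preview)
Your proof is correct and, for the limit $y_2\uparrow y_1$, essentially identical to the paper's: both use the closed-form expression for $g_1$ in terms of $\mathbf{E}(k_\star)/\mathbf{K}(k_\star)$ (the paper writes it as $g_1 = \tfrac{y_1}{2}\tfrac{\mathbf{E}(k_\star)}{\mathbf{K}(k_\star)}\big(\tfrac{y_1+y_2}{3}+x_3\big)-\tfrac{x_3}{4}(y_1-y_2)-\tfrac{1}{24}(y_1-y_2)(y_1+3y_2)$, algebraically the same as yours) together with the $k\to 1$ asymptotics of the complete elliptic integrals.

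For the limit $y_2\downarrow 0$ you take a genuinely different route. The paper does not use the elliptic-integral formula at all here; instead it expands the three integrals $\int_0^{y_2} s^m/\sqrt{\mathcal{R}(s)}\,ds$, $m=0,1,2$, directly by writing $(y_1-s)^{-1/2} = y_1^{-1/2}(1+\tfrac{s}{2y_1}+\cdots)$ and evaluating the resulting beta-type integrals, then substitutes into the definition \eqref{def of g1 2cuts}. Your approach is more unified---the same elliptic-integral identity covers both limits---and avoids three separate integral expansions; the only extra work is the need to carry the $\mathbf{E}/\mathbf{K}$ expansion through the $k^2$ term to see the linear-in-$y_2$ cancellation, which you correctly flag. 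The paper's approach is slightly more elementary (no appeal to the small-modulus series for $\mathbf{K},\mathbf{E}$) but requires more individual computations. Both are fine.
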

\begin{proof}
As $y_2\downarrow0$, we have
\begin{align*}
\int_0^{y_2}\frac{ds}{\sqrt{\mathcal{R}(s)}}&=\int_0^{y_2}\frac{-ds}{\sqrt{s(y_2-s)}}\frac{1}{\sqrt{y_1}}\left(1+ \frac{s}{2y_{1}}+\frac{3s^{2}}{8y_{1}^{2}} + \bigO\Big( \frac{s^{3}}{y_{1}^{3}} \Big) \right) \\
&=\frac{-\pi}{\sqrt{y_1}}\left(1+\frac{y_2}{4y_1}+\frac{9y_2^2}{64y_1^2}+\bigO(y_2^3)\right).
\end{align*}
Similar computations show that
\begin{align*}
\int_0^{y_2}\frac{s~ds}{\sqrt{\mathcal{R}(s)}}&=\frac{-\pi y_2}{2\sqrt{y_1}}\left(1+\frac{3y_2}{8y_1}+\bigO(y_2^2)\right), ~~~ \int_0^{y_2}\frac{s^2~ds}{\sqrt{\mathcal{R}(s)}}=\frac{-3\pi y_2^2}{8\sqrt{y_1}}+\bigO(y_2^3), \quad \mbox{as } y_{2} \downarrow 0,
\end{align*}
from which we obtain \eqref{g1 as y2 tends to 0}. Now we turn to the proof of \eqref{g1 as y2 tends to y1}. Using \eqref{alpha as complete integral}, we express the constant $g_1$ explicitly in terms of complete elliptic integrals:
\begin{align*}
g_1&=\frac{y_1}{2}\frac{\mathbf{E} (k_{\star})}{\mathbf{K} (k_{\star})}\left(\frac{y_1+y_2}{3}+x_3\right)-\frac{x_3}{4}(y_1-y_2)-\frac{1}{24}(y_1-y_2)(y_{1}+3y_{2}).
\end{align*}
We then use \cite[eq 19.12.1 and 19.12.2]{NIST} for the asymptotics as $y_2\uparrow y_1$ to obtain \eqref{g1 as y2 tends to y1}.
\end{proof}

\subsection{Rescaling of the RH problem}\label{Subsection: Phi to T}
Define the function $T(z)=T(z;x_3,\vec{y})$ as follows:
\begin{equation}\label{def of T}
T(z) = \begin{pmatrix}
r^{-\frac{1}{4}} & 2 i r^{\frac{7}{4}}g_{1} \\
0 & r^{\frac{1}{4}}
\end{pmatrix} \Psi(rz;r x_3,r\vec{y})e^{-r^\frac{3}{2}g(z)\sigma_{3}}, \qquad z \in \mathbb{C}\setminus \Gamma.
\end{equation}
The asymptotics \eqref{eq:psiasympinf} of $\Psi$ then imply after a small calculation that $T$ behaves as
\begin{equation}
\label{eq:Tasympinf}
T(z) = \left( I + \Or\left(z^{-1}\right) \right) z^{\frac{1}{4} \sigma_3} M^{-1},
\end{equation}
as $z \to\infty$, $z \notin (-\infty,0)$, where the principal branches of the roots are chosen.
Using \eqref{jump1 g 2cuts}, the jumps $T_{-}(z)^{-1}T_{+}(z)$ for $z \in (y_{2},y_{1})$ can be factorized as
\begin{align}\label{factorization jumps}
\begin{pmatrix}
e^{-2 r^\frac{3}{2}g_{+}(z)} & 1 \\
0 & e^{-2 r^\frac{3}{2}g_{-}(z)}
\end{pmatrix} = \begin{pmatrix}
1 & 0 \\
e^{-2 r^\frac{3}{2}g_{-}(z)} & 1
\end{pmatrix} \begin{pmatrix}
0 & 1 \\
-1 & 0
\end{pmatrix} \begin{pmatrix}
1 & 0 \\
e^{-2r^\frac{3}{2}g_{+}(z)} & 1
\end{pmatrix}.
\end{align}
\subsection{Opening of lenses}\label{Subsection: T to S}
Around the interval $(y_{2},y_{1})$, we will split the jump contour in three parts using \eqref{factorization jumps}. This transformation is traditionally called opening of lenses. Let us consider lens-shaped contours $\gamma_{2,+}$ and $\gamma_{2,-}$ lying in the upper and lower half plane respectively, as shown in Figure \ref{fig:contour for S}. 
\begin{figure}
\centering
\begin{tikzpicture}
\draw[fill] (0,0) circle (0.05);
\draw (0,0) -- (5,0);
\draw (0,0) -- (120:3);
\draw (0,0) -- (-120:3);
\draw (0,0) -- (-3,0);

\draw (3,0) .. controls (3.5,1) and (4.5,1) .. (5,0);
\draw (3,0) .. controls (3.5,-1) and (4.5,-1) .. (5,0);

\draw[fill] (3,0) circle (0.05);
\draw[fill] (5,0) circle (0.05);

\node at (0.15,-0.3) {$0$};
\node at (3,-0.3) {$y_{2}$};
\node at (5.2,-0.3) {$y_{1}$};
\node at (-3,-0.3) {$-\infty$};

\draw[black,arrows={-Triangle[length=0.18cm,width=0.12cm]}]
(-120:1.5) --  ++(60:0.001);
\draw[black,arrows={-Triangle[length=0.18cm,width=0.12cm]}]
(120:1.3) --  ++(-60:0.001);
\draw[black,arrows={-Triangle[length=0.18cm,width=0.12cm]}]
(180:1.5) --  ++(0:0.001);

\draw[black,arrows={-Triangle[length=0.18cm,width=0.12cm]}]
(0:4.08) --  ++(0:0.001);
\draw[black,arrows={-Triangle[length=0.18cm,width=0.12cm]}]
(0:1.6) --  ++(0:0.001);

\draw[black,arrows={-Triangle[length=0.18cm,width=0.12cm]}]
(4.08,0.75) --  ++(0:0.001);
\draw[black,arrows={-Triangle[length=0.18cm,width=0.12cm]}]
(4.08,-0.75) --  ++(0:0.001);

\node at (4,1) {$\gamma_{2,+}$};
\node at (4,-1) {$\gamma_{2,-}$};
\end{tikzpicture}
\caption{Jump contours $\Sigma_{S}$ for  $S$.}
\label{fig:contour for S}
\end{figure}
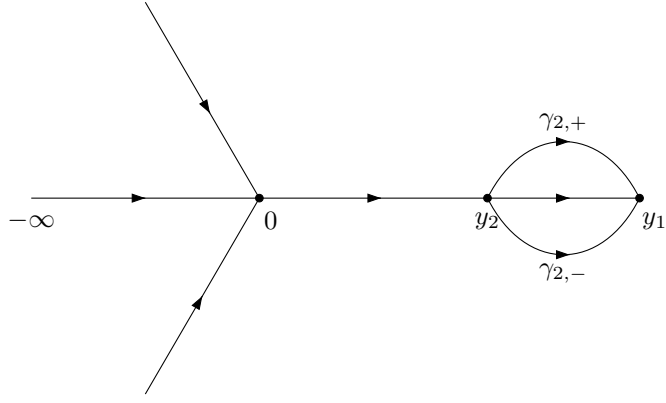
We define $S$ by
\begin{equation}\label{def:S}
S(z)=T(z)\left\{ \hspace{-0.1cm} \begin{array}{l l}
\begin{pmatrix}
1 & 0 \\
-e^{-2r^\frac{3}{2}g(z)} & 1
\end{pmatrix}, & z \mbox{ is inside the lenses around } (y_{2},y_{1}) \mbox{ and } \Im[z] > 0, \\
\begin{pmatrix}
1 & 0 \\
e^{-2r^\frac{3}{2}g(z)} & 1
\end{pmatrix}, & z \mbox{ is inside the lenses  around } (y_{2},y_{1}) \mbox{ and } \Im[z]< 0, \\
I, & \mbox{otherwise}.
\end{array} \right.
\end{equation}
We use Lemma \ref{g_lemma}, as well as the RH problem for $\Psi$ and the definitions \eqref{def of T} of $T$ and \eqref{def:S} of $S$ to conclude that $S$ satisfies the following RH problem. 
\subsubsection*{RH problem for $S$}
\begin{enumerate}[label={(\alph*)}]
\item[(a)] $S : \C \backslash \Sigma_{S} \rightarrow \C^{2\times 2}$ is analytic, with
\begin{equation}\label{eq:defGammaS}
\Sigma_{S}=(-\infty,y_{1}]\cup \gamma_{+}\cup \gamma_{-}, \qquad \gamma_{\pm} = \gamma_{2,\pm} \cup \big(e^{\pm \frac{2\pi i}{3}} (0,+\infty)\big)
\end{equation}
and $\Sigma_{S}$ oriented as in Figure \ref{fig:contour for S}.
\item[(b)] The jumps for $S$ are given by
\begin{align}
& S_{+}(z) = S_{-}(z)\begin{pmatrix}
0 & 1 \\ -1 & 0
\end{pmatrix}, & & z \in (-\infty,0)\cup (y_{2},y_{1}), \label{SInf0 jump} \\
& S_{+}(z) = S_{-}(z)\begin{pmatrix}
1 & 0 \\
e^{-2r^\frac{3}{2}g(z)} & 1
\end{pmatrix}, & & z \in \gamma_{+}\cup \gamma_{-}, \label{Slense jump} \\
& S_{+}(z) = S_{-}(z)e^{-i \Omega r^\frac{3}{2}\sigma_{3}}, & & z \in (0,y_{2}). \label{S0y2 jump}
\end{align}
\item[(c)] As $z \rightarrow \infty$, $z \notin (-\infty,0)$, we have
\begin{equation}
\label{eq:Sasympinf}
S(z) = \left( I + \Or\left(z^{-1}\right) \right) z^{\frac{1}{4} \sigma_3} M^{-1}.
\end{equation}
\item[(d)] As $z \to y_{j}$, $j = 1, 2,3$, we have
\begin{align}\label{S asymp at yj}
S(z) = \Or( \log|z-y_j|).
\end{align}
\end{enumerate}
Deforming $\gamma_{+}$ and $\gamma_{-}$ if necessary, we assume that $\gamma_{+},\gamma_{-} \subset N_{0}$ where $N_{0}$ has the properties described in Lemma \ref{g_lemma}. In particular, $\Re[g(z)]> 0$ for all $z \in \gamma_{+}\cup \gamma_{-}$, so that the jumps for $S$ are exponentially close to the identity as $r \to + \infty$ on the lens boundaries. This convergence is uniform outside small fixed neighborhoods of $y_{1},y_{2},y_{3}$, but is not uniform as $r \to + \infty$ and simultaneously $z \to y_{j}$, $j=1,2,3$.

\section{Global parametrix}\label{section: global parametrix}
In this section we construct the global parametrix $P^{(\infty)}$. We will show in Section \ref{section:smallnorm} that $P^{(\infty)}$ approximates $S$ outside of neighborhoods of $y_{1},y_{2},y_{3}$.

\subsubsection*{RH problem for $P^{(\infty)}$}
\begin{enumerate}[label={(\alph*)}]
\item[(a)] $P^{(\infty)} : \C \backslash (-\infty,y_{1}] \rightarrow \C^{2\times 2}$ is analytic.
\item[(b)] The jumps for $P^{(\infty)}$ are given by
\begin{align}
& P^{(\infty)}_{+}(z) = P^{(\infty)}_{-}(z)\begin{pmatrix}
0 & 1 \\ -1 & 0
\end{pmatrix}, & & z \in (-\infty,0)\cup (y_{2},y_{1}), \nonumber \\
& P^{(\infty)}_{+}(z) = P^{(\infty)}_{-}(z)e^{-i\Omega r^\frac{3}{2} \sigma_{3}}, & & z \in (0,y_{2}). \label{jumps for Pinf on (0,y2)}
\end{align}
\item[(c)] As $z \rightarrow \infty$, we have
\begin{equation}
\label{eq:Pinf asympinf}
P^{(\infty)}(z) = \left( I + \Or\left(z^{-1}\right) \right) z^{\frac{1}{4} \sigma_3} M^{-1}.
\end{equation}
\item[(d)] As $z \to y_j$, we have $P^{(\infty)}(z) = \bigO((z-y_{j})^{-\frac{1}{4}})$ for $j=1,2,3$.
\end{enumerate}
Conditions (a)-(c) for the RH problem for $P^{(\infty)}$ are obtained from the RH problem $S$ by ignoring the (pointwise) exponentially small jumps of $S$. Condition (d) does not come from the RH problem for $S$ and has been added to ensure uniqueness of the solution. The solution $P^{(\infty)}$ can be constructed in terms of Jacobi $\theta$-functions \cite{DIKZ1999}. This construction is technically different from the construction in \cite{DKMVZ2}, but contains similar ideas. Let $X$ be the two-sheeted Riemann surface of genus one associated to $\sqrt{\mathcal{R}(z)}$, with
\begin{align*}
\sqrt{\mathcal{R}(z)} = \sqrt{z(z-y_{2})(z-y_{1})},
\end{align*}
and we let $\sqrt{\mathcal{R}(z)} > 0$ for $z \in (y_{1},+\infty)$ on the first sheet. We also define $A$ and $B$ cycles such that they form a canonical homology basis of $X$. The $A$ cycle surrounds $(0,y_{2})$ with counterclockwise orientation. The upper part of the $A$ cycle (the solid line in Figure \ref{fig: cycles Riemann surface}) lies on the first sheet, and the lower part (the dashed line in Figure \ref{fig: cycles Riemann surface}) lies on the second sheet. The $B$ cycle surrounds $(-\infty,0)$ with clockwise orientation. 
\begin{figure}[]
\begin{center}
\begin{tikzpicture}
\draw   (0,0)--(1.5,0);
\node at (-0.5,0) {$-\infty$};
\node [above] at (7.5,-0.05) {$0$};
\node [above] at (9.5,-0.05) {$y_{2}$};
\node [above] at (11,-0.05) {$y_{1}$};

\draw[black,fill=black]  (7.5,0) circle [radius=0.04];
\draw[black,fill=black]  (9.5,0) circle [radius=0.04];
\draw[black,fill=black]  (11,0) circle [radius=0.04];

\node [below] at (9,1.5) {$A$};
\node [below] at  (1.6,1.55){$B$};

\draw  (1.5,0)--(7.5,0);
\draw  (9.5,0)--(11,0);

\draw[-<-=0.5] (7,0) to [out=90,in=180] (8.5,1) to [out=0,in=90]  (10,0);
\draw[dashed,->-=0.5] (7,0) to [out=-90,in=180] (8.5,-1) to [out=0,in=-90]  (10,0);

\draw[->-=0.5] (1,1) to [out=0,in=180] (7,1) to [out=0,in=90]  (8,0);
\draw[->-=0.5] (8,0) to [out=270,in=0] (7,-1) to [out=180,in=0]  (1,-1);                 
\end{tikzpicture} 
\caption{\label{fig: cycles Riemann surface}The selection of a canonical homology basis for $X$. The solid parts are on the first sheet and the dashed parts are on the second sheet.}\end{center}
\end{figure}
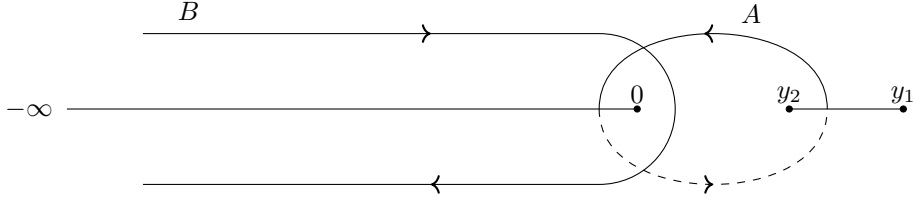
The unique $A$-normalized holomorphic one-form $\omega$ on $X$ is given by
\begin{align}
\omega = \frac{c_{0}~dz}{\sqrt{\mathcal{R}(z)}}, \qquad c_{0} = \left( \int_{A} \frac{dz}{\sqrt{\mathcal{R}(z)}} \right)^{-1}.
\end{align}
By construction $\int_{A}\omega = 1$ and the lattice parameter is given by $\tau = \int_{B}\omega \in i \mathbb{R}^{+}$. The associated $\theta$-function of the third kind $\theta(z) = \theta(z;\tau)$ is given by
\begin{align}
\theta(z) = \sum_{m=-\infty}^{\infty} e^{2\pi i m z}e^{\pi i m^{2} \tau}.
\end{align}
It is an entire function which satisfies (\ref{periodicity property of theta function}).
The zeros of $\theta(z)$ are the points $\frac{1}{2}+m_{1} + \frac{\tau}{2}+m_{2}\tau$, with $m_{1},m_{2} \in \mathbb{Z}$.
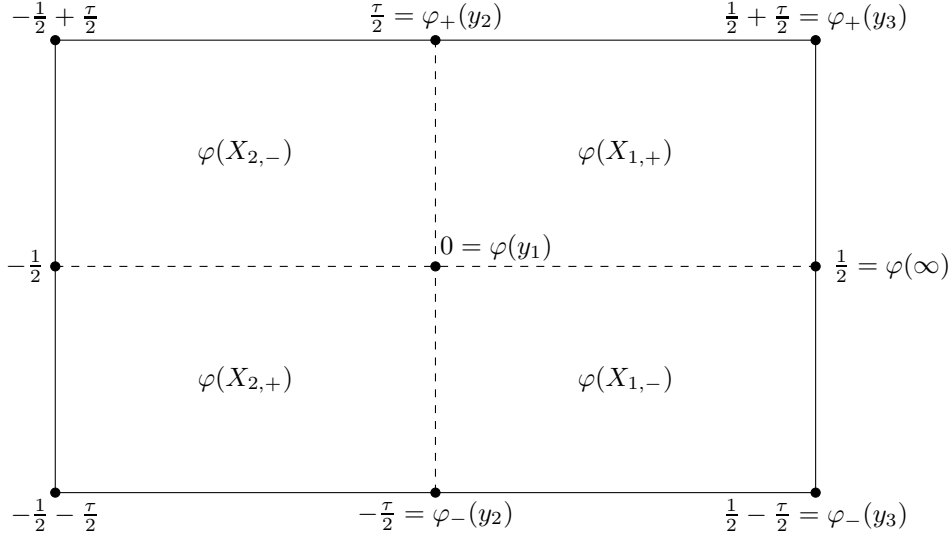
\begin{figure}
\begin{center}
\begin{tikzpicture}
\node at (0,0) {};
\draw (-5,-3) -- (5,-3);
\draw (-5,3) -- (5,3);
\draw (-5,-3) -- (-5,3);
\draw (5,-3) -- (5,3);

\draw[black,fill=black]  (-5,-3) circle [radius=0.06];
\node at (-5,-3.3) {$-\frac{1}{2}-\frac{\tau}{2}$};
\draw[black,fill=black]  (5,-3) circle [radius=0.06];
\node at (5,-3.3) {$\frac{1}{2}-\frac{\tau}{2}=\varphi_{-}(y_{3})$};
\draw[black,fill=black]  (-5,3) circle [radius=0.06];
\node at (-5,3.3) {$-\frac{1}{2}+\frac{\tau}{2}$};
\draw[black,fill=black]  (5,3) circle [radius=0.06];
\node at (5,3.3) {$\frac{1}{2}+\frac{\tau}{2}=\varphi_{+}(y_{3})$};

\draw[dashed] (0,-3) -- (0,3);
\draw[dashed] (-5,0) -- (5,0);
\draw[black,fill=black]  (-5,0) circle [radius=0.06];
\node at (-5.4,0) {$-\frac{1}{2}$};
\draw[black,fill=black]  (5,0) circle [radius=0.06];
\node at (6,0) {$\frac{1}{2}=\varphi(\infty)$};
\draw[black,fill=black]  (0,-3) circle [radius=0.06];
\node at (0,-3.3) {$-\frac{\tau}{2}=\varphi_{-}(y_{2})$};
\draw[black,fill=black]  (0,3) circle [radius=0.06];
\node at (0,3.3) {$\frac{\tau}{2}=\varphi_{+}(y_{2})$};
\draw[black,fill=black]  (0,0) circle [radius=0.06];
\node at (0.8,0.25) {$0=\varphi(y_{1})$};

\node at (2.5,1.5) {$\varphi(X_{1,+})$};
\node at (2.5,-1.5) {$\varphi(X_{1,-})$};
\node at (-2.5,1.5) {$\varphi(X_{2,-})$};
\node at (-2.5,-1.5) {$\varphi(X_{2,+})$};
\end{tikzpicture}
\end{center}
\caption{\label{fig: mapping Abel map}The image set of the function $\varphi$.}
\end{figure}
Let $X_{1}$ and $X_{2}$ denote the upper and lower sheet of $X$, respectively, and consider the set
\begin{align*}
\Lambda := \{u \in \mathbb{C}: -\tfrac{1}{2}\leq \Re[u] \leq  \tfrac{1}{2} \, \mbox{ and } \, -\tfrac{|\tau|}{2}\leq \Im[u] \leq  \tfrac{|\tau|}{2} \},
\end{align*}
together with the function
\begin{align}\label{def of u}
\varphi: \big( X_{1} \setminus (-\infty,y_{1}] \big) \cup \big( X_{2} \setminus (-\infty,y_{1}] \big) \to \Lambda, \qquad z \mapsto \varphi(z) = \int_{y_{1}}^{z} \omega,
\end{align}
where the path does not cross $(-\infty,y_{1}]$ and lies in the same sheet as $z$. We denote $X_{j,+}$ and $X_{j,-}$ for the strict upper and lower half plane of the $j$-th sheet. The function $\varphi$ maps $X_{1,+}, X_{1,-},X_{2,+}, X_{2,-}$ into the four quadrants of $\Lambda$ as shown in Figure \ref{fig: mapping Abel map}. For $z$ on the first sheet, $\varphi(z)$ has the following jumps
\begin{align}
& \varphi_{+}(z) + \varphi_{-}(z) = 0, & & z \in (y_{2},y_{1}), \label{jump1 for u} \\
& \varphi_{+}(z) + \varphi_{-}(z) = 1, & & z \in (-\infty,0), \label{jump2 for u} \\
& \varphi_{+}(z) - \varphi_{-}(z) = \tau, & & z \in (0,y_{2}). \label{jump3 for u}
\end{align}
At the four branch points $y_{1}$, $y_{2}$, $0$, $\infty$ on the first sheet, we have
\begin{align}\label{special values of u}
\varphi(y_{1}) = 0, \qquad \varphi_{\pm}(y_{2}) = \pm \frac{\tau}{2}, \qquad \varphi_{\pm}(0) = \frac{1}{2} \pm \frac{\tau}{2}, \qquad \varphi(\infty) = \frac{1}{2},
\end{align}
see Figure \ref{fig: mapping Abel map}. The Abel map is closely related to $\varphi$, so we denote it by $\varphi_{A}$, and is given by
\begin{align}\label{def of Abel}
\varphi_{A}: X \to \mathbb{C}/(\mathbb{Z} + \tau \mathbb{Z}), \qquad z \mapsto \varphi_{A}(z) := \varphi(z) \mod (\mathbb{Z} + \tau \mathbb{Z}).
\end{align}
The Abel map is a bijection with the torus $\mathbb{C}/ (\mathbb{Z} + \tau \mathbb{Z})$ and is analytic on the full Riemann surface (as opposed to $\varphi$ which presents discontinuities, as can be seen in \eqref{jump1 for u}-\eqref{jump3 for u}). There is an explicit expression for its inverse $\varphi_{A}^{-1}(u)$ in terms of the Jacobi elliptic function $\mathrm{sn}$ \cite[eq 22.2.4]{NIST}:
\begin{align*}
\varphi_{A}^{-1}(u) = y_{1} + (y_{2}-y_{1}) \,  \mathrm{sn}^{2}\left( \frac{i u \sqrt{y_{1}}}{2 c_{0}}, 1-\frac{y_{2}}{y_{1}} \right).
\end{align*} 
We introduce the following ratio of $\theta$-functions (cf. Baker-Akhiezer functions)
\begin{align*}
\mathcal{F}(z;\widehat{d},\nu) = \frac{\theta(z+\widehat{d}+\nu)}{ \theta(z+\widehat{d})}, \qquad z,\widehat{d}\in \mathbb{C}, \quad \nu \in \mathbb{R}.
\end{align*}
This function has a simple pole at $z = -\widehat{d}+\frac{1}{2}+\frac{\tau}{2} \mod (\mathbb{Z} + \tau \mathbb{Z})$. From \eqref{periodicity property of theta function} and \eqref{jump1 for u}-\eqref{jump3 for u}, for $z \in X_{1}$ we have 
\begin{align*}
& \mathcal{F}(\varphi_{+}(z);\widehat{d},\nu) = \mathcal{F}(\varphi_{-}(z);-\widehat{d},-\nu), & & z \in (-\infty,0) \cup (y_{2},y_{1}), \\
& \mathcal{F}(\varphi_{+}(z);\widehat{d},\nu) = e^{-2\pi i \nu}\mathcal{F}(\varphi_{-}(z);\widehat{d},\nu), & & z \in (0,y_{2}).
\end{align*}
Let us consider the matrix
\begin{equation}\label{Q1inftydef}
Q_{1}^{(\infty)}(z) := \begin{pmatrix}
\mathcal{F}(\varphi(z);-\widehat{d}_{1},-\nu) & \mathcal{F}(\varphi(z);\widehat{d}_{1},\nu) \\
\mathcal{F}(\varphi(z);\widehat{d}_{2},-\nu) & \mathcal{F}(\varphi(z);-\widehat{d}_{2},\nu)
\end{pmatrix}, \qquad z \in \mathbb{C}\setminus (-\infty,y_{1}],
\end{equation}
where $\varphi(z)$ is interpreted as the value of $\varphi$ at the point $z$ on the upper sheet $X_{1}$. The matrix $Q_{1}^{(\infty)}(z)$ satisfies the jump conditions
\begin{align*}
& Q_{1,+}^{(\infty)}(z) = Q_{1,-}^{(\infty)}(z)\begin{pmatrix}
0 & 1 \\ 1 & 0
\end{pmatrix}, & & z \in (-\infty,0)\cup (y_{2},y_{1}), \\
& Q_{1,+}^{(\infty)}(z) = Q_{1,-}^{(\infty)}(z)e^{2\pi i \nu \sigma_{3}}, & & z \in (0,y_{2}).
\end{align*}
Note that $\det Q_{1}^{(\infty)}$ is not identically $1$. To ensure that $Q_{1}^{(\infty)}$ has no pole at $\infty$, we assume from now that $\widehat{d}_{1},\widehat{d}_{2} \neq \frac{\tau}{2} \mod (\mathbb{Z} + \tau \mathbb{Z})$. We define 
\begin{align*}
\beta(z) = \sqrt[4]{\frac{z(z-y_{1})}{z-y_{2}}}, \qquad z \in \mathbb{C}\setminus \big((-\infty,0)\cup [y_{2},y_{1})\big),
\end{align*}
where the principal branch is chosen for each branch so $\beta(z) > 0$ for $z > y_{1}$. It can be verified that the matrix
\begin{equation}
Q_{2}^{(\infty)}(z) := \beta(z)^{\sigma_3}M^{-1}
\end{equation}
is analytic in $\mathbb{C}\setminus \big(  (-\infty,0]\cup [y_{2},y_{1}] \big)$, satisfies the jumps
\begin{equation}
Q_{2,+}^{(\infty)}(z) = Q_{2,-}^{(\infty)}(z)\begin{pmatrix}
0 & 1 \\ -1 & 0
\end{pmatrix}, \qquad z \in (-\infty,0) \cup (y_{2},y_{1}),
\end{equation}
with asymptotic behavior at $\infty$ given by
\begin{equation}
Q_{2}^{(\infty)}(z) = \big( I+\bigO(z^{-1}) \big)z^{\frac{\sigma_{3}}{4}}M^{-1}, \qquad \mbox{ as } z \to \infty.
\end{equation}
Therefore, the matrix
\begin{equation}\label{Q3}
Q_{3}^{(\infty)}(z) := \frac{\beta(z)^{\sigma_3}}{\sqrt{2}}\begin{pmatrix}
\ds \mathcal{F}(\varphi(z);-\widehat{d}_{1},-\nu) & \ds -i\mathcal{F}(\varphi(z);\widehat{d}_{1},\nu) \\
\ds -i\mathcal{F}(\varphi(z);\widehat{d}_{2},-\nu) & \ds \mathcal{F}(\varphi(z);-\widehat{d}_{2},\nu)
\end{pmatrix}
\end{equation}
has the jumps
\begin{align*}
& Q_{3,+}^{(\infty)}(z) = Q_{3,-}^{(\infty)}(z)\begin{pmatrix}
0 & 1 \\ -1 & 0
\end{pmatrix}, & & z \in (-\infty,0)\cup (y_{2},y_{1}), \\
& Q_{3,+}^{(\infty)}(z) = Q_{3,-}^{(\infty)}(z)e^{2\pi i \nu \sigma_{3}}, & & z \in (0,y_{2}).
\end{align*}
However, $Q_{3}^{(\infty)}(z)$ is not a solution to the RH problem for $P^{(\infty)}(z)$ because it does not have the required behavior at $\infty$. Indeed, as $z \to \infty$, $z\notin(-\infty,0]$, we have
\begin{align}
Q_{3}^{(\infty)}(z)Mz^{-\frac{\sigma_{3}}{4}} & = \frac{1}{2}\left(
\begin{array}{l}
\big[ \mathcal{F}(\varphi(z);-\widehat{d}_{1},-\nu)+\mathcal{F}(\varphi(z);\widehat{d}_{1},\nu) \big] \beta(z)z^{-\frac{1}{4}}  \\
\big[ \mathcal{F}(\varphi(z);-\widehat{d}_{2},\nu)-\mathcal{F}(\varphi(z);\widehat{d}_{2},-\nu) \big] i\beta(z)^{-1}z^{-\frac{1}{4}} 
\end{array}
\right. \cdots \nonumber \\ 
 & \cdots \left. \begin{array}{l}
\big[ \mathcal{F}(\varphi(z);-\widehat{d}_{1},-\nu)-\mathcal{F}(\varphi(z);\widehat{d}_{1},\nu) \big] i\beta(z)z^{\frac{1}{4}} \\
\big[ \mathcal{F}(\varphi(z);\widehat{d}_{2},-\nu)+\mathcal{F}(\varphi(z);-\widehat{d}_{2},\nu) \big] \beta(z)^{-1}z^{\frac{1}{4}}
\end{array} \right) = \tilde{Q} + \bigO(z^{-1})
\end{align}
where the leading coefficient $\tilde{Q}$ is given by
\begin{align*}
\tilde{Q} = \begin{pmatrix}
\mathcal{F}(\frac{1}{2};\widehat{d}_{1},\nu) & 2ic_{0}\mathcal{F}^{\prime}(\frac{1}{2};\widehat{d}_{1},\nu) \\
0 & \mathcal{F}(\frac{1}{2};-\widehat{d}_{2},\nu)
\end{pmatrix},
\end{align*}
and where we have used the expansions
\begin{align*}
& \varphi(z) = \frac{1}{2} - 2c_{0}z^{-\frac{1}{2}}+\bigO(z^{-\frac{3}{2}}), \\
& \mathcal{F}(\varphi(z);\widehat{d},\nu) = \mathcal{F}(\tfrac{1}{2};\widehat{d},\nu) - 2 c_{0} \mathcal{F}'(\tfrac{1}{2};\widehat{d},\nu) z^{-\frac{1}{2}} + \bigO(z^{-1}), \\
& \mathcal{F}(\varphi(z);-\widehat{d},-\nu) = \mathcal{F}(-\varphi(z)+1;\widehat{d},\nu) = \mathcal{F}(\tfrac{1}{2};\widehat{d},\nu) + 2 c_{0} \mathcal{F}'(\tfrac{1}{2};\widehat{d},\nu) z^{-\frac{1}{2}} + \bigO(z^{-1}),
\end{align*}
as $z \to \infty$, $z \notin (-\infty,0]$. Since $\widehat{d}_{1},\widehat{d}_{2} \neq \frac{\tau}{2} \mod (\mathbb{Z} + \tau \mathbb{Z})$, the quantities $\mathcal{F}(\frac{1}{2};\widehat{d}_{1},\nu)$ and $\mathcal{F}(\frac{1}{2};-\widehat{d}_{2},\nu)$ are well-defined. We also assume that 
$\Im[\widehat{d}_{1}], \Im[\widehat{d}_{2}] \neq (\frac{1}{2} + n)\Im[\tau], n \in \mathbb{Z}$, 
so that $\mathcal{F}(\frac{1}{2};\widehat{d}_{1},\nu)$ and $\mathcal{F}(\frac{1}{2};-\widehat{d}_{2},\nu)$ are both different from $0$ for all values of $\nu \in \mathbb{R}$. This implies that $\tilde{Q}$ is invertible, therefore we can normalize $Q_{3}^{(\infty)}(z)$ at $\infty$. Thus we define
\begin{align}\label{def of P inf hat two cuts}
&P^{(\infty)}(z) := \tilde{Q}^{-1}Q_3^{(\infty)}(z)=\begin{pmatrix}
\ds \frac{1}{\mathcal{G}(\frac{1}{2})} & \ds \frac{-ic_{\mathcal{F}}}{\mathcal{H}(\frac{1}{2})} \\
0 & \ds \frac{1}{\mathcal{H}(\frac{1}{2})}
\end{pmatrix}\frac{\beta(z)^{\sigma_3}}{\sqrt{2}}\begin{pmatrix}
\ds \mathcal{G}(-\varphi(z)) & \ds -i\mathcal{G}(\varphi(z)) \\
\ds -i\mathcal{H}(-\varphi(z)) & \ds \mathcal{H}(\varphi(z))
\end{pmatrix}
\end{align}
where
\begin{align*}
& \mathcal{G}(z) = \mathcal{F}(z;\widehat{d}_{1},\nu),  & & \mathcal{H}(z) = \mathcal{F}(z;-\widehat{d}_{2},\nu), & & c_{\mathcal{F}} = 2c_{0} (\log\mathcal{G})'(\tfrac{1}{2}).
\end{align*}
The above matrix $P^{(\infty)}(z)$ has the same jumps as $Q_{3}^{(\infty)}(z)$ (since the multiplication by $\tilde{Q}^{-1}$ is taken on the left), and has the prescribed asymptotic behavior as $z\to\infty$ given by \eqref{eq:Pinf asympinf}. It remains to choose $\widehat{d}_{1}$ and $\widehat{d}_{2}$ appropriately so that $P^{(\infty)}(z)$ has no poles. Since $\beta^{-1}$ vanishes at $z = y_{2}$, we choose $\widehat{d}_{2} = \frac{1}{2}$ so that the pole of $\mathcal{H}(\pm \varphi(z))$ also lies at $y_{2}$. Thus at the branch point $y_{2}$, we have
\begin{equation}
\mathcal{H}(\pm \varphi(z))\beta(z)^{-1} = \bigO((z-y_{2})^{-\frac{1}{4}}), \qquad \mbox{ as } z \to y_{2}.
\end{equation}
Let us now inspect the first line of the right-most matrix in \eqref{def of P inf hat two cuts}. Since $\beta$ vanishes at $0$ and $y_{1}$, we have two choices for $\widehat{d}_{1}$ (see \eqref{special values of u}):
\begin{equation}
\widehat{d}_{1} = 0 \qquad \mbox{ or } \qquad \widehat{d}_{1} = \frac{1}{2}+\frac{\tau}{2}.
\end{equation}
However, the parameter $\nu$ will be such that $\nu \to -\infty$ as $r \to + \infty$, so we need $\Im[\widehat{d}_{1}] \neq \frac{\Im[\tau]}{2}$ in order to ensure that $\tilde{Q}$ is invertible, see the discussion above \eqref{def of P inf hat two cuts}. Therefore we choose $\widehat{d}_{1} = 0$. The choice for $\nu$ is imposed from the jumps on $(0,y_{2})$, see \eqref{jumps for Pinf on (0,y2)}, therefore we must have
\begin{align}\label{nu_r_relation}
\nu = - \frac{\Omega}{2\pi}r^\frac{3}{2}.
\end{align}
To summarize, the matrix-valued function $P^{(\infty)}(z)$ given by \eqref{def of P inf hat two cuts} with $\widehat{d}_{1} = 0$, $\widehat{d}_{2} = \frac{1}{2}$ and $\nu = - \frac{\Omega}{2\pi}r^\frac{3}{2}$ satisfies the RH problem for $P^{(\infty)}$. Furthermore, we have $\mathcal{H}(z) = \mathcal{G}(z-\frac{1}{2})$, so that \eqref{def of P inf hat two cuts} can be rewritten as
\begin{align}
&P^{(\infty)}(z) := \tilde{Q}^{-1}Q_3^{(\infty)}(z)=\begin{pmatrix}
\ds \frac{1}{\mathcal{G}(\frac{1}{2})} & \ds \frac{-ic_{\mathcal{F}}}{\mathcal{G}(0)} \\
0 & \ds \frac{1}{\mathcal{G}(0)}
\end{pmatrix}\frac{\beta(z)^{\sigma_3}}{\sqrt{2}}\begin{pmatrix}
\ds \mathcal{G}(-\varphi(z)) & \ds -i\mathcal{G}(\varphi(z)) \\
\ds -i\mathcal{G}(-\varphi(z)-\tfrac{1}{2}) & \ds \mathcal{G}(\varphi(z)-\tfrac{1}{2}) 
\end{pmatrix} \label{def of P inf hat two cuts FINAL}
\end{align}
where
\begin{align*}
& \mathcal{G}(z) = \mathcal{F}(z;0,\nu) = \frac{\theta(z + \nu)}{\theta(z)}.
\end{align*}
The function $\mathcal{G}$ is analytic in $\mathbb{C}\setminus \{\frac{1+\tau}{2}+n+m\tau:n,m\in \mathbb{Z}\}$ and satisfies
\begin{align}\label{periodicity of G}
\mathcal{G}(z+1) = \mathcal{G}(z), \qquad  \mathcal{G}(z+\tau) = e^{-2\pi i \nu}\mathcal{G}(z).
\end{align}
In the following subsections, we obtain expressions for the first four terms of the asymptotics of $P^{(\infty)}(z)$ as $z \to y_{j}$, $\Im[z]> 0$, $j=1,2,3$. These coefficients will be needed in the computations of Section \ref{section:smallnorm}.
\subsection{Asymptotics of $P^{(\infty)}(z)$ as $z \to y_{1}$}\label{section: expansion Pinf at y1}

The asymptotics for $\beta(z)$ and $\varphi(z)$ as $z \to y_{1}$, $z \in \mathbb{C}\setminus (-\infty,y_{1}]$, are given by
\begin{align}
& \beta(z) = \beta_{y_{1}}^{(\frac{1}{4})}(z-y_{1})^{\frac{1}{4}} + \beta_{y_{1}}^{(\frac{5}{4})}(z-y_{1})^{\frac{5}{4}} + \bigO\big((z-y_{1})^{\frac{9}{4}}\big), \nonumber \\
& \beta_{y_{1}}^{(\frac{1}{4})} = \frac{y_{1}^{\frac{1}{4}}}{(y_{1}-y_{2})^{\frac{1}{4}}}, \qquad \beta_{y_{1}}^{(\frac{5}{4})} = \frac{-y_{2}}{4 y_{1}^{\frac{3}{4}}(y_{1}-y_{2})^{\frac{5}{4}}}, \nonumber \\
& \varphi(z) = \varphi_{y_{1}}^{(\frac{1}{2})} (z-y_{1})^{\frac{1}{2}} + \varphi_{y_{1}}^{(\frac{3}{2})} (z-y_{1})^\frac{3}{2} + \bigO((z-y_{1})^{\frac{5}{2}}), \nonumber \\
& \varphi_{y_{1}}^{(\frac{1}{2})} = \frac{2c_{0}}{\sqrt{y_{1}}\sqrt{y_{1}-y_{2}}}, \qquad \varphi_{y_{1}}^{(\frac{3}{2})} = \frac{-c_{0}(2y_{1}-y_{2})}{3 y_{1}^\frac{3}{2}(y_{1}-y_{2})^\frac{3}{2}}. \label{asymp u and beta at y1}
\end{align}
The asymptotics for $P^{(\infty)}(z)$ as $z \to y_{1}$ are given by
\begin{align*}
& P^{(\infty)}(z) = \sum_{j=0}^{3} (P^{(\infty)})_{y_{1}}^{(-\frac{1}{4}+\frac{j}{2})}(z-y_{1})^{-\frac{1}{4}+\frac{j}{2}} + \bigO((z-y_{1})^{\frac{7}{4}}), \\
& (P^{(\infty)})_{y_{1}}^{(-\frac{1}{4})} =  \frac{-\mathcal{G}(\frac{1}{2})}{\sqrt{2}\beta_{y_{1}}^{(\frac{1}{4})}\mathcal{G}(0)} \begin{pmatrix}
c_{\mathcal{F}} & ic_{\mathcal{F}} \\ i & -1 \end{pmatrix}, \\
& (P^{(\infty)})_{y_{1}}^{(\frac{1}{4})} = \frac{\beta_{y_1}^{(\frac{1}{4})}\mathcal{G}(0)}{\sqrt{2}\mathcal{G}(\frac{1}{2})}\begin{pmatrix} 1 & -i \\ 0 & 0 \end{pmatrix}+\frac{\varphi_{y_1}^{(\frac{1}{2})}\mathcal{G}'(\frac{1}{2})}{\sqrt{2}\beta_{y_1}^{(\frac{1}{4})}\mathcal{G}(0)}\begin{pmatrix} c_{\mathcal{F}} & -i c_{\mathcal{F}} \\ i & 1 \end{pmatrix}, \\
&(P^{(\infty)})_{y_1}^{(\frac{3}{4})}=\frac{-\varphi_{y_1}^{(\frac{1}{2})}\beta_{y_1}^{(\frac{1}{4})}\mathcal{G}'(0)}{\sqrt{2}\mathcal{G}(\frac{1}{2})}\begin{pmatrix} 1 & i \\ 0 & 0 \end{pmatrix}+\frac{\frac{\beta_{y_1}^{(\frac{5}{4})}}{\beta_{y_1}^{(\frac{1}{4})}}\mathcal{G}(\frac{1}{2})-\frac{1}{2}(\varphi_{y_1}^{(\frac{1}{2})})^2\mathcal{G}''(\frac{1}{2})}{\sqrt{2}\beta_{y_1}^{(\frac{1}{4})}\mathcal{G}(0)}\begin{pmatrix} c_{\mathcal{F}} & ic_{\mathcal{F}} \\ i & -1 \end{pmatrix}, \\
&(P^{(\infty)})_{y_1}^{(\frac{5}{4})}=\frac{\beta_{y_1}^{(\frac{5}{4})}\mathcal{G}(0)+\frac{1}{2}\beta_{y_1}^{(\frac{1}{4})}(\varphi_{y_1}^{(\frac{1}{2})})^2\mathcal{G}''(0)}{\sqrt{2}\mathcal{G}(\frac{1}{2})}\begin{pmatrix} 1 & -i \\ 0 & 0 \end{pmatrix} \\
&+\frac{\varphi_{y_1}^{(\frac{3}{2})}\mathcal{G}'(\frac{1}{2})+\frac{1}{6}(\varphi_{y_1}^{(\frac{1}{2})})^3\mathcal{G}'''(\frac{1}{2})-\frac{\beta_{y_1}^{(\frac{5}{4})}}{\beta_{y_1}^{(\frac{1}{4})}}\varphi_{y_1}^{(\frac{1}{2})}\mathcal{G}'(\frac{1}{2})}{\sqrt{2}\beta_{y_1}^{(\frac{1}{4})}\mathcal{G}(0)}\begin{pmatrix} c_{\mathcal{F}} & -ic_{\mathcal{F}} \\ i & 1 \end{pmatrix}.
\end{align*}


\subsection{Asymptotics of $P^{(\infty)}(z)$ as $z \to y_{2}$ from $\Im[z]> 0$}\label{section: expansion Pinf at y2}
We recall that $\frac{1+\tau}{2}$ is a simple zero of $z \mapsto \theta(z)$. Thus the function $z \mapsto \mathcal{G}(z)$ has a simple pole at $z = \frac{1+\tau}{2}$. Since $\varphi_{+}(y_{2}) = \frac{\tau}{2}$, the entries on the second row of $P^{(\infty)}(z)$ blow up as $z \to y_{2}$, $\Im[z]> 0$. Let us define
\begin{align}\label{Gtilde}
z \mapsto \widetilde{\mathcal{G}}(z) := \mathcal{G}(z) (z- \tfrac{1+\tau}{2}).
\end{align}
Using the relations \eqref{periodicity of G} and \eqref{Gtilde}, before expanding $P^{(\infty)}(z)$ as $z \to y_{2}, \Im[z]>0$, we rewrite $P^{(\infty)}(z)$ as follows
\begin{align}\label{rewriting of Pinf at y2}
&P^{(\infty)}(z) =\begin{pmatrix}
\ds \frac{1}{\mathcal{G}(\frac{1}{2})} & \ds \frac{-ic_{\mathcal{F}}}{\mathcal{G}(0)} \\
0 & \ds \frac{1}{\mathcal{G}(0)}
\end{pmatrix}\frac{\beta(z)^{\sigma_3}}{\sqrt{2}}\begin{pmatrix}
\ds e^{2\pi i \nu}\mathcal{G}(\tau-\varphi(z)) & \ds -i\mathcal{G}(\varphi(z)) \\
\ds -ie^{2\pi i \nu}\frac{\widetilde{\mathcal{G}}(-\varphi(z)+\tfrac{1}{2}+\tau)}{\frac{\tau}{2}-\varphi(z)} & \ds \frac{\widetilde{\mathcal{G}}(\varphi(z)+\tfrac{1}{2})}{\varphi(z)-\frac{\tau}{2}}
\end{pmatrix}
\end{align}
The asymptotics for $\beta(z)$ and $\varphi(z)$ as $z \to y_{2}$, $\Im[z]> 0$, are of the form
\begin{align*}
& \beta(z) = \beta_{y_{2}}^{(-\frac{1}{4})}(z-y_{2})^{-\frac{1}{4}} + \beta_{y_{2}}^{(\frac{3}{4})}(z-y_{2})^{\frac{3}{4}} + \bigO\big((z-y_{2})^{\frac{7}{4}}\big), \\
& \varphi(z) = \frac{\tau}{2} +  \varphi_{y_{2}}^{(\frac{1}{2})} (z-y_{2})^{\frac{1}{2}} + \varphi_{y_{2}}^{(\frac{3}{2})} (z-y_{2})^\frac{3}{2} + \bigO((z-y_{2})^{\frac{5}{2}}),
\end{align*}
and the asymptotics for $P^{(\infty)}(z)$ as $z \to y_{2}$, $\Im[z]>0$, are of the form
\begin{align*}
& P^{(\infty)}(z) = \sum_{j=0}^{3} (P^{(\infty)})_{y_{2}}^{(-\frac{1}{4}+\frac{j}{2})}(z-y_{2})^{-\frac{1}{4}+\frac{j}{2}} + \bigO((z-y_{2})^{\frac{7}{4}}).
\end{align*}
The explicit values for $(P^{(\infty)})_{y_{2}}^{(-\frac{1}{4}+\frac{j}{2})}$, $j=0,...,3$ are all needed to prove Theorem \ref{thm: main result}. They can be obtained after long but straightforward computations in terms of $\beta_{y_{2}}^{(-\frac{1}{4})}$, $\beta_{y_{2}}^{(\frac{3}{4})}$, $\varphi_{y_{2}}^{(\frac{1}{2})}$, $\varphi_{y_{2}}^{(\frac{3}{2})}$, $\mathcal{G}(0)$, $\mathcal{G}(\frac{1}{2})$, $\mathcal{G}^{(k)}(\frac{\tau}{2})$, $\widetilde{\mathcal{G}}^{(k)}(\frac{1+\tau}{2})$, $k=0,1,2,3$, but for conciseness we do not write them down.

\subsection{Asymptotics of $P^{(\infty)}(z)$ as $z \to y_{3}=0$ from $\Im[z]> 0$}\label{section: expansion Pinf at y3}
Using the relations \eqref{periodicity of G} and \eqref{Gtilde}, before expanding $P^{(\infty)}(z)$ as $z \to y_{3}, \Im[z]>0$, we rewrite $P^{(\infty)}(z)$ as follows
\begin{align*}
&P^{(\infty)}(z) =\begin{pmatrix}
\ds \frac{1}{\mathcal{G}(\frac{1}{2})} & \ds \frac{-ic_{\mathcal{F}}}{\mathcal{G}(0)} \\
0 & \ds \frac{1}{\mathcal{G}(0)}
\end{pmatrix}\frac{\beta(z)^{\sigma_3}}{\sqrt{2}}\begin{pmatrix}
\ds e^{2\pi i \nu}\frac{\widetilde{\mathcal{G}}(1+\tau-\varphi(z))}{\frac{1+\tau}{2}-\varphi(z)} & \ds -i\frac{\widetilde{\mathcal{G}}(\varphi(z))}{\varphi(z)-\frac{1+\tau}{2}} \\
\ds -ie^{2\pi i \nu}\mathcal{G}(\tfrac{1}{2}+\tau-\varphi(z)) & \ds \mathcal{G}(\varphi(z)-\tfrac{1}{2})
\end{pmatrix}
\end{align*}
The asymptotics for $\beta(z)$ and $\varphi(z)$ as $z \to y_{3}=0$, $\Im[z]> 0$, are of the form
\begin{align*}
& \beta(z) = \beta_{y_{3}}^{(\frac{1}{4})}z^{\frac{1}{4}} + \beta_{y_{3}}^{(\frac{5}{4})}z^{\frac{5}{4}} + \bigO\big(z^{\frac{9}{4}}\big), \\
& \varphi(z) = \frac{1+\tau}{2} +  \varphi_{y_{3}}^{(\frac{1}{2})} z^{\frac{1}{2}} + \varphi_{y_{3}}^{(\frac{3}{2})} z^\frac{3}{2} + \bigO(z^{\frac{5}{2}}),
\end{align*}
and the asymptotics for $P^{(\infty)}(z)$ as $z \to y_{3}=0$, $\Im[z]>0$, are of the form
\begin{align*}
& P^{(\infty)}(z) = \sum_{j=0}^{3} (P^{(\infty)})_{y_{3}}^{(-\frac{1}{4}+\frac{j}{2})}z^{-\frac{1}{4}+\frac{j}{2}} + \bigO(z^{\frac{7}{4}}).
\end{align*}
The explicit values for $(P^{(\infty)})_{y_{3}}^{(-\frac{1}{4}+\frac{j}{2})}$, $j=0,...,3$ are also all needed to prove Theorem \ref{thm: main result}. They can be expressed after long but straightforward computations in terms of $\beta_{y_{3}}^{(\frac{1}{4})}$, $\beta_{y_{3}}^{(\frac{5}{4})}$, $\varphi_{y_{3}}^{(\frac{1}{2})}$, $\varphi_{y_{3}}^{(\frac{3}{2})}$, $\mathcal{G}(0)$, $\mathcal{G}(\frac{1}{2})$, $\mathcal{G}^{(k)}(\frac{\tau}{2})$, $\widetilde{\mathcal{G}}^{(k)}(\frac{1+\tau}{2})$, $k=0,1,2,3$, and we do not write them down for conciseness.
\subsection{Asymptotics of $P^{(\infty)}(z)^{-1}$ as $z \to y_{k}$ from $\Im[z]> 0$, $k=1,2,3$}
Since $\det P^{(\infty)} \equiv 1$, we have
\begin{align*}
P^{(\infty)}(z)^{-1} = \begin{pmatrix}
P_{22}^{(\infty)}(z) & -P_{12}^{(\infty)}(z) \\
- P_{21}^{(\infty)}(z) & P_{11}^{(\infty)}(z)
\end{pmatrix},
\end{align*}
and we deduce from Subsections \ref{section: expansion Pinf at y1}, \ref{section: expansion Pinf at y2} and \ref{section: expansion Pinf at y3} the following asymptotic expansions
\begin{align*}
& P^{(\infty)}(z)^{-1} = \sum_{j=0}^{3} (P^{(\infty)}_{\mathrm{inv}})_{y_{k}}^{(-\frac{1}{4}+\frac{j}{2})}(z-y_{k})^{-\frac{1}{4}+\frac{j}{2}} + \bigO((z-y_{k})^{\frac{7}{4}}), \qquad k=1,2,3, \\
& (P^{(\infty)}_{\mathrm{inv}})_{y_{k}}^{(-\frac{1}{4}+\frac{j}{2})} = \begin{pmatrix}
(P^{(\infty)})_{y_{k},22}^{(-\frac{1}{4}+\frac{j}{2})} & -(P^{(\infty)})_{y_{k},12}^{(-\frac{1}{4}+\frac{j}{2})} \\
-(P^{(\infty)})_{y_{k},21}^{(-\frac{1}{4}+\frac{j}{2})} & (P^{(\infty)})_{y_{k},11}^{(-\frac{1}{4}+\frac{j}{2})}
\end{pmatrix}, \qquad k=1,2,3, \quad j=0,1,2,3.
\end{align*}

\section{Local parametrices}\label{subsec:Besselparametrix}
Let us choose $\delta > 0$ such that
\begin{align*}
\delta \leq \frac{\min \{-x_{1},x_{1}-x_{2},x_{2}-x_{3}\}}{3} \quad \mbox{ and } \quad \mathbb{D}_{y_j}:=\{z \in \mathbb{C}: |z-y_{j}| < \delta \} \subset N_{0}, \qquad j=1,2,3,
\end{align*}
where $N_{0}$ is as described in Lemma \ref{g_lemma}. The local parametrix $P^{(y_{j})}(z)$, $j=1,2,3$, is defined for $z \in \mathbb{D}_{y_{j}}$ as the solution to a RH problem with the same jumps as $S$ inside $\mathbb{D}_{y_{j}}$. Furthermore, we require that $P^{(y_{j})}(z) = \bigO(\log|z-y_j|)$ as $z \to y_{j}$ and that $P^{(y_{j})}$ matches with $P^{(\infty)}$ on the boundary of $\mathbb{D}_{y_j}$, in the sense that
\begin{equation}\label{lol2}
P^{(y_{j})}(z) = (I + o(1))P^{(\infty)}(z), \qquad \mbox{as } r \to + \infty
\end{equation} 
uniformly for $z \in \partial \mathbb{D}_{y_j}$. These constructions are standard (so we give relatively short explanations) and given in terms of an explicitly solvable model RH problem $\Phi_{\mathrm{Be}}$ which is expressed in terms of Bessel functions. This model RH problem was first derived in \cite{DIZ} and we present it in Appendix \ref{Section:Appendix} for convenience. 

\subsection{Parametrix at $y_1$}

Let $f_{y_1}$ be the conformal map from $\mathbb{D}_{y_1}$ to a neighborhood of $0$ defined by
\begin{equation}\label{y1coord}
f_{y_{1}}(z)=\frac{g^2(z)}{4}.
\end{equation}
The expansion of $f_{y_1}(z)$ as $z \to y_{1}$ is given by
\begin{align}
& f_{y_{1}}(z) = c_{y_{1}}(z-y_{1})\big( 1+ c_{y_{1}}^{(2)}(z-y_{1}) + c_{y_{1}}^{(3)}(z-y_{1})^{2} + \bigO((z-y_{1})^{3}) \big), \label{xiy1_expansion} \\
& c_{y_{1}} = \frac{p^2(y_1)}{y_1(y_1-y_2)}, ~~~ c_{y_{1}}^{(2)} = \frac{2p'(y_1)}{3p(y_1)}-\frac{2y_1-y_2}{3y_1(y_1-y_2)}, \nonumber \\
& c_{y_{1}}^{(3)} = \frac{1}{45}\left( \frac{23 y_{1}^{2} - 23 y_{1}y_{2} + 8y_{2}^{2}}{y_{1}^{2}(y_{1}-y_{2})^{2}} + \frac{5p'(y_{1})^{2}}{p(y_{1})^{2}}-\frac{14(2y_{1}-y_{2})p'(y_{1})}{y_{1}(y_{1}-y_{2})p(y_{1})}+\frac{9p''(y_{1})}{p(y_{1})} \right). \nonumber
\end{align}
We have freedom in the choice of $\gamma_{\pm}\subset N_{0}$. We choose the lenses such that $f_{y_{1}}(\gamma_{+}\cap \mathbb{D}_{y_{1}}) \subset e^{\frac{2\pi i}{3}}\mathbb{R}^{+}$ and $f_{y_{1}}(\gamma_{-}\cap \mathbb{D}_{y_{1}}) \subset e^{\frac{2\pi i}{3}}\mathbb{R}^{-}$. We conclude from the RH problem for $\Phi_{\mathrm{Be}}$ presented in Section \ref{Section:Appendix} that the matrix
\begin{equation}
\Phi_{\text{Be}}\left(r^{3} f_{y_1}(z)\right)e^{-r^\frac{3}{2}g(z)\sigma_3}
\end{equation}
has the same jumps as $S(z)$ inside the disk $z\in\mathbb{D}_{y_1}$, see \eqref{SInf0 jump}-\eqref{S0y2 jump}. Let us define $P^{(y_{1})}$ by
\begin{align}
& P^{(y_1)}(z):=E_{y_{1}}(z)\Phi_{\text{Be}}\left(r^{3} f_{y_1}(z) \right)e^{-r^\frac{3}{2}g(z)\sigma_3}, \label{Py1} \\
& E_{y_{1}}(z) = P^{(\infty)}(z)M^{-1}\left(2\pi r^{\frac{3}{2}}f_{y_1}(z)^{\frac{1}{2}}\right)^{\frac{\sigma_3}{2}}. \label{def of Ey1}
\end{align}
It can be verified from the jumps of $P^{(\infty)}(z)$ that the prefactor $E_{y_{1}}(z)$ is analytic for $z\in\mathbb{D}_{y_1}$. Thus, $P^{(y_1)}(z)$ has the (exact) same jumps as and endpoint behavior, see \eqref{local behavior near 0 of P_Be}, as $S(z)$ for $z\in\mathbb{D}_{y_1}$. Furthermore, due to \eqref{xiy1_expansion} and \eqref{large z asymptotics Bessel}, we have 
\begin{equation}\label{Py1asymp}
P^{(y_1)}(z)P^{(\infty)}(z)^{-1}=I + \frac{P^{(\infty)}(z)\Phi_{\mathrm{Be,1}} P^{(\infty)}(z)^{-1}}{r^\frac{3}{2} f_{y_{1}}(z)^{\frac{1}{2}}} + \bigO(r^{-3})
\end{equation}
as $r\to +\infty$ uniformly for $z\in\partial\mathbb{D}_{y_1}$. In Section \ref{section:integration1} we will need the first columns of $E_{y_{1}}(y_{1})$ and $E_{y_{1}}'(y_{1})$. After some computations, we find
\begin{align}
E_{y_{1}}(y_{1}) &= \frac{-\sqrt{2\pi}c_{y_1}^{\frac{1}{4}}\mathcal{G}(\frac{1}{2})}{\beta_{y_{1}}^{(\frac{1}{4})}\mathcal{G}(0)}\begin{pmatrix} c_{\mathcal{F}} & 0 \\ i & 0 \end{pmatrix}r^{\frac{3}{4}} +\begin{pmatrix}
0 & \star  \\
0 & \star
\end{pmatrix}r^{-\frac{3}{4}}, \label{Ey1y1} \\
E_{y_{1}}'(y_{1})&=\sqrt{2\pi}c_{y_1}^{\frac{1}{4}}r^{\frac{3}{4}}\Bigg\{\frac{\Big(\tfrac{\beta_{y_{1}}^{(\frac{5}{4})}}{\beta_{y_{1}}^{(\frac{1}{4})}} -\frac{1}{4}c_{y_{1}}^{(2)}\Big)\mathcal{G}(\frac{1}{2}) - \frac{1}{2} (\varphi_{y_{1}}^{(\frac{1}{2})})^{2} \mathcal{G}''(\frac{1}{2})}{\beta_{y_{1}}^{(\frac{1}{4})} \mathcal{G}(0)} \begin{pmatrix}
    c_{\mathcal{F}} & 0 \\ i & 0
    \end{pmatrix} \nonumber \\
    & \quad -\frac{\varphi_{y_1}^{(\frac{1}{2})}\beta_{y_1}^{(\frac{1}{4})}\mathcal{G}'(0)}{\mathcal{G}(\frac{1}{2})}\begin{pmatrix} 1 & 0 \\ 0 & 0 \end{pmatrix} \Bigg\} + \begin{pmatrix}
0 & \star  \\
0 & \star
\end{pmatrix}r^{-\frac{3}{4}}     \nonumber
\end{align}
where $\star$ denotes unnecessary constants.
\subsection{Parametrix at $y_2$}

Let $f_{y_2}$ be the conformal map from $\mathbb{D}_{y_2}$ to a neighborhood of $0$ defined by
\begin{equation}\label{y2coord}
f_{y_2}(z) =-\frac{1}{4} \left(g(z)\mp\frac{i\Omega}{2}\right)^2,
\end{equation}
where we take $-/+$ when $z$ is above/below the real axis. Its expansion as $z \to y_{2}$ is given by
\begin{align}
& f_{y_{2}}(z) = c_{y_{2}}(z-y_{2})\big( 1+ c_{y_{2}}^{(2)}(z-y_{2}) + c_{y_{2}}^{(3)}(z-y_{2})^{2}+ \bigO((z-y_{2})^{3}) \big), \label{xiy2_expansion} \\
& c_{y_{2}} = \frac{p^2(y_2)}{y_2(y_1-y_2)}>0, ~~~ c_{y_{2}}^{(2)} = \frac{2p'(y_2)}{3p(y_2)}-\frac{y_1-2y_2}{3y_2(y_1-y_2)}, \nonumber \\
& c_{y_{2}}^{(3)} = \frac{1}{45}\left( \frac{8y_{1}^{2}-23 y_{1}y_{2}+23y_{2}^{2}}{(y_{1}-y_{2})^{2}y_{2}^{2}}+\frac{5p'(y_{2})^{2}}{p(y_{2})^{2}}-\frac{14(y_{1}-2y_{2})p'(y_{2})}{(y_{1}-y_{2}) y_{2} p(y_{2})} + \frac{9p''(y_{2})}{p(y_{2})} \right) . \nonumber
\end{align}
We choose the lenses $\gamma_{+} \subset N_{0}$ and $\gamma_{-}\subset N_{0}$ such that $-f_{y_{2}}(\gamma_{\pm} \cap \mathbb{D}_{y_{2}}) \subset e^{\pm\frac{2\pi i}{3}}\mathbb{R}^{+}$, and we define 
\begin{align}
& P^{(y_{2})}(z) = E_{y_{2}}(z)  \sigma_{3} \Phi_{\mathrm{Be}}(-r^{3} f_{y_{2}}(z))\sigma_{3} e^{-r^\frac{3}{2}g(z)\sigma_{3}}, \label{Py2} \\
& E_{y_{2}}(z) = P^{(\infty)}(z)e^{\pm\frac{i \Omega}{2} r^\frac{3}{2}\sigma_{3}}M\Big( 2\pi r^{\frac{3}{2}}(-f_{y_{2}}(z))^{\frac{1}{2}} \Big)^{\frac{\sigma_{3}}{2}}. \nonumber
\end{align}
It can be verified that $E_{y_{2}}(z)$ is analytic in the disk $\mathbb{D}_{y_{2}}$ and that $P^{(y_2)}(z)=\Or(\log|z-y_2|)$ as $z\to y_2$, see \eqref{local behavior near 0 of P_Be}. From \eqref{y2coord} and \eqref{large z asymptotics Bessel}, we also conclude that 
\begin{equation}\label{Py2asymp}
P^{(y_2)}(z)P^{(\infty)}(z)^{-1}= I + \frac{P^{(\infty)}(z)e^{\pm \frac{i \Omega}{2}r^\frac{3}{2}\sigma_{3}}\sigma_{3} \Phi_{\mathrm{Be},1}\sigma_{3} e^{\mp \frac{i\Omega}{2}r^\frac{3}{2}\sigma_3}P^{(\infty)}(z)^{-1} }{r^\frac{3}{2}(-f_{y_{2}}(z))^{\frac{1}{2}}} + \bigO(r^{-3})
\end{equation}
as $r\to +\infty$ uniformly for $z\in\partial\mathbb{D}_{y_2}$. Furthermore, we also verify that
\begin{align}
E_{y_{2}}(y_{2}) &= \, e^{-\frac{\pi i}{4}}\sqrt{2\pi}  c_{y_{2}}^{\frac{1}{4}}e^{\pi i \nu} r^{\frac{3}{4}} \Bigg\{ \frac{\widetilde{\mathcal{G}}(\frac{1+\tau}{2})}{\varphi_{y_{2}}^{(\frac{1}{2})} \beta_{y_{2}}^{(-\frac{1}{4})}\mathcal{G}(0)} \begin{pmatrix}
c_{\mathcal{F}} & 0 \\
i & 0
\end{pmatrix} + \frac{\beta_{y_{2}}^{(-\frac{1}{4})} \mathcal{G}(\frac{\tau}{2})}{\mathcal{G}(\frac{1}{2})} \begin{pmatrix}
1 & 0 \\
0 & 0
\end{pmatrix} \Bigg\} + \begin{pmatrix}
0 & \star  \\
0 & \star
\end{pmatrix}r^{-\frac{3}{4}}, \label{Ey2y2} \\
E_{y_{2}}'(y_{2}) &= \, e^{-\frac{\pi i}{4}} \sqrt{2\pi} c_{y_{2}}^{\frac{1}{4}} e^{ \pi i \nu} r^{\frac{3}{4}} \Bigg\{ \frac{\big( \beta_{y_{2}}^{(\frac{3}{4})} + \frac{1}{4} c_{y_{2}}^{(2)} \beta_{y_{2}}^{(-\frac{1}{4})} \big) \mathcal{G}(\frac{\tau}{2}) + \frac{1}{2}(\varphi_{y_{2}}^{(\frac{1}{2})})^{2} \beta_{y_{2}}^{(-\frac{1}{4})} \mathcal{G}''(\frac{\tau}{2})}{\mathcal{G}(\frac{1}{2})} \begin{pmatrix}
1 & 0 \\
0 & 0
\end{pmatrix} \nonumber \\
& \hspace{-1cm} + \frac{\big(\frac{1}{4} c_{y_{2}}^{(2)} \varphi_{y_{2}}^{(\frac{1}{2})} \beta_{y_{2}}^{(-\frac{1}{4})} - \varphi_{y_{2}}^{(\frac{1}{2})} \beta_{y_{2}}^{(\frac{3}{4})} - \varphi_{y_{2}}^{(\frac{3}{2})} \beta_{y_{2}}^{(-\frac{1}{4})}\big) \widetilde{\mathcal{G}}(\frac{1+\tau}{2}) + \frac{1}{2}(\varphi_{y_{2}}^{(\frac{1}{2})})^{3} \beta_{y_{2}}^{(-\frac{1}{4})} \widetilde{\mathcal{G}}''(\frac{1+\tau}{2})}{(\varphi_{y_{2}}^{(\frac{1}{2})})^{2} (\beta_{y_{2}}^{(-\frac{1}{4})})^{2} \mathcal{G}(0)} \begin{pmatrix}
c_{\mathcal{F}} & 0 \\
i & 0
\end{pmatrix} \Bigg\}  \nonumber \\
& + \begin{pmatrix}
0 & \star  \\
0 & \star
\end{pmatrix}r^{-\frac{3}{4}}, \nonumber
\end{align}
as $r\to+\infty$, where $\star$ denotes unnecessary constants.

\subsection{Parametrix at $y_3=0$}

Let $f_{y_3}$ be the conformal map from $\mathbb{D}_{y_3}$ to a neighborhood of $0$ defined by
\begin{equation}\label{0coord}
f_{y_{3}}(z)=\frac{1}{4}\left(g(z)\mp\frac{i\Omega}{2}\right)^2,
\end{equation}
where we take the $-/+$ sign when $z$ is above/below the real axis. The expansion of $f_{y_{3}}(z)$ as $z \to 0$ is given by
\begin{align}
& f_{y_{3}}(z) = c_{y_{3}}z\big( 1+ c_{y_{3}}^{(2)}z+ c_{y_{3}}^{(3)}z^{2} + \bigO(z^{3}) \big), \label{xiy3_expansion} \\
& c_{y_{3}} = \frac{p^2(0)}{y_1y_2}, ~~~ c_{y_{3}}^{(2)} = \frac{2p'(0)}{3p(0)}+\frac{y_1+y_2}{3y_1y_2}, \nonumber \\
& c_{y_{3}}^{(3)} = \frac{1}{45}\left( \frac{8y_{1}^{2}+7y_{1}y_{2}+8y_{2}^{2}}{y_{1}^{2}y_{2}^{2}} + \frac{5p'(0)^{2}}{p(0)^{2}} + \frac{14(y_{1}+y_{2})p'(0)}{y_{1}y_{2}p(0)}+\frac{9p''(0)}{p(0)} \right)  . \nonumber
\end{align}
We choose the lenses $\gamma_{+} \subset N_{0}$ and $\gamma_{-}\subset N_{0}$ such that $f_{y_{3}}(\gamma_{\pm} \cap \mathbb{D}_{y_{3}}) \subset e^{\pm\frac{2\pi i}{3}}\mathbb{R}^{+}$, and we define 
\begin{align}
& P^{(y_{3})}(z) = E_{y_{3}}(z)  \Phi_{\mathrm{Be}}(r^{3} f_{y_{3}}(z)) e^{-r^\frac{3}{2}g(z)\sigma_{3}}, \label{P0} \\
& E_{y_{3}}(z) = P^{(\infty)}(z) e^{\pm \frac{i \Omega}{2} r^\frac{3}{2}\sigma_{3}} M^{-1} \Big( 2\pi r^{\frac{3}{2}}(f_{y_{3}}(z))^{\frac{1}{2}} \Big)^{\frac{\sigma_{3}}{2}}. \nonumber
\end{align}
It can be verified that $E_{y_{3}}(z)$ is analytic inside the disk $\mathbb{D}_{y_{3}}$, $P^{(y_3)}(z)=\Or(\log|z|)$ as $z\to 0$, see \eqref{local behavior near 0 of P_Be}, and that 
\begin{align}\label{P0asymp}
P^{(y_3)}(z)P^{(\infty)}(z)^{-1}= I + \frac{P^{(\infty)}(z)e^{\pm \frac{i \Omega}{2}r^\frac{3}{2}\sigma_{3}}
 \Phi_{\mathrm{Be},1} e^{\mp \frac{i\Omega}{2}r^\frac{3}{2}\sigma_3}P^{(\infty)}(z)^{-1} }{r^\frac{3}{2}f_{y_{3}}(z)^{\frac{1}{2}}} + \bigO(r^{-3})
\end{align}
as $r\to +\infty$ uniformly for $z\in\partial\mathbb{D}_{y_3}$. Furthermore, we verify that
\begin{align}
& E_{y_{3}}(y_{3}) = - \sqrt{2\pi} c_{y_{3}}^{\frac{1}{4}}e^{\pi i \nu}r^{\frac{3}{4}} \Bigg\{ \frac{\mathcal{G}(\frac{\tau}{2})}{\beta_{y_{3}}^{(\frac{1}{4})} \mathcal{G}(0)} \begin{pmatrix}
c_{\mathcal{F}} & 0 \\ i & 0
\end{pmatrix} + \frac{\beta_{y_{3}}^{(\frac{1}{4})} \widetilde{\mathcal{G}}(\frac{1+\tau}{2})}{\varphi_{y_{3}}^{(\frac{1}{2})} \mathcal{G}(\frac{1}{2})} \begin{pmatrix}
1 & 0 \\ 0 & 0
\end{pmatrix} \Bigg\} + \begin{pmatrix}
0 & \star  \\
0 & \star
\end{pmatrix}r^{-\frac{3}{4}}, \label{Ey3y3} \\
& E_{y_{3}}'(y_{3}) = - \sqrt{2\pi} c_{y_{3}}^{\frac{1}{4}}e^{\pi i \nu} r^{\frac{3}{4}} \Bigg\{ \frac{\Big( \frac{1}{4}c_{y_{3}}^{(2)} - \frac{\beta_{y_{3}}^{(\frac{5}{4})}}{\beta_{y_{3}}^{(\frac{1}{4})}} \Big) \mathcal{G}(\frac{\tau}{2}) + \frac{1}{2}(\varphi_{y_{3}}^{(\frac{1}{2})})^{2} \mathcal{G}''(\frac{\tau}{2})}{\beta_{y_{3}}^{(\frac{1}{4})} \mathcal{G}(0)} \begin{pmatrix}
c_{\mathcal{F}} & 0 \\
i & 0
\end{pmatrix} \nonumber \\
& + \frac{\big( \frac{1}{4}c_{y_{3}}^{(2)} \beta_{y_{3}}^{(\frac{1}{4})} - \frac{\varphi_{y_{3}}^{(\frac{3}{2})}}{\varphi_{y_{3}}^{(\frac{1}{2})}} \beta_{y_{3}}^{(\frac{1}{4})} + \beta_{y_{3}}^{(\frac{5}{4})} \big) \widetilde{\mathcal{G}}(\frac{1+\tau}{2}) + \frac{1}{2} (\varphi_{y_{3}}^{(\frac{1}{2})})^{2} \beta_{y_{3}}^{(\frac{1}{4})} \widetilde{\mathcal{G}}''(\frac{1+\tau}{2})}{\varphi_{y_{3}}^{(\frac{1}{2})} \mathcal{G}(\frac{1}{2})} \Bigg\} + \begin{pmatrix}
0 & \star  \\
0 & \star
\end{pmatrix}r^{-\frac{3}{4}} \nonumber
\end{align}
as $r\to+\infty$, where $\star$ denotes unnecessary constants.

\section{Small norm problem}\label{section:smallnorm}
In this section we show that for sufficiently large $r$, $P^{(\infty)}(z)$ approximates $S(z)$ for $z \in \mathbb{C}\setminus \cup_{j=1}^3\mathbb{D}_{y_j}$ and $P^{(y_j)}(z)$ approximates $S(z)$ for $z \in \mathbb{D}_{y_j}$, $j=1,2,3$. We define
\begin{equation}\label{errorMatrix}
R(z)=\begin{cases}
        S(z)P^{(y_j)}(z)^{-1}, &z\in\mathbb{D}_{y_j}, ~ j=1,2,3, \\
        S(z)P^{(\infty)}(z)^{-1}, &z\in\mathbb{C}\setminus\bigcup_{j=1}^3\mathbb{D}_{y_j}.
    \end{cases}
\end{equation}
Since $P^{(y_j)}(z)$, $j=1,2,3$, has the exact same jumps as $S$ inside the disks, $R(z)$ is analytic for $z\in \cup_{j=1}^{3}\mathbb{D}_{y_j}\setminus\{y_{j}\}$. Furthermore, we verify from \eqref{S asymp at yj}, \eqref{Py1}, \eqref{Py2}, \eqref{P0} and \eqref{local behavior near 0 of P_Be} that $S(z)P^{(y_j)}(z)^{-1} = \bigO(\log(z-y_{j}))$ as $z \to y_{j}$, $j=1,2,3$. This means that the singularities of $R$ at $y_{1},y_{2},y_{3}$ are in fact removable and $R$ is analytic in all three open disks. Let $\Sigma_{R}$ denote the jump contour for $R$ which is explicitly given by
\begin{align*}
\Sigma_{R} = \bigg(\gamma_{+} \cup \gamma_{-}  \cup \bigcup_{j=1}^{3} \partial \mathbb{D}_{y_{j}} \bigg)  \setminus \bigcup_{j=1}^{3}  \mathbb{D}_{y_{j}},
\end{align*}
with orientation as shown in Figure \ref{fig:SigmaR}. In particular, note that we orient the boundaries of the disks in the clockwise direction.
\begin{figure}
\centering
\begin{tikzpicture}
\draw[fill] (0,0) circle (0.05);
\draw[fill] (3,0) circle (0.05);
\draw[fill] (5,0) circle (0.05);

\draw (-0.3,0.52) -- (120:2.4);
\draw (-0.3,-0.52) -- (-120:2.4);

\draw (0,0) circle (0.6cm);
\draw (3,0) circle (0.6cm);
\draw (5,0) circle (0.6cm);

\draw (3.36,0.48) .. controls (3.65,0.85) and (4.35,0.85) .. (4.64,0.48);
\draw (3.36,-0.48) .. controls (3.65,-0.85) and (4.35,-0.85) .. (4.64,-0.48);

\node at (0.15,-0.3) {$0$};
\node at (3.15,-0.3) {$y_{2}$};
\node at (5.2,-0.3) {$y_{1}$};

\draw[black,arrows={-Triangle[length=0.18cm,width=0.12cm]}]
(-120:1.5) --  ++(60:0.001);
\draw[black,arrows={-Triangle[length=0.18cm,width=0.12cm]}]
(120:1.3) --  ++(-60:0.001);

\draw[black,arrows={-Triangle[length=0.18cm,width=0.12cm]}]
(0.6,-0.09) --  ++(-90:0.001);
\draw[black,arrows={-Triangle[length=0.18cm,width=0.12cm]}]
(3.6,-0.09) --  ++(-90:0.001);
\draw[black,arrows={-Triangle[length=0.18cm,width=0.12cm]}]
(5.6,-0.09) --  ++(-90:0.001);

\draw[black,arrows={-Triangle[length=0.18cm,width=0.12cm]}]
(4.08,0.75) --  ++(0:0.001);
\draw[black,arrows={-Triangle[length=0.18cm,width=0.12cm]}]
(4.08,-0.75) --  ++(0:0.001);

\end{tikzpicture}
\caption{The jump contour $\Sigma_R$.}
\label{fig:SigmaR}
\end{figure}
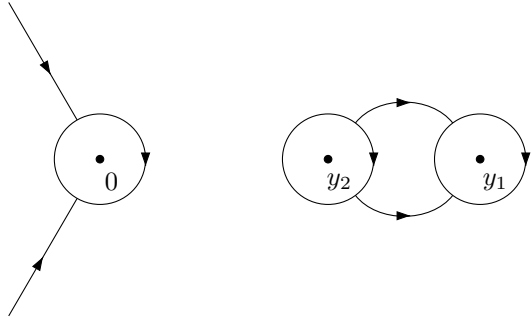
The jumps for $R$ are denoted by $J_{R}$,
\begin{align*}
J_{R} : \Sigma_{R} \to \mathbb{C}^{2 \times 2}, \quad z \mapsto J_{R}(z):= R_{-}(z)^{-1}R_{+}(z),
\end{align*}
and are explicitly given by
\begin{align}\label{Ejump}
J_{R}(z)=\begin{cases}
P^{(\infty)}(z)\begin{pmatrix} 1 & 0 \\ e^{-2r^\frac{3}{2}g(z)} & 1 \end{pmatrix}P^{(\infty)}(z)^{-1}, &z\in\gamma_+\cup\gamma_-\setminus\bigcup_{j=1}^3\mathbb{D}_{y_j} \\
        P^{(y_j)}(z)P^{(\infty)}(z)^{-1}, &z\in\partial\mathbb{D}_{y_j}, ~ j=1,2,3.
    \end{cases}
\end{align}
From Lemma \ref{g_lemma} and \eqref{Py1asymp}, \eqref{Py2asymp}, \eqref{P0asymp}, as $r\to\infty$ we have 
\begin{equation}\label{JR expansion}
J_{R}(z) = \begin{cases}
I + \bigO(e^{-\tilde{c} |rz|^\frac{3}{2}}), & \mbox{uniformly for } z \in \Sigma_{R}\setminus \cup_{j=1}^{3} \partial \mathbb{D}_{y_{j}}, \\
I + \frac{J_{R}^{(1)}(z)}{r^\frac{3}{2}} + \bigO(r^{-3}), & \mbox{uniformly for } z \in \cup_{j=1}^{3} \partial \mathbb{D}_{y_{j}},
\end{cases}
\end{equation}
where $\tilde{c}>0$ is a sufficiently small constant, and $J_{R}^{(1)}(z) = \bigO(1)$ as $r \to + \infty$ uniformly for $z \in \cup_{j=1}^{3} \partial \mathbb{D}_{y_{j}}$.  It is important to recall that $P^{(\infty)}(z)$ has $r$ dependence but this is of no consequence due to the periodicity of the $\theta$-function in the real direction.  The matrix $J_{R}^{(1)}(z)$ has been computed in \eqref{Py1asymp}, \eqref{Py2asymp}, \eqref{P0asymp}, and is given by
\begin{align}
J_{R}^{(1)}(z) = \begin{cases}
\frac{P^{(\infty)}(z)\Phi_{\mathrm{Be,1}} P^{(\infty)}(z)^{-1}}{f_{y_{1}}(z)^{\frac{1}{2}}}, & \mbox{if } z  \in \partial \mathbb{D}_{y_{1}}, \\[0.2cm]
\frac{P^{(\infty)}(z)e^{\pm \frac{i \Omega}{2}r^\frac{3}{2}\sigma_{3}}\sigma_{3} \Phi_{\mathrm{Be},1}\sigma_{3} e^{\mp \frac{i\Omega}{2}r^\frac{3}{2}\sigma_3}P^{(\infty)}(z)^{-1} }{(-f_{y_{2}}(z))^{\frac{1}{2}}}, & \mbox{if } z  \in \partial \mathbb{D}_{y_{2}}, \\[0.2cm]
\frac{P^{(\infty)}(z)e^{\pm \frac{i \Omega}{2}r^\frac{3}{2}\sigma_{3}}
 \Phi_{\mathrm{Be},1} e^{\mp \frac{i\Omega}{2}r^\frac{3}{2}\sigma_3}P^{(\infty)}(z)^{-1} }{f_{y_{3}}(z)^{\frac{1}{2}}}, & \mbox{if } z  \in \partial \mathbb{D}_{y_{3}}.
\end{cases} \label{explicit expression for the jumps Jrp1p}
\end{align}
By the standard theory for RH problems \cite{DeiftZhou}, $R(z)$ exists for sufficiently large $r$ and satisfies
\begin{equation}\label{large r asymptotics for R}
R(z)=I+\frac{R^{(1)}(z)}{r^\frac{3}{2}}+\bigO(r^{-3}),
\end{equation}
as $r\to\infty$, uniformly for $z\in\mathbb{C}\setminus\Sigma_{R}$. The goal for the remainder of this section is to obtain explicit expressions for $R^{(1)\prime}(y_{j})$, $j=1,2,3$. Since $R(z)$ satisfies the equation
\begin{equation}\label{integral formula for R}
R(z)=I+\frac{1}{2\pi i}\int_{\Sigma_{R}}\frac{R_-(\xi)\left(J_{R}(\xi)-I\right)}{\xi-z}~d\xi, \qquad z \in \mathbb{C}\setminus \Sigma_{R},
\end{equation}
we deduce from \eqref{JR expansion} that
\begin{equation} \label{expression for Rp1p as integral}
R^{(1)}(z)=\frac{1}{2\pi i}\int_{\bigcup_{j=1}^3\partial\mathbb{D}_{y_j}}\frac{J_R^{(1)}(\xi)}{\xi-z}~d\xi, \qquad z \in \mathbb{C}\setminus \bigcup_{j=1}^3\partial\mathbb{D}_{y_j},
\end{equation}
where we recall that the circles $\partial\mathbb{D}_{y_j}$, $j=1,2,3,$ have clockwise orientation. From \eqref{explicit expression for the jumps Jrp1p}, we note that the jumps $J_{R}^{(1)}$ can be analytically continued from $\cup_{j=1}^{3} \partial \mathbb{D}_{y_{j}}$ to $\cup_{j=1}^{3} \big(\overline{\mathbb{D}}_{y_{j}} \setminus \{y_{j}\}\big)$, so we can evaluate \eqref{expression for Rp1p as integral} by a residue calculation. The asymptotic expansion of $J_{R}^{(1)}(z)$ as $z \to y_{j}$, $j=1,2,3$, is of the form
\begin{align*}
J_{R}^{(1)}(z) = \sum_{k=-1}^{N} (J_{R}^{(1)})_{y_{j}}^{(k)} (z-y_{j})^{k} + \bigO((z-y_{j})^{N+1}),
\end{align*}
for any $N \in \mathbb{N}_{>0}$, and therefore \eqref{expression for Rp1p as integral} can be rewritten as
\begin{align}
& R^{(1)}(z)=\sum_{j=1}^3\frac{1}{z-y_j} (J_{R}^{(1)})_{y_{j}}^{(-1)}, & & z\in\mathbb{C}\setminus\cup_{j=1}^3\mathbb{D}_{y_j}, \\
& R^{(1)}(z)=-\sum_{k=0}^{\infty}(J_{R}^{(1)})_{y_{j}}^{(k)}(z-y_{j})^{k}  +  \sum_{\substack{j'=1 \\ j'\neq j}}^3\frac{1}{z-y_{j'}}(J_{R}^{(1)})_{y_{j'}}^{(-1)}, & & z \in \mathbb{D}_{y_{j}}, j=1,2,3.
\end{align}
In particular, we have
\begin{align}
& R^{(1)\prime}(y_{j})= -(J_{R}^{(1)})_{y_{j}}^{(1)} - \sum_{\substack{j'=1 \\ j'\neq j}}^3\frac{1}{(y_{j}-y_{j'})^{2}}(J_{R}^{(1)})_{y_{j'}}^{(-1)}, & & j=1,2,3. \label{der of Rp1p at yj}
\end{align}
An explicit expression for $R^{(1)\prime}(y_{j})$ requires the evaluation of the matrix-coefficients appearing in \eqref{der of Rp1p at yj}. The coefficients $(J_{R}^{(1)})_{y_{j}}^{(1)}$ are particularly hard to compute and require more coefficients in the expansions of $P^{(\infty)}(z)$ as $z \to y_{j}$ than those already computed in Subsections \ref{section: expansion Pinf at y1}, \ref{section: expansion Pinf at y2} and \ref{section: expansion Pinf at y3}. 
However, it turns out that what will be needed in Section \ref{section:integration1} is to obtain explicit expressions for
\begin{align}\label{sandwich j}
\Big[E_{y_j}(y_j)^{-1} R^{(1)\prime}(y_j)E_{y_j}(y_j)\Big]_{21}.
\end{align}
The quantities \eqref{sandwich j} are much easier to evaluate than the $R^{(1)\prime}(y_j)$ themselves, due to heavy cancellations between $R^{(1)\prime}(y_j)$ and the factors $E_{y_j}(y_j)^{\pm 1}$. In particular, \eqref{sandwich j} can be explicitly written in terms of the first four coefficients appearing in the expansions of $P^{(\infty)}(z)$ as $z \to y_{j}$ (these coefficients were computed in Subsections \ref{section: expansion Pinf at y1}, \ref{section: expansion Pinf at y2} and \ref{section: expansion Pinf at y3}). Let us define
\begin{align*}
\mathcal{J}_{y_{j}}(z)=\Big[E_{y_j}(y_j)^{-1}  J_{R}^{(1)}(z)  E_{y_j}(y_j)\Big]_{21}.
\end{align*}
As $z \to y_{j'}$, $j'=1,2,3$, we have
\begin{align}
& \mathcal{J}_{y_{j}}(z) = \sum_{k=-1}^{1} (\mathcal{J}_{y_{j}})_{y_{j'}}^{(k)}(z-y_{j'})^{k} + \bigO((z-y_{j'})^{2}), \nonumber \\
& (\mathcal{J}_{y_{j}})_{y_{j'}}^{(k)} = \Big[E_{y_j}(y_j)^{-1}  (J_{R}^{(1)})_{y_{j'}}^{(k)}  E_{y_j}(y_j)\Big]_{21}, \label{Jcal coeff def}
\end{align}
so that \eqref{sandwich j} can be written as
\begin{align*}
\Big[E_{y_j}(y_j)^{-1} R^{(1)\prime}(y_j)E_{y_j}(y_j)\Big]_{21} = -(\mathcal{J}_{y_{j}})_{y_{j}}^{(1)} - \sum_{\substack{j'=1 \\ j'\neq j}}^3\frac{1}{(y_{j}-y_{j'})^{2}}(\mathcal{J}_{y_{j}})_{y_{j'}}^{(-1)}.
\end{align*}
After a standard computation using the expansions derived in Subsections \ref{section: expansion Pinf at y1}, \ref{section: expansion Pinf at y2} and \ref{section: expansion Pinf at y3}, and using the expressions for $E_{y_{j}}(y_{j})$ given in \eqref{Ey1y1}, \eqref{Ey2y2} and \eqref{Ey3y3}, we obtain
\begin{align*}
& (\mathcal{J}_{y_{1}})_{y_{2}}^{(-1)} = - \frac{\sqrt{c_{y_{1}}}e^{2\pi i \nu} \pi (\beta_{y_{2}}^{(-\frac{1}{4})})^{2} \mathcal{G}(\frac{\tau}{2})^{2}}{8 \sqrt{c_{y_{2}}} (\beta_{y_{1}}^{(\frac{1}{4})})^{2} \mathcal{G}(0)^{2}}r^\frac{3}{2}, \qquad (\mathcal{J}_{y_{1}})_{y_{3}}^{(-1)} = - \frac{i\sqrt{c_{y_{1}}}e^{2\pi i \nu} \pi (\beta_{y_{3}}^{(\frac{1}{4})})^{2} \widetilde{\mathcal{G}}(\frac{1+\tau}{2})^{2}}{8 \sqrt{c_{y_{3}}} (\varphi_{y_{3}}^{(\frac{1}{2})})^{2}(\beta_{y_{1}}^{(\frac{1}{4})})^{2} \mathcal{G}(0)^{2}}r^\frac{3}{2}, \\
& (\mathcal{J}_{y_{2}})_{y_{1}}^{(-1)} = - \frac{\sqrt{c_{y_{2}}}e^{2\pi i \nu} \pi (\beta_{y_{2}}^{(-\frac{1}{4})})^{2} \mathcal{G}(\frac{\tau}{2})^{2}}{8 \sqrt{c_{y_{1}}} (\beta_{y_{1}}^{(\frac{1}{4})})^{2} \mathcal{G}(0)^{2}}r^\frac{3}{2}, \qquad (\mathcal{J}_{y_{3}})_{y_{1}}^{(-1)} = - \frac{i\sqrt{c_{y_{3}}}e^{2\pi i \nu}\pi (\beta_{y_{3}}^{(\frac{1}{4})})^{2} \widetilde{\mathcal{G}}\big( \frac{1+\tau}{2} \big)^{2}}{8\sqrt{c_{y_{1}}} (\varphi_{y_{3}}^{(\frac{1}{2})})^{2} (\beta_{y_{1}}^{(\frac{1}{4})})^{2} \mathcal{G}(0)^{2}}r^\frac{3}{2}, \\
& (\mathcal{J}_{y_{2}})_{y_{3}}^{(-1)} = - \frac{\sqrt{c_{y_{2}}}e^{4\pi i \nu} \pi \Big( \varphi_{y_{2}}^{(\frac{1}{2})} \varphi_{y_{3}}^{(\frac{1}{2})} (\beta_{y_{2}}^{(-\frac{1}{4})})^{2} \mathcal{G}(\frac{\tau}{2})^{2} - (\beta_{y_{3}}^{(\frac{1}{4})})^{2} \widetilde{\mathcal{G}} \big( \frac{1+\tau}{2} \big)^{2} \Big)^{2}}{8\sqrt{c_{y_{3}}}(\varphi_{y_{2}}^{(\frac{1}{2})})^{2}(\varphi_{y_{3}}^{(\frac{1}{2})})^{2} (\beta_{y_{2}}^{(-\frac{1}{4})})^{2}(\beta_{y_{3}}^{(\frac{1}{4})})^{2} \mathcal{G}(0)^{2}\mathcal{G}(\frac{1}{2})^{2}}r^\frac{3}{2}, \\
& (\mathcal{J}_{y_{3}})_{y_{2}}^{(-1)} = - \frac{\sqrt{c_{y_{3}}}e^{4\pi i \nu} \pi \Big( \varphi_{y_{2}}^{(\frac{1}{2})} \varphi_{y_{3}}^{(\frac{1}{2})} (\beta_{y_{2}}^{(-\frac{1}{4})})^{2} \mathcal{G}(\frac{\tau}{2})^{2} - (\beta_{y_{3}}^{(\frac{1}{4})})^{2} \widetilde{\mathcal{G}} \big( \frac{1+\tau}{2} \big)^{2} \Big)^{2}}{8\sqrt{c_{y_{2}}}(\varphi_{y_{2}}^{(\frac{1}{2})})^{2}(\varphi_{y_{3}}^{(\frac{1}{2})})^{2} (\beta_{y_{2}}^{(-\frac{1}{4})})^{2}(\beta_{y_{3}}^{(\frac{1}{4})})^{2} \mathcal{G}(0)^{2}\mathcal{G}(\frac{1}{2})^{2}}r^\frac{3}{2}.
\end{align*}
The coefficient $(\mathcal{J}_{y_{1}})_{y_{1}}^{(1)}$ requires more involved computations and is given by
\begin{align*}
(\mathcal{J}_{y_{1}})_{y_{1}}^{(1)} = \frac{i\pi r^\frac{3}{2}}{16 \beta_{y_{1}}^{(\frac{1}{4})}\mathcal{G}(0)^{2}} \bigg\{ 3 c_{y_{1}}^{(2)} \beta_{y_{1}}^{(\frac{1}{4})} \mathcal{G}(0)^{2} - 2 \Big( 6 \beta_{y_{1}}^{(\frac{5}{4})} \mathcal{G}(0)^{2} + (\varphi_{y_{1}}^{(\frac{1}{2})})^{2} \beta_{y_{1}}^{(\frac{1}{4})} \Big[ \mathcal{G}'(0)^{2} + 3 \mathcal{G}(0) \mathcal{G}''(0) \Big] \Big) \bigg\}.
\end{align*}
The expressions for $(\mathcal{J}_{y_{2}})_{y_{2}}^{(1)}$ and $(\mathcal{J}_{y_{3}})_{y_{3}}^{(1)}$ are more complicated and signicantly longer than $(\mathcal{J}_{y_{1}})_{y_{1}}^{(1)}$, so we do not write them down. In Section \ref{Section: theta identities}, we derive several identities involving the $\theta$-function which allows one to simplify the formulas for $(\mathcal{J}_{y_{j}})_{y_{j'}}^{(k)}$, see proof of Proposition \ref{prop: log r term second term}.

\section{Some $\theta$-function identities}\label{Section: theta identities}
The coefficients $(P^{(\infty)})_{y_{j}}^{(-\frac{1}{4}+\frac{k}{2})}$ obtained in Subsections \ref{section: expansion Pinf at y1}, \ref{section: expansion Pinf at y2}, \ref{section: expansion Pinf at y3}, the expressions for $E_{y_{j}}(y_{j})$ and $E'_{y_{j}}(y_{j})$ given by \eqref{Ey1y1}, \eqref{Ey2y2}, \eqref{Ey3y3}, and the matrices $(\mathcal{J}_{y_{j}})_{y_{j'}}^{(k)}$ computed in Section \ref{section:smallnorm}, all involve the $\theta$-function and its derivatives evaluated at the $8$ points
\begin{align*}
0, \qquad \frac{1}{2}, \qquad \frac{\tau}{2}, \qquad \frac{1+\tau}{2}, \qquad \nu, \qquad \nu+\frac{1}{2}, \qquad \nu+\frac{\tau}{2}, \qquad \nu+\frac{1+\tau}{2}.
\end{align*}
In this section, we provide a systematic way of simplifying these expressions by showing that
\begin{align}\label{get rid of theta 1}
\theta^{(j)}\Big( \frac{1}{2} \Big), \qquad \theta^{(j)}\Big( \frac{\tau}{2} \Big), \qquad \theta^{(j)}\Big( \frac{1+\tau}{2} \Big), \qquad j \geq 0,
\end{align}
where $\theta^{(j)}(z) := \frac{d^j}{dz^j}\theta(z)$, can be expressed in terms of $\theta(0)$, $\theta^{(2)}(0)$, $\theta^{(4)}(0)$,..., and similarly that 
\begin{align}\label{get rid of theta 2}
\theta^{(j)}\Big( \nu + \frac{1}{2} \Big), \qquad \theta^{(j)}\Big( \nu +\frac{\tau}{2} \Big), \qquad \theta^{(j)}\Big( \nu +\frac{1+\tau}{2} \Big), \qquad j \geq 0,
\end{align}
can be expressed in terms of $\theta(\nu)$, $\theta'(\nu)$, $\theta^{(2)}(\nu)$,$\cdots$. We start by recalling the basic symmetries of the $\theta$-function, which can be found in \eqref{periodicity property of theta function}. By differentiating the relations $\theta(-u) = \theta(u)$ and $\theta(\frac{1}{2}-u) = \theta(\frac{1}{2}+u)$, we immediately obtain 
\begin{align*}
\theta'\left(0\right)= 0, \qquad \theta^{(3)}\left(0\right) = 0, \qquad \theta'\Big(\frac{1}{2}\Big)= 0, \qquad \theta^{(3)}\Big(\frac{1}{2}\Big) = 0.
\end{align*}
We also recall that $\theta(z)$ has a simple zero at $z = \frac{1 + \tau}{2}$. Let us define $\widetilde{\theta}$ by
\begin{align}\label{def of theta tilde}
\widetilde{\theta}(z) = \frac{\theta(z)}{z - \frac{1 + \tau}{2}},
\end{align}
so that $\widetilde{\mathcal{G}}$ defined in \eqref{Gtilde} is given by $\widetilde{\mathcal{G}}(z)=\frac{\theta(z+\nu)}{\widetilde{\theta}(z)}$. By differentiating \eqref{def of theta tilde}, we obtain
\begin{align}
& \widetilde{\theta}^{(j-1)}\Big(\frac{1+\tau}{2}\Big) = \frac{1}{j} \theta^{(j)}\Big(\frac{1+\tau}{2}\Big), \qquad j=1,2,\cdots. \label{theta tilde to theta}
\end{align}

\begin{remark}
It can be shown by standard arguments that $\det P^{(\infty)}(z) \equiv 1$. This gives a non-trivial relation between the $\theta$-function evaluated at several points. However, it turns out that this is not enough for our needs. Proposition \ref{prop: Abel map and theta function} below states three fundamental relations between the Abel map and the $\theta$-function. These identities are of central importance for us and will be used extensively in the remainder of this paper.
\end{remark}
\begin{proposition}\label{prop: Abel map and theta function}
Let $z \mapsto \varphi_{A}(z)$ denote the Abel map \eqref{def of Abel}. Then, for all $z$ on the Riemann surface, 
\begin{subequations}\label{thetaD123}
\begin{align}\label{thetaD123a}
& \frac{e^{2i \pi \varphi_{A}(z)} \theta (\varphi_{A}(z) +\frac{\tau}{2})^2}{\theta (\varphi_{A}(z))^2}
= \frac{D_1^2}{z},
	\\\label{thetaD123b}
& \frac{e^{2i \pi \varphi_{A}(z)} \theta(\varphi_{A}(z) +\frac{1+\tau}{2})^2}{\theta (\varphi_{A}(z))^2}
= \frac{D_2^2(z-y_{1})}{z},
	\\\label{thetaD123c}
& \frac{\theta(\varphi_{A}(z) +\frac{1}{2})^2}{\theta (\varphi_{A}(z))^2}
= \frac{D_3^2(z-y_2)}{z},
\end{align}
\end{subequations}
where we have identified $z \in X$ with its projection onto the complex plane on the right-hand sides, and
\begin{align}\label{D123def optimal}
& D_1 = (y_{1} y_{2})^{\frac{1}{4}}e^{- \frac{\pi i \tau}{4}}, & & D_2 = i \frac{y_{2}^{\frac{1}{4}}e^{-\frac{\pi i \tau}{4}}}{(y_{1}-y_{2})^{\frac{1}{4}}} , & & D_3 = \frac{y_{1}^{\frac{1}{4}}}{(y_{1}-y_{2})^{\frac{1}{4}}}.
\end{align}
\end{proposition}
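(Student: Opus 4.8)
The plan is to establish all three identities \eqref{thetaD123a}--\eqref{thetaD123c} by one Liouville-type argument on the compact genus-one surface $X$. The first step is to see that the left-hand sides are meromorphic functions on $X$. Using only the quasi-periodicity \eqref{periodicity property of theta function}, one checks that each of the maps $u\mapsto e^{2\pi iu}\theta(u+\tfrac{\tau}{2})^{2}/\theta(u)^{2}$, $u\mapsto e^{2\pi iu}\theta(u+\tfrac{1+\tau}{2})^{2}/\theta(u)^{2}$ and $u\mapsto\theta(u+\tfrac12)^{2}/\theta(u)^{2}$ is invariant under $u\mapsto u+1$ and under $u\mapsto u+\tau$ (in the first two cases the factor $e^{2\pi iu}$ is precisely what compensates the quasi-period picked up by the quotient of $\theta^{2}$'s). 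Hence each descends to a meromorphic function on the torus $\mathbb{C}/(\mathbb{Z}+\tau\mathbb{Z})$, and composing with the Abel map $\varphi_{A}$ of \eqref{def of Abel}, which is a biholomorphism $X\to\mathbb{C}/(\mathbb{Z}+\tau\mathbb{Z})$, shows that the left-hand sides of \eqref{thetaD123a}--\eqref{thetaD123c} are meromorphic on $X$. Therefore so are the functions $H_{1}$, $H_{2}$, $H_{3}$ obtained by multiplying the left-hand sides of \eqref{thetaD123a}, \eqref{thetaD123b}, \eqref{thetaD123c} by $z$, $\tfrac{z}{z-y_{1}}$, $\tfrac{z}{z-y_{2}}$ respectively.

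Next I would show that $H_{1},H_{2},H_{3}$ have no poles on $X$, and hence, being holomorphic on a compact Riemann surface, are constant. The four branch points of $X$ lie over $0,y_{2},y_{1},\infty$, and by \eqref{special values of u} the Abel map sends them, modulo the lattice, to the four half-periods $\tfrac{1+\tau}{2},\tfrac{\tau}{2},0,\tfrac12$ respectively. Since $\theta$ has exactly one zero on the torus, a simple one at $\tfrac{1+\tau}{2}$, and since $\varphi_{A}$ is a local coordinate everywhere (near the branch points this is visible from the expansion \eqref{asymp u and beta at y1} and the analogous ones at $y_{2}$ and $0$), the denominator $\theta(\varphi_{A}(z))^{2}$ vanishes to the same order as $z$ at the point over $z=0$ and is nonzero elsewhere, so the numerator $z$ in $H_{1},H_{2},H_{3}$ cancels the corresponding pole. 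Likewise $\theta(\varphi_{A}(z)+\tfrac{\tau}{2})$, $\theta(\varphi_{A}(z)+\tfrac{1+\tau}{2})$, $\theta(\varphi_{A}(z)+\tfrac12)$ vanish, simply, only over $z=\infty$, $z=y_{1}$, $z=y_{2}$ respectively, so their squares furnish double zeros there that cancel the pole of $z$ at $\infty$ (for $H_{1}$) and the poles of $\tfrac{1}{z-y_{1}}$, $\tfrac{1}{z-y_{2}}$ (for $H_{2}$, $H_{3}$). One then checks that at the remaining branch points the relevant values among $\theta(0),\theta(\tfrac12),\theta(\tfrac{\tau}{2})$ are nonzero (they lie outside the zero set $\tfrac{1+\tau}{2}+\mathbb{Z}+\tau\mathbb{Z}$), which completes the verification that $H_{1},H_{2},H_{3}$ have no poles.

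It remains to identify the constants. For $H_{1}$ I would evaluate at $z=y_{1}$, where $\varphi_{A}(y_{1})=0$, and at $z=y_{2}$, where $\varphi_{A}(y_{2})=\tfrac{\tau}{2}$, simplifying $\theta(\tau)=e^{-\pi i\tau}\theta(0)$ by \eqref{periodicity property of theta function}; this gives $H_{1}=y_{1}\theta(\tfrac{\tau}{2})^{2}/\theta(0)^{2}$ and $H_{1}=y_{2}e^{-\pi i\tau}\theta(0)^{2}/\theta(\tfrac{\tau}{2})^{2}$, whence $H_{1}^{2}=y_{1}y_{2}e^{-\pi i\tau}$. For $\tau\in i\mathbb{R}^{+}$ the series for $\theta(0)$ and $\theta(\tfrac{\tau}{2})$ have positive terms, so $H_{1}>0$, which fixes the sign: $H_{1}=(y_{1}y_{2})^{1/2}e^{-\pi i\tau/2}=D_{1}^{2}$, in accordance with \eqref{D123def optimal}. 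The same device---evaluating $H_{2}$ at $z\in\{\infty,y_{2}\}$ (using $\varphi_{A}(\infty)=\tfrac12$) and $H_{3}$ at $z\in\{y_{1},\infty\}$, simplifying with \eqref{periodicity property of theta function}, and using that $\theta(0)^{2},\theta(\tfrac12)^{2},\theta(\tfrac{\tau}{2})^{2}$ are positive for $\tau\in i\mathbb{R}^{+}$---produces $H_{2}=D_{2}^{2}$ and $H_{3}=D_{3}^{2}$. (Equivalently, these three constants are the classical relations expressing the theta-constants in terms of the elliptic modulus $k_{\star}=\sqrt{y_{2}/y_{1}}$ that underlies the $\mathrm{sn}$-formula for $\varphi_{A}^{-1}$ recalled above.)

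The main obstacle is the divisor bookkeeping in the second step: one must track the orders of vanishing at the four branch points, where $z-z_{0}$ vanishes to second order in the local uniformizer while $\theta(\varphi_{A}(z)+e)$, for the half-period $e$ with $\varphi_{A}(P)+e$ congruent to $\tfrac{1+\tau}{2}$ at the branch point $P$, vanishes only to first order, so that $\theta(\varphi_{A}(z)+e)^{2}$ and $z-z_{0}$ have matching divisors and the elementary rational factors in $H_{1},H_{2},H_{3}$ exactly compensate the poles and zeros coming from the $\theta$-quotients. Once this matching is in place, together with the fact that $\varphi_{A}$ is a global biholomorphism, the conclusion that $H_{1},H_{2},H_{3}$ are constant is immediate and the remaining evaluations are routine applications of \eqref{periodicity property of theta function}.
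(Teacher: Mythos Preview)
Your proposal is correct and follows essentially the same approach as the paper: both arguments observe that the three $\theta$-quotients are doubly periodic, compose with the Abel map to obtain meromorphic functions on $X$, match divisors with the elementary rational functions $z^{-1}$, $(z-y_1)/z$, $(z-y_2)/z$, and conclude proportionality by compactness (you phrase this as Liouville, the paper as Riemann--Roch). The only cosmetic difference is in pinning down the constants: the paper evaluates each identity at one branch point and expands near a second to obtain two expressions for $D_j$ whose comparison yields \eqref{D123def optimal}, whereas you evaluate $H_j$ at two branch points, multiply to get $H_j^{2}$, and fix the sign via the positivity of $\theta(0),\theta(\tfrac12),\theta(\tfrac{\tau}{2})$ for $\tau\in i\mathbb{R}^{+}$---a slightly more self-contained route that avoids any asymptotic expansion.
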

\begin{proof}
An easy computation using the transformation properties (\ref{periodicity property of theta function}) of $\theta$ show that the three quotients 
$$\frac{e^{2\pi i u} \theta(u +\frac{\tau}{2})^2}{\theta(u)^2}, \qquad
\frac{e^{2i \pi u} \theta(u +\frac{1+\tau}{2})^2}{\theta (u)^2}, \qquad
\frac{\theta(u +\frac{1}{2})^2}{\theta (u)^2}$$
are invariant under the shifts $u \mapsto u + 1$ and $u\mapsto u+ \tau$, so that they are well-defined on the torus $\C / (\mathbb{Z} + \tau \mathbb{Z})$. Let us first prove (\ref{thetaD123a}). Since $\theta(u)$ has a simple zero at $u = \frac{1+\tau}{2}$ and no other zeros, it follows that the quotient $\frac{e^{2\pi i u} \theta(u +\frac{\tau}{2})^2}{\theta(u)^2}$ defines a meromorphic function on $\C / (\mathbb{Z} + \tau \mathbb{Z})$ with a double zero at $u = \frac{1}{2}$ and a double pole at $u = \frac{1+\tau}{2}$ and no other zeros or poles. 
Recalling that
$$\varphi_{A}(y_1) = 0, \qquad \varphi_{A}(y_2) = \frac{\tau}{2}, \qquad \varphi_{A}(y_3) = \frac{1}{2} + \frac{\tau}{2}, \qquad \varphi_{A}(\infty) = \frac{1}{2},$$
this implies that the left-hand side of (\ref{thetaD123a}) is a meromorphic  function on the Riemann surface with a double zero at $z=\infty$ and a double pole at $z = y_3$. Since the torus is of genus $1$, it follows from the Riemann-Roch theorem that the identity (\ref{thetaD123a}) holds for some constant $D_1^2 \in \C$. 
Analogous arguments apply to (\ref{thetaD123b}) and (\ref{thetaD123c}). By evaluating the identities in (\ref{thetaD123}) at $y_{1}$, $\infty$ and $\infty$, respectively, we obtain
\begin{align}\label{D123def}
& D_1 = \sqrt{y_1}\frac{ \theta(\frac{\tau}{2})}{\theta(0)} \in \R, & & D_2 = i\frac{\theta(\frac{\tau}{2})}{\theta(\frac{1}{2})} \in i\R, & & D_3 = \frac{\theta(0)}{ \theta(\frac{1}{2})} \in \R.
\end{align}
On the other hand, by expanding \eqref{thetaD123a} as $z \to y_{2}$ and \eqref{thetaD123b}, \eqref{thetaD123c} as $z \to y_{1}$, we obtain
\begin{align}\label{D123defa old}
& D_1 = \sqrt{y_{2}}\frac{e^{-\frac{\pi i \tau}{2}}\theta(0)}{\theta(\frac{\tau}{2})}, & & D_2 = \frac{\sqrt{y_1} \varphi_{y_1}^{(\frac{1}{2})} \theta'(\frac{\tau +1}{2})}{\theta(0)}, & & D_3 = \frac{\sqrt{y_1} \theta(\frac{1}{2})}{\sqrt{y_1-y_2} \theta(0)}.
\end{align}
Comparing \eqref{D123def} with \eqref{D123defa old}, we obtain \eqref{D123def optimal}.
\end{proof}
We immediately obtain the following Corollary.
\begin{corollary}\label{coro: abel map identity 1}
Let $z \mapsto \varphi_{A}(z)$ be the Abel map defined in \eqref{def of Abel}. Then
\begin{align}\label{abelthetaexplicit}
\frac{e^{2 i \pi \varphi_{A}(z)} \theta (\varphi_{A}(z) +\frac{\tau}{2}) \theta (\varphi_{A}(z) +\frac{1+\tau}{2})}{\theta (\varphi_{A}(z) ) \theta (\varphi_{A}(z) +\frac{1}{2})}
= \frac{D_{1} D_{2}}{D_{3}} \sqrt{\frac{z-y_1}{(z-y_2)z}}
\end{align}
for all $z$ on the Riemann surface, where the branch is such that $\sqrt{\frac{z-y_1}{(z-y_2)z}} > 0$ for $z > y_1$ on the upper sheet, and
\begin{align}\label{C0def}
\frac{D_{1} D_{2}}{D_{3}} = i \sqrt{y_{2}} e^{-\frac{\pi i \tau}{2}}.
\end{align}
\end{corollary}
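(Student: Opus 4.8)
The plan is to read off \eqref{abelthetaexplicit} directly from Proposition \ref{prop: Abel map and theta function} and then to pin down the remaining sign and evaluate the constant. Denote by $L(z)$ the left-hand side of \eqref{abelthetaexplicit} and set $\rho(z) := \frac{z-y_{1}}{\sqrt{\mathcal{R}(z)}}$; since $\sqrt{\mathcal{R}(z)}>0$ for $z>y_{1}$ on the first sheet, $\rho(z)$ is precisely the function $\sqrt{\frac{z-y_{1}}{(z-y_{2})z}}$ with the branch prescribed in the statement. Multiplying \eqref{thetaD123a} by \eqref{thetaD123b} and dividing by \eqref{thetaD123c}, the left-hand sides combine to $L(z)^{2}$ and the right-hand sides to $\big(\tfrac{D_{1}D_{2}}{D_{3}}\big)^{2}\tfrac{z-y_{1}}{(z-y_{2})z}$, so that
\begin{equation*}
L(z)^{2} = \Big(\frac{D_{1}D_{2}}{D_{3}}\Big)^{2}\rho(z)^{2}, \qquad z \in X.
\end{equation*}

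Next I would argue that $L/\rho$ is constant. The function $\rho$ is a rational function of $z$ and $\sqrt{\mathcal{R}(z)}$, hence a meromorphic function on all of $X$. For $L$, the same short computation with the quasi-periodicity relations \eqref{periodicity property of theta function} that appears in the proof of Proposition \ref{prop: Abel map and theta function} shows that the expression on the left of \eqref{abelthetaexplicit} is invariant under both shifts $\varphi_{A} \mapsto \varphi_{A}+1$ and $\varphi_{A} \mapsto \varphi_{A}+\tau$; since the Abel map $\varphi_{A}$ is single-valued on the Riemann surface, it follows that $L$ too is a well-defined meromorphic function on $X$. Consequently $L/\rho$ is meromorphic on the connected surface $X$ and its square equals the constant $\big(\tfrac{D_{1}D_{2}}{D_{3}}\big)^{2}$, so $L/\rho$ is itself a constant, equal to $\pm\tfrac{D_{1}D_{2}}{D_{3}}$.

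To fix the sign I would compare the leading behavior of both sides as $z \to y_{1}$ along the first sheet. Using $\varphi_{A}(z) = \varphi_{y_{1}}^{(\frac{1}{2})}(z-y_{1})^{\frac{1}{2}}+\bigO((z-y_{1})^{\frac{3}{2}})$ with $\varphi_{y_{1}}^{(\frac{1}{2})} = \frac{2c_{0}}{\sqrt{y_{1}}\sqrt{y_{1}-y_{2}}}$ from \eqref{asymp u and beta at y1}, together with $\theta(\tfrac{1+\tau}{2})=0$, one finds $L(z) \sim \frac{\theta(\frac{\tau}{2})\theta'(\frac{1+\tau}{2})}{\theta(0)\theta(\frac{1}{2})}\,\varphi_{y_{1}}^{(\frac{1}{2})}(z-y_{1})^{\frac{1}{2}}$, while $\rho(z) \sim \frac{(z-y_{1})^{\frac{1}{2}}}{\sqrt{y_{1}}\sqrt{y_{1}-y_{2}}}$. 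Dividing the two leading coefficients and invoking the relations \eqref{D123def} and \eqref{D123defa old} established in the proof of Proposition \ref{prop: Abel map and theta function} shows that this quotient is $\tfrac{D_{1}D_{2}}{D_{3}}$ with the $+$ sign. Finally, substituting the explicit expressions \eqref{D123def optimal} gives $\tfrac{D_{1}D_{2}}{D_{3}} = \frac{(y_{1}y_{2})^{1/4}e^{-\pi i\tau/4}\cdot i\, y_{2}^{1/4}e^{-\pi i\tau/4}/(y_{1}-y_{2})^{1/4}}{y_{1}^{1/4}/(y_{1}-y_{2})^{1/4}} = i\sqrt{y_{2}}\,e^{-\pi i\tau/2}$, which is \eqref{C0def}.

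Each step here is short, so the main point requiring care is book-keeping rather than computation: one must check that $L$ itself — not merely $L^{2}$ — is single-valued on $X$ (the Abel map $\varphi_{A}$, rather than $\varphi$, appears precisely so that this holds), and one must be consistent about branch conventions when determining the overall sign via the limit at $y_{1}$. Once these two points are settled, the identity is indeed an immediate consequence of Proposition \ref{prop: Abel map and theta function}, as asserted.
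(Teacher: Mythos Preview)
Your proof is correct and follows essentially the same approach as the paper: multiply \eqref{thetaD123a} by \eqref{thetaD123b}, divide by \eqref{thetaD123c}, take a square root, and fix the sign by expanding at $z\to y_{1}$. The paper is terser (it simply says the sign ``can be determined from an expansion as $z\to y_{1}$''), whereas you spell out the meromorphicity-and-constant-square argument and carry out the $y_{1}$ limit explicitly; both arrive at \eqref{abelthetaexplicit} and \eqref{C0def} in the same way.
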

\begin{proof}
By multiplying \eqref{thetaD123a} with \eqref{thetaD123b}, and then dividing by \eqref{thetaD123c}, we obtain 
\begin{align}\label{abelthetaexplicit square}
\bigg(\frac{e^{2 i \pi \varphi_{A}(z)} \theta (\varphi_{A}(z) +\frac{\tau}{2}) \theta (\varphi_{A}(z) +\frac{1+\tau}{2})}{\theta (\varphi_{A}(z) ) \theta (\varphi_{A}(z) +\frac{1}{2})}\bigg)^{2}
= \frac{D_{1}^{2} D_{2}^{2}}{D_{3}^{2}} \frac{z-y_1}{(z-y_2)z}.
\end{align}
By taking the square root of \eqref{abelthetaexplicit square}, we obtain \eqref{abelthetaexplicit}, up to a multiplicative sign that can be determined from an expansion as $z \to y_{1}$, see \cite[Chapter 20]{NIST}.
\end{proof}
By taking derivatives of \eqref{thetaD123a}-\eqref{thetaD123c}, we can obtain expressions for \eqref{get rid of theta 1} in terms of only $\theta(0),\theta''(0),\ldots$. We summarize the relations that will be used in Section \ref{section:integration1} in the following Corollary.
\begin{corollary}\label{coro:identities without nu}
We have the relations
\begin{align*}
& \theta\Big(\frac{\tau}{2}\Big) = e^{-\frac{\pi i \tau}{4}}\frac{y_{2}^{\frac{1}{4}}}{y_{1}^{\frac{1}{4}}}\theta(0), & & \theta\Big(\frac{1}{2}\Big) = \frac{(y_{1}-y_{2})^{\frac{1}{4}}}{y_{1}^{\frac{1}{4}}}\theta(0), & & \theta'\Big(\frac{\tau}{2}\Big) = -i \pi e^{-\frac{\pi i \tau}{4}}\frac{y_{2}^{\frac{1}{4}}}{y_{1}^{\frac{1}{4}}}\theta(0),
\end{align*}
as well as
\begin{align*}
& \theta'\Big(\frac{1+\tau}{2}\Big) = i e^{-\frac{\pi i \tau}{4}}\frac{(y_{1}-y_{2})^{\frac{1}{4}}y_{2}^{\frac{1}{4}}}{2c_{0}}\theta(0), \\
& \theta''\Big(\frac{\tau}{2}\Big) = e^{-\frac{\pi i \tau}{4}}\frac{y_{2}^{\frac{1}{4}}}{y_{1}^{\frac{1}{4}}}\bigg( \theta''(0)-\bigg[ \pi^{2} + \frac{y_{1}-y_{2}}{4 c_{0}^{2}} \bigg]\theta(0) \bigg), \\
& \theta''\Big(\frac{1+\tau}{2}\Big) = \pi e^{-\frac{\pi i \tau}{4}} \frac{(y_{1}-y_{2})^{\frac{1}{4}}y_{2}^{\frac{1}{4}}}{c_{0}}\theta(0), \\
& \theta''\Big(\frac{1}{2}\Big) = \frac{(y_{1}-y_{2})^{\frac{1}{4}}}{4 c_{0}^{2} y_{1}^{\frac{1}{4}}}\Big( y_{2} \theta(0) + 4 c_{0}^{2} \theta''(0) \Big), \\
& \theta^{(3)}\Big( \frac{1+\tau}{2}\Big) = -ie^{-\frac{\pi i \tau}{4}} \frac{(y_{1}-y_{2})^{\frac{1}{4}}y_{2}^{\frac{1}{4}}}{8c_{0}^{3}} \Big( (y_{1}-2y_{2})\theta(0) + 12c_{0}^{2} \big( \pi^{2}\theta(0)-\theta''(0) \big) \Big).
\end{align*}
\end{corollary}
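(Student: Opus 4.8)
The plan is to obtain every relation from Proposition~\ref{prop: Abel map and theta function} together with the periodicity \eqref{periodicity property of theta function}, by expanding the three identities \eqref{thetaD123a}--\eqref{thetaD123c} at the branch point $z=y_1$ (where $\varphi_{A}(y_1)=0$) and matching Taylor coefficients.

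First I would treat the two relations that follow from symmetry alone. Combining $\theta(-u)=\theta(u)$ with $\theta(u+\tau)=e^{-2\pi i u}e^{-\pi i\tau}\theta(u)$ from \eqref{periodicity property of theta function} gives the functional equations $\theta(\tfrac{\tau}{2}-\epsilon)=e^{2\pi i\epsilon}\theta(\tfrac{\tau}{2}+\epsilon)$ and $\theta(\tfrac{1+\tau}{2}-\epsilon)=-e^{2\pi i\epsilon}\theta(\tfrac{1+\tau}{2}+\epsilon)$. Equating the coefficients of $\epsilon$ in the first one (and using $\theta(\tfrac{\tau}{2})\neq0$) yields $\theta'(\tfrac{\tau}{2})=-i\pi\,\theta(\tfrac{\tau}{2})$, while equating the coefficients of $\epsilon^{2}$ in the second one (and using $\theta(\tfrac{1+\tau}{2})=0$) yields $\theta''(\tfrac{1+\tau}{2})=-2\pi i\,\theta'(\tfrac{1+\tau}{2})$; this also explains why these two ratios are independent of $\vec{x}$.

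For the remaining relations I would parametrise a neighbourhood of $y_1$ on the Riemann surface by $t=(z-y_1)^{1/2}$, so that $z=y_1+t^{2}$ and, by \eqref{asymp u and beta at y1}, $\varphi_{A}(z)=u(t):=\varphi_{y_1}^{(1/2)}t+\varphi_{y_1}^{(3/2)}t^{3}+\bigO(t^{5})$ with $\varphi_{y_1}^{(1/2)}=\tfrac{2c_0}{\sqrt{y_1}\sqrt{y_1-y_2}}$ and $\varphi_{y_1}^{(3/2)}=\tfrac{-c_0(2y_1-y_2)}{3y_1^{3/2}(y_1-y_2)^{3/2}}$. A short computation using \eqref{periodicity property of theta function} and $\theta(-u)=\theta(u)$ shows that the left-hand sides of \eqref{thetaD123a}, \eqref{thetaD123b}, \eqref{thetaD123c} are even functions of $u$; hence after the substitution $u=u(t)$ both sides of each identity become even in $t$, with the right-hand sides rational in $t^{2}$. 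Expanding in powers of $t^{2}$: the orders $t^{0}$ and $t^{2}$ of \eqref{thetaD123c} give $\theta(\tfrac12)$ and then $\theta''(\tfrac12)$ (and confirm $\theta'(\tfrac12)=0$); the orders $t^{0}$ and $t^{2}$ of \eqref{thetaD123a} give $\theta(\tfrac{\tau}{2})$ and then $\theta''(\tfrac{\tau}{2})$; and for \eqref{thetaD123b}, whose left-hand side has a double zero at $u=0$ because $\theta(\tfrac{1+\tau}{2})=0$, I would first write $\theta(u+\tfrac{1+\tau}{2})=u\,\widetilde{\theta}(u+\tfrac{1+\tau}{2})$ with $\widetilde{\theta}$ as in \eqref{def of theta tilde} and use \eqref{theta tilde to theta}, after which the orders $t^{2}$ and $t^{4}$ give $\theta'(\tfrac{1+\tau}{2})$ and $\theta^{(3)}(\tfrac{1+\tau}{2})$ (here the value of $\theta''(\tfrac{1+\tau}{2})$ found above is used to eliminate that derivative). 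At each step I would take logarithmic derivatives so that only $\theta(0)$, $\theta''(0)$ (no higher derivatives, thanks to $\theta'(0)=\theta^{(3)}(0)=0$) and the sought $\theta$-value enter, solve the resulting relation, and fix the remaining square-root sign by positivity (all of $\theta(0),\theta(\tfrac{\tau}{2}),\theta(\tfrac12),y_1,y_2,e^{-\pi i\tau/4}$ are positive since $\tau\in i\R^{+}$) or, for the $\theta'$-values, by the leading expansions \eqref{D123defa old} already recorded in the proof of Proposition~\ref{prop: Abel map and theta function}. Substituting $D_1,D_2,D_3$ from \eqref{D123def optimal} and simplifying then gives the stated formulas.

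The only real difficulty is organizational: handling the double zero of the left-hand side of \eqref{thetaD123b} at $u=0$ (which is exactly why $\widetilde{\theta}$ and \eqref{theta tilde to theta} must be brought in), and verifying at the $t^{2}$ and $t^{4}$ stages that $\varphi_{y_1}^{(1/2)}$, $\varphi_{y_1}^{(3/2)}$ and the constants $D_j$ combine into precisely the combinations $\pi^{2}+\tfrac{y_1-y_2}{4c_0^{2}}$, $\tfrac{y_2}{4c_0^{2}}$ and $y_1-2y_2$ appearing in the statement. This is a finite, mechanical computation, and no ideas beyond Proposition~\ref{prop: Abel map and theta function} are needed.
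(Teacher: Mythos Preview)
Your proposal is correct and follows essentially the same approach as the paper: the paper simply states that these relations are obtained ``by taking derivatives of \eqref{thetaD123a}--\eqref{thetaD123c}'', and your plan to expand those three identities in the local coordinate $t=(z-y_1)^{1/2}$ near $y_1$ (where $\varphi_A(y_1)=0$) is exactly one concrete way to carry this out, since differentiating in $z$ and in $u=\varphi_A(z)$ are linked by the expansion \eqref{asymp u and beta at y1}. Your additional observation that $\theta'(\tfrac{\tau}{2})=-i\pi\theta(\tfrac{\tau}{2})$ and $\theta''(\tfrac{1+\tau}{2})=-2\pi i\,\theta'(\tfrac{1+\tau}{2})$ follow from the quasi-periodicity alone is a nice shortcut but not a different method, and your handling of the double zero in \eqref{thetaD123b} via $\widetilde{\theta}$ and \eqref{theta tilde to theta} is the natural device the paper also sets up for this purpose.
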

Finally, we note that the relations \eqref{thetaD123a}-\eqref{thetaD123c} allows also to express \eqref{get rid of theta 2} in terms of $\theta(\nu)$, $\theta'(\nu)$, $\theta^{(2)}(\nu)$,$\cdots$, via the following identities
\begin{subequations}\label{thetaD123 nu}
\begin{align}
& \theta (\nu +\tfrac{\tau}{2})^2
= e^{-2i \pi \nu}\frac{D_1^2}{a}\theta (\nu)^2, \label{thetaD123a nu}
	\\
& \theta(\nu +\tfrac{1+\tau}{2})^2
= e^{-2i \pi \nu}\frac{D_2^2(a-y_{1})}{a}\theta (\nu)^2, \label{thetaD123b nu}
	\\ 
& \theta(\nu +\tfrac{1}{2})^2
= \frac{D_3^2(a-y_2)}{a}\theta (\nu)^2, \label{thetaD123c nu}
\end{align}
\end{subequations}
where $a = \varphi_{A}^{-1}(\nu)$.
\section{Proof of Theorem \ref{thm: main result}}\label{section:integration1}
In Section \ref{section:diffid} we obtained the following differential identity:
\begin{align}\label{final formula diff identity int section}
& \partial_{r} \log F(r\vec{x}) = \sum_{j=1}^{3}K_{j}, & & \mbox{with } \quad K_{j}:= \frac{(-1)^{j+1}x_j}{2\pi i}\lim_{z \to y_{j}}\left(\Psi_+^{-1}\Psi_+'\right)_{21}(rz;rx_{3},r\vec y),
\end{align}
where the limits as $z\to y_j$, $j = 1,2,3,$ are taken such that $z \in (0,y_2) \cup (y_1,+\infty)$. In this section, we use the analysis from Sections \ref{section:RH1}-\ref{section:smallnorm} to obtain large $r$ asymptotics for $K_{j}$, $j=1,2,3$. These asymptotics can be simplified in two steps: first, we use the numerous identities involving the $\theta$-function and the Abel map of Section \ref{Section: theta identities}, and second, we use Riemann's bilinear identity and some other identities for elliptic integrals. Then, we integrate these asymptotics as shown in \eqref{integration diff identity general form} and prove the asymptotic formula \eqref{F asymp} for $\log F(r\vec{x})$. 
\begin{proposition}\label{prop: asymp splitting}
Let $\vec{x} = (x_1, x_2, x_3)$ be fixed and such that $x_3 < x_2 < x_1 < 0$. We have
\begin{align}\nonumber
\partial_r\log F(r \vec{x}) 
    = &\; r^2 \sum_{j=1}^3 (-1)^{j+1} x_j c_{y_j} 
     +  \frac{1}{2\pi i r^{\frac{5}{2}}}\sum_{j=1}^3 (-1)^{j+1}x_j\Big[E_{y_j}(y_j)^{-1} R^{(1)\prime}(y_j)E_{y_j}(y_j)\Big]_{21}
 	\\ 
&     + \frac{1}{2\pi i r}  \sum_{j=1}^3 (-1)^{j+1}x_j \Big[ E_{y_j}(y_j)^{-1} E_{y_j}'(y_j)\Big]_{21} 
      + \bigO \big(r^{-\frac{5}{2}}\big)\label{partialrlogdet}
\end{align}
as $r \to +\infty$, where $c_{y_j}$, $j=1,2,3,$ are given by \eqref{xiy1_expansion}, \eqref{xiy2_expansion} and \eqref{xiy3_expansion}.
\end{proposition}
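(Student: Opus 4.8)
The plan is to unravel the chain of transformations $\Psi\mapsto T\mapsto S$ together with the local parametrix construction so as to write $\Psi(rz;rx_{3},r\vec y)$, for $z$ near $y_{j}$, as an explicit product, and then to compute $(\Psi_{+}^{-1}\Psi_{+}')_{21}(rz)$ and let $z\to y_{j}$. Fix $j\in\{1,2,3\}$ and let $z$ tend to $y_{j}$ along $(0,y_{2})\cup(y_{1},+\infty)$; this approach is outside the lenses, so $T(z)=S(z)$, and inside $\mathbb{D}_{y_{j}}$ we have $S(z)=R(z)P^{(y_{j})}(z)$ by \eqref{errorMatrix}. Substituting the definitions \eqref{def of T}, \eqref{Py1}, \eqref{Py2}, \eqref{P0}, the $g$-function factors $e^{\pm r^{3/2}g(z)\sigma_{3}}$ cancel, leaving
\begin{align*}
\Psi(rz;rx_{3},r\vec y)=A^{-1}R(z)E_{y_{j}}(z)\,\Phi_{y_{j}}\big(r^{3}\tilde f_{y_{j}}(z)\big),
\end{align*}
where $A=A(r)$ is the ($z$-independent) prefactor matrix in \eqref{def of T}, and $(\tilde f_{y_{j}},\Phi_{y_{j}})$ equals $(f_{y_{1}},\Phi_{\mathrm{Be}})$, $(-f_{y_{2}},\sigma_{3}\Phi_{\mathrm{Be}}\sigma_{3})$, $(f_{y_{3}},\Phi_{\mathrm{Be}})$ for $j=1,2,3$. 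Because $A$ does not depend on $z$, it disappears from $\Psi^{-1}\Psi'$: setting $w=rz$ and differentiating in $w$ by the chain rule,
\begin{align*}
(\Psi_{+}^{-1}\Psi_{+}')(rz)=r^{2}\tilde f_{y_{j}}'(z)\big(\Phi_{y_{j}}^{-1}\Phi_{y_{j}}'\big)(\zeta)+\frac{1}{r}\,\Phi_{y_{j}}(\zeta)^{-1}\big(E_{y_{j}}^{-1}R^{-1}R'E_{y_{j}}+E_{y_{j}}^{-1}E_{y_{j}}'\big)(z)\,\Phi_{y_{j}}(\zeta),\qquad \zeta=r^{3}\tilde f_{y_{j}}(z),
\end{align*}
an identity valid exactly for $z\in\mathbb{D}_{y_{j}}\setminus\Sigma_{S}$.

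I would then pass to the limit $z\to y_{j}$, i.e. $\zeta\to 0$. A direct computation from the explicit form of $\Phi_{\mathrm{Be}}$ near the origin (Appendix \ref{Section:Appendix}) shows that $(\Phi_{\mathrm{Be}}^{-1}\Phi_{\mathrm{Be}}')_{21}$ extends holomorphically across $\zeta=0$ with value $2\pi i$, and that $\big[\Phi_{\mathrm{Be}}^{-1}M\Phi_{\mathrm{Be}}\big]_{21}(\zeta)\to M(y_{j})_{21}$ as $\zeta\to 0$ for any $M$ holomorphic at $y_{j}$ (the logarithm enters $\Phi_{\mathrm{Be}}$ only through a unipotent upper-triangular factor, to which the $(2,1)$ entry of a conjugation is insensitive). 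For $j=2$ the sign flip from conjugation by $\sigma_{3}$ and the sign flip from $\tilde f_{y_{2}}=-f_{y_{2}}$ cancel, and $\tilde f_{y_{j}}'(y_{j})=\pm c_{y_{j}}$ accordingly, so the first term above converges to $2\pi i\,r^{2}c_{y_{j}}$ in all three cases. Moreover $R=I+R^{(1)}/r^{3/2}+\bigO(r^{-3})$ uniformly on $\overline{\mathbb{D}}_{y_{j}}$ by \eqref{large r asymptotics for R}, and $R^{(1)}$ is holomorphic there, so (differentiating via Cauchy's formula on a fixed circle) $R^{-1}R'=R^{(1)\prime}/r^{3/2}+\bigO(r^{-3})$ uniformly on $\overline{\mathbb{D}}_{y_{j}}$ as well. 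Combining these,
\begin{align*}
\lim_{z\to y_{j}}(\Psi_{+}^{-1}\Psi_{+}')_{21}(rz)=2\pi i\,r^{2}c_{y_{j}}+\frac{1}{r^{5/2}}\big[E_{y_{j}}(y_{j})^{-1}R^{(1)\prime}(y_{j})E_{y_{j}}(y_{j})\big]_{21}+\frac{1}{r}\big[E_{y_{j}}(y_{j})^{-1}E_{y_{j}}'(y_{j})\big]_{21}+\mathcal{E}_{j},
\end{align*}
with $\mathcal{E}_{j}=\tfrac1r\big[E_{y_{j}}(y_{j})^{-1}\bigO(r^{-3})E_{y_{j}}(y_{j})\big]_{21}$. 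To bound $\mathcal{E}_{j}$ I would note that $\det E_{y_{j}}(z)\equiv 1$ on $\mathbb{D}_{y_{j}}$ (immediate from \eqref{def of Ey1}, \eqref{Py2}, \eqref{P0}, since $\det P^{(\infty)}\equiv 1$ and $\det M=1$), hence $E_{y_{j}}(y_{j})^{-1}=\mathrm{adj}\,E_{y_{j}}(y_{j})$; together with $E_{y_{j}}(y_{j})=\bigO(r^{3/4})$ entrywise from \eqref{Ey1y1}, \eqref{Ey2y2}, \eqref{Ey3y3}, the conjugation amplifies by at most $\bigO(r^{3/2})$, so $\mathcal{E}_{j}=\bigO(r^{-5/2})$. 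Multiplying by $\tfrac{(-1)^{j+1}x_{j}}{2\pi i}$ and summing over $j$ in \eqref{final formula diff identity int section} produces \eqref{partialrlogdet}.

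The substantive points I expect to spend time on are: (i) the behaviour of $\Phi_{\mathrm{Be}}$ at $0$, which is where the logarithmic singularity of $\Psi$ at $y_{j}$ is absorbed and which makes the limit in the leading term the finite constant $2\pi i$; (ii) the careful bookkeeping of the branch cuts, orientations and $\sigma_{3}$-conjugations entering the parametrices at $y_{2}$ and $y_{3}$, which is what fixes the precise matrices $E_{y_{j}}(y_{j})$ in the statement (and the cancellation of signs for $j=2$); and (iii) justifying the interchange of the $r\to+\infty$ expansion with the limit $z\to y_{j}$, which follows from the uniformity on the closed disks of the small-norm estimate \eqref{large r asymptotics for R} and the analyticity there of $R^{(1)}$, $E_{y_{j}}$ and $f_{y_{j}}$. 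Once the reduction $\Psi(rz;\cdot)=A^{-1}R\,E_{y_{j}}\,\Phi_{y_{j}}(r^{3}\tilde f_{y_{j}})$ is established and the behaviour of $\Phi_{\mathrm{Be}}$ at the origin is recorded, the remaining manipulations are routine.
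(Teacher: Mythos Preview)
Your proposal is correct and follows essentially the same approach as the paper: invert the chain $\Psi\mapsto T\mapsto S\mapsto R$ inside $\mathbb{D}_{y_j}$ outside the lenses, compute $(\Psi^{-1}\Psi')_{21}$, and pass to the limit $z\to y_j$ using the behavior of $\Phi_{\mathrm{Be}}$ at the origin together with the small-norm expansion of $R$. The only cosmetic difference is in the error bound: the paper exploits the finer column structure $E_{y_j}(y_j)=\begin{pmatrix}\bigO(r^{3/4})&\bigO(r^{-3/4})\\\bigO(r^{3/4})&\bigO(r^{-3/4})\end{pmatrix}$ directly, whereas you use $\det E_{y_j}\equiv 1$ with the coarse entrywise bound $\bigO(r^{3/4})$---both yield the same $\bigO(r^{-5/2})$.
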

\begin{proof}
Using the transformations $\Psi \to T \to S \to R$ given by \eqref{def of T}, \eqref{def:S}, \eqref{errorMatrix}, we see that, for $z$ inside $\mathbb{D}_j$, $j = 1,2,3$, but outside the lenses, we have
\begin{align*}
\Psi(rz) = \begin{pmatrix}
r^{\frac{1}{4}} & -2ir^{\frac{7}{4}}g_{1} \\
0 & r^{-\frac{1}{4}}
\end{pmatrix} R(z)P^{(y_{j})}(z) e^{r^{\frac{3}{2}}g(z)\sigma_{3}}.
\end{align*}
Thus, for $z$ inside $\mathbb{D}_j$ but outside the lenses, we obtain
\begin{align*}
(\Psi^{-1}\Psi')(rz) = &\; \frac{1}{r}e^{-r^{\frac{3}{2}}g(z)\sigma_{3}}\Big\{ r^{\frac{3}{2}}g'(z)\sigma_{3} + P^{(y_j)}(z)^{-1}P^{(y_j)\prime}(z) 
	\\
& + P^{(y_j)}(z)^{-1}R(z)^{-1}R'(z)P^{(y_j)}(z) \Big\}e^{r^{\frac{3}{2}}g(z)\sigma_{3}}.
\end{align*}
Consequently, for $j=1,2,3,$ we have
\begin{align}\label{Kj expression 1}
& K_j = \frac{(-1)^{j+1}x_j}{2\pi i r} \lim_{z \to y_j} 
e^{2r^{\frac{3}{2}}g_{+}(z)}\Big[P_{+}^{(y_j)}(z)^{-1}P_{+}^{(y_j)\prime}(z) 
 + P_{+}^{(y_j)}(z)^{-1}R(z)^{-1}R'(z)P_{+}^{(y_j)}(z) \Big]_{21},
\end{align}
where again the limits are taken such that $z \in (0,y_{2})\cup (y_{1},+\infty)$, and we write $R(z)$ instead of $R_+(z)$ because $R(z)$ has no jumps on $(0,y_{2})\cup (y_{1},+\infty)$. 
The expressions \eqref{Py1}, \eqref{Py2}, \eqref{P0} for $P^{(y_{j})}(z)$, $j=1,2,3,$ are of the form
\begin{align*}
& P^{(y_j)}(z)=E_{y_{j}}(z)\Phi_{\text{Be}}\left(r^{3} f_{y_j}(z) \right)e^{-r^{\frac{3}{2}}g(z)\sigma_3}, \qquad j=1,3 \\
& P^{(y_{2})}(z) = E_{y_{2}}(z)  \sigma_{3} \Phi_{\mathrm{Be}}(-r^{3} f_{y_{2}}(z))\sigma_{3} e^{-r^{\frac{3}{2}}g(z)\sigma_{3}},
\end{align*}
so we can write \eqref{Kj expression 1} as
\begin{align*}
K_j = &\; \frac{x_j}{2\pi i r} \lim_{z \downarrow y_j} 
\Big[   
\Phi_{\text{Be}}(r^{3}f_{y_{j}}(z))^{-1}
E_{y_j}(z)^{-1}
E_{y_j}'(z)\Phi_{\text{Be}}(r^{3}f_{y_{j}}(z))
	\\
& +  
\Phi_{\text{Be}}(r^{3}f_{y_{j}}(z))^{-1}
\Phi_{\text{Be}}'(r^{3}f_{y_{j}}(z)) r^{3}f_{y_j}'(z)
- r^{\frac{3}{2}}g'(z)\sigma_3
	\\
& + 
\Phi_{\text{Be}}(r^{3}f_{y_{j}}(z))^{-1}
E_{y_j}(z)^{-1}
R(z)^{-1}R'(z)E_{y_j}(z)\Phi_{\text{Be}}(r^{3}f_{y_{j}}(z)) \Big]_{21},  \qquad j=1,3,
	\\
K_2 = & \frac{-x_2}{2\pi i r}\lim_{z \uparrow y_2} 
\Big[   
\sigma_3\Phi_{\text{Be}}(-r^{3}f_{y_{2}}(z))^{-1}\sigma_3
E_{y_2}(z)^{-1} E_{y_2}'(z) \sigma_3 \Phi_{\text{Be}}(-r^{3}f_{y_{2}}(z)) \sigma_3 
	\\
& + 
\sigma_3\Phi_{\text{Be}}(-r^{3}f_{y_{2}}(z))^{-1}
\Phi_{\text{Be}}'(-r^{3}f_{y_{2}}(z))\sigma_3 (-r^{3}f_{y_2}'(z))
- r^{\frac{3}{2}}g'(z)\sigma_3
	\\
& +  
\sigma_3\Phi_{\text{Be}}(-r^{3}f_{y_{2}}(z))^{-1}
\sigma_3E_{y_2}(z)^{-1}
R(z)^{-1}R'(z)E_{y_2}(z)\sigma_3\Phi_{\text{Be}}(r^{3}f_{y_{2}}(z))\sigma_3\Big]_{21}.
\end{align*}
Note that the term $- r^{\frac{3}{2}}g'(z)\sigma_3$ does not contribute for $K_{j}$, since $(\sigma_{3})_{21} = 0$. As $z \to y_j$ with $z$ outside the lenses, we have $(-1)^{j+1}f_{y_j}(z) \to 0$ with $|\arg [ (-1)^{j+1}f_{y_j}(z)]| < \frac{2\pi}{3}$ for each $j = 1,2,3$. Thus, using \eqref{asymp Bessel at 0 3}, we have
\begin{align*}
\lim_{z \to y_j} \Big[\Phi_{\text{Be}}\big((-1)^{j+1}r^{3}f_{y_{j}}(z)\big)^{-1} \Phi_{\text{Be}}'\big((-1)^{j+1}r^{3}f_{y_{j}}(z)\big)\Big]_{21} = 2\pi i, \quad j=1,2,3.
\end{align*}
Moreover, $E_{y_j}(z)$ and $R(z)$ are analytic for $z \in \mathbb{D}_{y_j}$. Hence, using \eqref{asymp Bessel at 0 1}-\eqref{asymp Bessel at 0 3} we have
 \begin{align*}
K_j = &\; \frac{(-1)^{j+1}x_j}{2\pi i r} \Big\{
\Big[ E_{y_j}(y_j)^{-1} E_{y_j}'(y_j)\Big]_{21}
 +  2\pi i r^{3} f_{y_j}'(y_j)
+ \Big[E_{y_j}(y_j)^{-1} R(y_j)^{-1}R'(y_j)E_{y_j}(y_j)\Big]_{21}\Big\}
\end{align*}
for $j =1,2,3$. We also note that $f_{y_j}'(y_j) =  c_{y_j}$ for $j = 1,2,3$. Furthermore, by \eqref{large r asymptotics for R} and \eqref{integral formula for R}, we have
\begin{align*}
R^{-1}(z)R'(z)=\left(I+\bigO(r^{-\frac{3}{2}})\right)\left(\frac{R^{(1)\prime}(z)}{r^\frac{3}{2}}+\bigO(r^{-3})\right), \qquad \mbox{as } r \to + \infty,
\end{align*}
and by \eqref{Ey1y1}, \eqref{Ey2y2} and \eqref{Ey3y3}, we have
\begin{align*}
& E_{y_j}(y_j) = \begin{pmatrix} \bigO(r^{\frac{3}{4}}) & \bigO(r^{-\frac{3}{4}}) \\ \bigO(r^{\frac{3}{4}}) & \bigO(r^{-\frac{3}{4}}) \end{pmatrix}, \qquad r \to +\infty, \ j = 1,2,3,
	\\
& E_{y_j}(y_j)^{-1} = \begin{pmatrix} \bigO(r^{-\frac{3}{4}}) & \bigO(r^{-\frac{3}{4}}) \\\bigO(r^{\frac{3}{4}}) & \bigO(r^{\frac{3}{4}}) \end{pmatrix}, \qquad r \to +\infty, \ j = 1,2,3.
\end{align*}
We infer that, for $j =1,2,3$,
\begin{align*}
K_j =  \frac{(-1)^{j+1}x_j}{2\pi i r} \Big\{
 2\pi i c_{y_j} r^3 
+\Big[ E_{y_j}(y_j)^{-1} E_{y_j}'(y_j)\Big]_{21}
 +  r^{-\frac{3}{2}} \Big[E_{y_j}(y_j)^{-1} R^{(1)\prime}(y_j)E_{y_j}(y_j)\Big]_{21}
+ \bigO(r^{-\frac{3}{2}})\Big\}.
\end{align*}
The proposition follows by summing from $j= 1$ to $j =3$.
\end{proof}

\subsection{Evaluation of the first term}

\begin{proposition}\label{prop: leading term}
The first term on the right-hand side of (\ref{partialrlogdet}) is given by
\begin{align} \label{leading term in prop}
r^2 \sum_{j=1}^3 (-1)^{j+1} x_j c_{y_j}
& = c \, \partial_{r} r^{3},
\end{align}
where $c$ is defined in \eqref{def of C1}.
\end{proposition}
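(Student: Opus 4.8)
The plan is to reduce \eqref{leading term in prop} to an identity among the elementary symmetric functions $\sigma_i := e_i(x_1,x_2,x_3)$. First I would pass from the $y$-variables to the $x$-variables using Remark \ref{remark: x/y plane}: since $q(z) = p(z-x_3)$ and $y_j = x_j - x_3$, we have $p(y_j) = q(x_j)$, together with $y_1(y_1-y_2) = (x_1-x_3)(x_1-x_2)$, $y_2(y_1-y_2) = (x_2-x_3)(x_1-x_2)$ and $y_1 y_2 = (x_1-x_3)(x_2-x_3)$. Substituting these into the formulas \eqref{xiy1_expansion}, \eqref{xiy2_expansion}, \eqref{xiy3_expansion} for $c_{y_1},c_{y_2},c_{y_3}$ and tracking signs, one checks that
\[
\sum_{j=1}^3 (-1)^{j+1}x_j c_{y_j} \;=\; \sum_{j=1}^3 \operatorname*{Res}_{z=x_j}\frac{z\, q(z)^2}{(z-x_1)(z-x_2)(z-x_3)},
\]
the factors $(-1)^{j+1}$ being exactly those produced by the signs of $\prod_{k\neq j}(x_j-x_k)$.

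Next, $z\mapsto z q(z)^2 / \prod_k(z-x_k)$ is a rational function on the Riemann sphere $\widehat{\C}$, so the sum of all its residues (including the one at $\infty$) vanishes; hence the displayed sum equals $-\operatorname*{Res}_{z=\infty}$ of the same function. Writing $q(z) = -z^2 + \tfrac{\sigma_1}{2}z - q_0$ and $\prod_k(z-x_k) = z^3-\sigma_1 z^2 + \sigma_2 z - \sigma_3$, a one-line expansion at infinity (equivalently, two steps of polynomial division, with quotient $z^2 + \tfrac{\sigma_1^2}{4}+2q_0-\sigma_2$) gives
\[
\frac{z\,q(z)^2}{(z-x_1)(z-x_2)(z-x_3)} = z^2 + \Big(\tfrac{\sigma_1^2}{4}+2q_0-\sigma_2\Big) + \frac{S}{z} + \bigO(z^{-2}),\qquad S = \tfrac{\sigma_1^3}{4} - \sigma_1\sigma_2 + \sigma_3 + \sigma_1 q_0,
\]
so that $\sum_{j=1}^3 (-1)^{j+1}x_j c_{y_j} = S$.

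Finally I would match $S$ with $3c$. Using Newton's identity $x_1^3+x_2^3+x_3^3 = \sigma_1^3 - 3\sigma_1\sigma_2 + 3\sigma_3$ and the factorization $(x_1+x_2)(x_1+x_3)(x_2+x_3) = \sigma_1\sigma_2 - \sigma_3$, the numerator of the first fraction in \eqref{def of C1} equals $\sigma_1^3 - 4\sigma_1\sigma_2 + 4\sigma_3$, whence
\[
3c = \frac{\sigma_1^3 - 4\sigma_1\sigma_2+4\sigma_3}{4} + \sigma_1 q_0 = \tfrac{\sigma_1^3}{4} - \sigma_1\sigma_2 + \sigma_3 + \sigma_1 q_0 = S.
\]
Since $\partial_r r^3 = 3 r^2$, multiplying the identity $\sum_{j=1}^3(-1)^{j+1}x_jc_{y_j}=S=3c$ by $r^2$ yields \eqref{leading term in prop}. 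The only points requiring care are the bookkeeping of signs when converting the three fractions into the residue sum, and the observation that $\operatorname*{Res}_{z=\infty}$ extracts precisely the $z^{-1}$-coefficient after removing the polynomial part; neither is a genuine obstacle, and the whole argument is essentially a short symmetric-function computation.
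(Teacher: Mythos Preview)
Your argument is correct. The paper takes a more direct route: it expands $\sum_j (-1)^{j+1} x_j\, p(y_j)^2/(\ldots)$ by brute force in the $y$-variables using the coefficients of $p$ (which involve $g_1$), simplifies to the closed form $\tfrac{1}{24}\big((x_1-x_2)^2(x_1+x_2-x_3)+2x_3^3+8g_1\sigma_1\big)$, and only then converts $g_1$ to $q_0$ via \eqref{g1 to q0 rel}. You instead shift to the $x$-plane first so that $q$ and $q_0$ appear from the outset, recognize the sum as $\sum_j \operatorname*{Res}_{z=x_j}\frac{z\,q(z)^2}{\prod_k(z-x_k)}$, and use the residue theorem on $\widehat{\C}$ to reduce everything to reading off the $z^{-1}$-coefficient at infinity. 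Your approach is cleaner and more structural---the residue trick explains \emph{why} the answer is a polynomial in $\sigma_1,\sigma_2,\sigma_3,q_0$ and would generalize immediately to more intervals---while the paper's hands-on expansion keeps the computation closer to the objects (like $g_1$) actually produced by the steepest-descent analysis.
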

\begin{proof}
By direct computation using the explicit expressions
\begin{align*}
c_{y_1} = \frac{p(y_1)^2}{y_1(y_1-y_2)}, \qquad
c_{y_2} = \frac{p(y_2)^2}{y_2(y_1-y_2)}, \qquad
c_{y_3} = \frac{p(y_3)^2}{y_1 y_2}, 
\end{align*}
and the definition of $p(z)$ given by \eqref{p}, we obtain
\begin{align}
r^2 \sum_{j=1}^3 (-1)^{j+1} x_j c_{y_j} & =  \frac{1}{3}\left( \frac{x_{1}p(y_{1})^{2}}{y_{1}(y_{1}-y_{2})} - \frac{x_{2}p(y_{2})^{2}}{y_{2}(y_{1}-y_{2})} + \frac{x_{3} p(y_{3})^{2}}{y_{1}y_{2}}  \right)\partial_{r} r^{3} \nonumber \\
& = \frac{1}{24} \Big((x_1-x_2)^2 (x_1+x_2-x_3)+2 x_3^3 + 8 g_1 (x_1+x_2+x_3)\Big)\partial_{r} r^{3}. \label{leading term proof}
\end{align}
We note that $\mathcal{Q}, q$ defined in \eqref{def of Q intro}, \eqref{def of p intro} have the relations with $\mathcal{R}, p$
\begin{align*}
\mathcal{Q}(z) = \mathcal{R}(z-x_{3}) \quad \mbox{ and } \quad q(z) = p(z-x_{3}).
\end{align*}
Therefore, $g_{1}$ given by \eqref{def of g1 2cuts} and $q_{0}$ given by \eqref{def of q0} are related by
\begin{align}\label{g1 to q0 rel}
g_{1} = q_{0}-\frac{2x_{1}x_{2}+2x_{1}x_{3}+2x_{2}x_{3}-x_{1}^{2}-x_{2}^{2}}{8}.
\end{align}
We obtain \eqref{leading term in prop} after substituting \eqref{g1 to q0 rel} into \eqref{leading term proof}.
\end{proof}

\subsection{Evaluation of the second term}

\begin{proposition}\label{prop: decomposition of part with Rp1p}
The second term on the right-hand side of (\ref{partialrlogdet}) is given by
\begin{multline}\label{secontermexpression}
 \frac{1}{2\pi i r^{\frac{5}{2}}}\sum_{j=1}^3 (-1)^{j+1} x_j\Big[E_{y_j}(y_j)^{-1} R^{(1)\prime}(y_j) E_{y_j}(y_j)\Big]_{21}   \\
 = \frac{1}{2\pi i r^{\frac{5}{2}}}\sum_{j=1}^3 (-1)^{j} x_j (\mathcal{J}_{y_{j}})_{y_{j}}^{(1)} + \frac{1}{2\pi i r^{\frac{5}{2}}}\sum_{j=1}^3 (-1)^{j} x_j\sum_{\substack{j'=1 \\ j'\neq j}}^3 \frac{1}{(y_{j}-y_{j'})^{2}} (\mathcal{J}_{y_{j}})_{y_{j'}}^{(-1)}.
\end{multline}
\end{proposition}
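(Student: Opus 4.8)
The plan is to obtain \eqref{secontermexpression} as a direct substitution: express $R^{(1)\prime}(y_j)$ through the Laurent coefficients of $J_R^{(1)}$, conjugate by the prefactors $E_{y_j}(y_j)^{\pm1}$, take the $(2,1)$-entry, and collect terms. First I would recall \eqref{der of Rp1p at yj}, which reads
\begin{align*}
R^{(1)\prime}(y_{j})= -(J_{R}^{(1)})_{y_{j}}^{(1)} - \sum_{\substack{j'=1 \\ j'\neq j}}^3\frac{1}{(y_{j}-y_{j'})^{2}}(J_{R}^{(1)})_{y_{j'}}^{(-1)}, \qquad j=1,2,3,
\end{align*}
and which follows from the integral representation \eqref{expression for Rp1p as integral} for $R^{(1)}$ by a residue computation, using that $J_R^{(1)}$ continues analytically from $\cup_{j}\partial\mathbb{D}_{y_j}$ into the punctured closed disks $\cup_{j}(\overline{\mathbb{D}}_{y_j}\setminus\{y_j\})$ with at worst a simple pole at each $y_{j'}$, cf.\ \eqref{explicit expression for the jumps Jrp1p}.

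Next, I would apply to both sides of this identity the linear map $X\mapsto\bigl[E_{y_j}(y_j)^{-1}X\,E_{y_j}(y_j)\bigr]_{21}$. Since this map sends, by the definition \eqref{Jcal coeff def}, each coefficient $(J_{R}^{(1)})_{y_{j'}}^{(k)}$ to $(\mathcal{J}_{y_j})_{y_{j'}}^{(k)}$, we get
\begin{align*}
\Big[E_{y_j}(y_j)^{-1} R^{(1)\prime}(y_j)E_{y_j}(y_j)\Big]_{21} = -(\mathcal{J}_{y_{j}})_{y_{j}}^{(1)} - \sum_{\substack{j'=1 \\ j'\neq j}}^3\frac{1}{(y_{j}-y_{j'})^{2}}(\mathcal{J}_{y_{j}})_{y_{j'}}^{(-1)}, \qquad j=1,2,3.
\end{align*}
It then only remains to multiply by $\frac{(-1)^{j+1}x_j}{2\pi i r^{\frac{5}{2}}}$, sum over $j=1,2,3$, and use $(-1)^{j+1}(-1)=(-1)^{j}$ to separate the resulting expression into the two sums appearing on the right-hand side of \eqref{secontermexpression}.

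I do not expect any real obstacle in this proposition; it is a bookkeeping reformulation of the second term of \eqref{partialrlogdet}. Its purpose is precisely that the quantities \eqref{sandwich j} enjoy heavy cancellations between $R^{(1)\prime}(y_j)$ and the factors $E_{y_j}(y_j)^{\pm1}$, so that after this step one only needs the coefficients $(\mathcal{J}_{y_j})_{y_{j'}}^{(k)}$, which were already computed in Section~\ref{section:smallnorm} in terms of the expansion data of $P^{(\infty)}$ and of the $\theta$-function. The genuinely hard work comes afterwards: simplifying the right-hand side of \eqref{secontermexpression} by means of the $\theta$-function and Abel map identities of Section~\ref{Section: theta identities} --- and, eventually, Riemann's bilinear identity together with the evaluation of certain elliptic integrals --- in the remainder of Section~\ref{section:integration1}.
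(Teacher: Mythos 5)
Your proof is correct and takes essentially the same route as the paper: substitute the residue formula \eqref{der of Rp1p at yj} for $R^{(1)\prime}(y_j)$, conjugate by $E_{y_j}(y_j)^{\pm1}$, take the $(2,1)$-entry using the definition \eqref{Jcal coeff def}, and sum over $j$ with the prefactors.
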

\begin{proof}
The claim follows after substituting the expression \eqref{der of Rp1p at yj} for $R^{(1)\prime}(y_j)$ into the left-hand side of \eqref{secontermexpression} and by using the definition \eqref{Jcal coeff def} of $(\mathcal{J}_{y_{j}})_{y_{j'}}^{(k)}$.
\end{proof}

\begin{proposition}\label{prop: log r term first term}
The first term on the right-hand side of (\ref{secontermexpression}) is given by
\begin{align}\label{Part 1 of log r term in prop}
\frac{1}{2\pi i r^{\frac{5}{2}}}\sum_{j=1}^3 (-1)^{j} x_j (\mathcal{J}_{y_{j}})_{y_{j}}^{(1)} = & -\frac{1}{16}\sum_{j=1}^{3} \frac{x_{j} p'(y_{j})}{p(y_{j})} \partial_{r}\log{r}.
\end{align}
\end{proposition}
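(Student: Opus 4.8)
The plan is to compute the quantity $(\mathcal{J}_{y_j})_{y_j}^{(1)}$ for each $j=1,2,3$ using the explicit formula derived in Section~\ref{section:smallnorm} (which is written out only for $j=1$ in the excerpt, but which is asserted to be available for $j=2,3$ as well), then simplify each of the three resulting expressions using the $\theta$-function identities of Corollary~\ref{coro:identities without nu}, and finally assemble the alternating sum $\sum_{j} (-1)^j x_j (\mathcal{J}_{y_j})_{y_j}^{(1)}$ and match it against the claimed right-hand side. Recall from the excerpt that $(\mathcal{J}_{y_1})_{y_1}^{(1)}$ involves only $c_{y_1}^{(2)}$, $\beta_{y_1}^{(1/4)}$, $\beta_{y_1}^{(5/4)}$, $\varphi_{y_1}^{(1/2)}$, and the values $\mathcal{G}(0), \mathcal{G}'(0), \mathcal{G}''(0)$, where $\mathcal{G}(z) = \theta(z+\nu)/\theta(z)$. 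The key point driving the whole calculation is that after one substitutes the $r^{3/2}$ prefactor from each $(\mathcal{J}_{y_j})_{y_j}^{(1)}$ against the $r^{-5/2}$ prefactor in \eqref{Part 1 of log r term in prop}, one is left with an $r^{-1}$, i.e.\ with a $\partial_r \log r$ contribution (since $\partial_r \log r = r^{-1}$); so the claim is really an identity between $r$-independent constants.

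The first step is to convert the $\mathcal{G}$-derivatives at $0$ into $\theta$-quantities. Writing $\mathcal{G}(z) = \theta(z+\nu)/\theta(z)$, using $\theta'(0)=\theta^{(3)}(0)=0$ (recorded in the excerpt), we get $\mathcal{G}(0) = \theta(\nu)/\theta(0)$, $\mathcal{G}'(0) = \theta'(\nu)/\theta(0)$, and $\mathcal{G}''(0) = \theta''(\nu)/\theta(0) - \theta(\nu)\theta''(0)/\theta(0)^2$. The crucial cancellation I expect is that the combination $\mathcal{G}'(0)^2 + 3\mathcal{G}(0)\mathcal{G}''(0)$ appearing in $(\mathcal{J}_{y_1})_{y_1}^{(1)}$, once divided by $\mathcal{G}(0)^2$ as it is there, produces a piece depending on $\nu$ (through $\theta$-ratios at $\nu$) that must ultimately drop out of the \emph{alternating} sum over $j$ — leaving only the geometric/elliptic data $p'(y_j)/p(y_j)$. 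This is the mechanism behind the ``no oscillations of order $\log r$'' statement mentioned in the introduction. So the strategy is: (i) express each $(\mathcal{J}_{y_j})_{y_j}^{(1)}$ in the form $(\text{elliptic part}) + (\text{$\nu$-dependent part})$, using Corollary~\ref{coro:identities without nu} to rewrite $\mathcal{G}^{(k)}(\tfrac{\tau}{2})$, $\widetilde{\mathcal G}^{(k)}(\tfrac{1+\tau}{2})$ etc.\ (which enter $(\mathcal{J}_{y_2})_{y_2}^{(1)}$ and $(\mathcal{J}_{y_3})_{y_3}^{(1)}$) in terms of $\theta$ at $\nu$ and at $0$; (ii) show the $\nu$-dependent parts cancel in $\sum_j (-1)^j x_j (\cdots)$; (iii) simplify the surviving elliptic part.

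For step (iii), I would use the explicit forms of the expansion coefficients: $\beta_{y_j}^{(\cdot)}$ and $\varphi_{y_j}^{(\cdot)}$ are given in Subsections~\ref{section: expansion Pinf at y1}–\ref{section: expansion Pinf at y3} in terms of $y_1, y_2, c_0$, and $c_{y_j}, c_{y_j}^{(2)}$ are given in \eqref{xiy1_expansion}, \eqref{xiy2_expansion}, \eqref{xiy3_expansion} in terms of $p(y_j), p'(y_j), p''(y_j)$ and $y_1,y_2$. Inspecting $(\mathcal{J}_{y_1})_{y_1}^{(1)}$: the prefactor $\tfrac{i\pi r^{3/2}}{16\beta_{y_1}^{(1/4)}\mathcal{G}(0)^2}$ times $3c_{y_1}^{(2)}\beta_{y_1}^{(1/4)}\mathcal{G}(0)^2$ gives $\tfrac{3i\pi r^{3/2}}{16}c_{y_1}^{(2)}$, and $c_{y_1}^{(2)} = \tfrac{2p'(y_1)}{3p(y_1)} - \tfrac{2y_1-y_2}{3y_1(y_1-y_2)}$, so the $p'(y_1)/p(y_1)$ term emerges with coefficient $\tfrac{i\pi r^{3/2}}{8}\cdot\tfrac{p'(y_1)}{p(y_1)}$; dividing by $2\pi i r^{5/2}$ and multiplying by $(-1)^1 x_1 = -x_1$ gives exactly $-\tfrac{1}{16}\tfrac{x_1 p'(y_1)}{p(y_1)}r^{-1} = -\tfrac{1}{16}\tfrac{x_1 p'(y_1)}{p(y_1)}\partial_r\log r$, which is the $j=1$ term on the right-hand side of \eqref{Part 1 of log r term in prop}. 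I then need the analogous bookkeeping to confirm that \emph{all} the remaining pieces — the $-2(6\beta_{y_1}^{(5/4)}/\beta_{y_1}^{(1/4)})$ term and the $-2(\varphi_{y_1}^{(1/2)})^2(\mathcal{G}'(0)^2 + 3\mathcal{G}(0)\mathcal{G}''(0))/\mathcal{G}(0)^2$ term from $j=1$, together with everything from $j=2,3$ — cancel in the alternating sum, apart from contributing the $j=2$ and $j=3$ copies of $-\tfrac{1}{16}\tfrac{x_j p'(y_j)}{p(y_j)}$.

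The main obstacle will be step (ii)/the bookkeeping in step (iii): the expressions $(\mathcal{J}_{y_2})_{y_2}^{(1)}$ and $(\mathcal{J}_{y_3})_{y_3}^{(1)}$ are stated to be ``significantly longer'' than $(\mathcal{J}_{y_1})_{y_1}^{(1)}$ and are not written down, so reconstructing them and verifying the massive cancellation of the $\beta^{(5/4)}$-type terms and the $\nu$-dependent $\theta$-ratio terms across $j=1,2,3$ is where the real work lies. The organizing principle that makes it tractable is that the non-$p'/p$ contributions are precisely the ones that would produce either an $r^{3/2}$-divergence (impossible, by the structure of the final answer) or a $\nu$-dependence in the $\log r$ coefficient (impossible, since the $\log r$ term in Theorem~\ref{thm: main result} has the $\nu$-independent coefficient $-\tfrac12$); so one expects, and should verify via the identities of Section~\ref{Section: theta identities} (in particular the relations in Corollary~\ref{coro:identities without nu} linking $\theta(\tfrac{\tau}{2}), \theta(\tfrac12), \theta'(\tfrac{1+\tau}{2}), \theta''(\tfrac\tau2)$, etc.\ to $\theta(0), \theta''(0)$ and the elliptic data $y_1,y_2,c_0$), that every such term cancels. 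Once that cancellation is established, the residual identity
\[
\frac{1}{2\pi i r^{5/2}} \sum_{j=1}^3 (-1)^j x_j \cdot \frac{i\pi r^{3/2}}{8}\, \frac{2 p'(y_j)}{3 p(y_j)} \cdot \frac{3}{2} \;=\; -\frac{1}{16}\sum_{j=1}^3 \frac{x_j p'(y_j)}{p(y_j)}\, r^{-1}
\]
is just arithmetic, and $r^{-1} = \partial_r \log r$ finishes the proof.
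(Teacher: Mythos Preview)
Your overall strategy matches the paper's: compute each $(\mathcal{J}_{y_j})_{y_j}^{(1)}$, simplify via $\theta$-identities, and observe that the result is $\theta$-free. However, there is a genuine gap in your choice of tools. You propose to use Corollary~\ref{coro:identities without nu} to rewrite $\mathcal{G}^{(k)}(\tfrac{\tau}{2})$ and $\widetilde{\mathcal{G}}^{(k)}(\tfrac{1+\tau}{2})$ ``in terms of $\theta$ at $\nu$ and at $0$,'' but that corollary only relates $\theta^{(k)}$ at the four half-periods $0,\tfrac12,\tfrac{\tau}{2},\tfrac{1+\tau}{2}$ to one another; it says nothing about $\theta^{(k)}(\nu+\tfrac{\tau}{2})$ or $\theta^{(k)}(\nu+\tfrac{1+\tau}{2})$, which are precisely the numerators of the $\mathcal{G}$-quantities entering $(\mathcal{J}_{y_2})_{y_2}^{(1)}$ and $(\mathcal{J}_{y_3})_{y_3}^{(1)}$. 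Without a way to relate these shifted values back to $\theta^{(k)}(\nu)$, there is no mechanism for cancellation against the $\theta^{(k)}(\nu)$ terms coming from $j=1$. The missing ingredient is the family \eqref{thetaD123a nu}--\eqref{thetaD123c nu} together with its $\nu$-derivatives: these express $\theta(\nu+\tfrac{\tau}{2})^2$, $\theta(\nu+\tfrac{1+\tau}{2})^2$, $\theta(\nu+\tfrac12)^2$ as explicit multiples of $\theta(\nu)^2$ (involving $a=\varphi_A^{-1}(\nu)$), and differentiating them up to third order in $\nu$ is what puts the $j=2,3$ contributions on the same footing as $j=1$. The paper's own proof explicitly invokes both Corollary~\ref{coro:identities without nu} and the $\nu$-derivatives of \eqref{thetaD123a nu}--\eqref{thetaD123c nu}.

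A secondary caution: your extraction of the $p'(y_1)/p(y_1)$ piece from $3c_{y_1}^{(2)}$ is correct, but you should not assume the same bookkeeping carries over verbatim to $j=2,3$. The local parametrix at $y_2$ is built from $\sigma_3\Phi_{\mathrm{Be}}(-r^3 f_{y_2})\sigma_3$ rather than $\Phi_{\mathrm{Be}}(r^3 f_{y_j})$, and the extra $\sigma_3$ conjugation and opposite-sign argument change the structure of $J_R^{(1)}$ on $\partial\mathbb{D}_{y_2}$ (see \eqref{explicit expression for the jumps Jrp1p}). These sign differences are exactly what reconciles the alternating $(-1)^j$ in the sum with the non-alternating right-hand side of \eqref{Part 1 of log r term in prop}; your final ``residual identity'' tacitly assumes this reconciliation without checking it.
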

\begin{proof}
This follows from involved computations which are quite long and thus will be omitted here. We use the identities of Corollary \ref{coro:identities without nu} to express $\theta^{(j)}(\frac{1}{2})$, $\theta^{(j)}(\frac{\tau}{2})$ and $\theta^{(j)}(\frac{1+\tau}{2})$, $j=0,1,2,3,$ in terms of only $\theta^{(j)}(0)$, $j=0,2$. We also use the $j$-th derivative with respect to $\nu$ of \eqref{thetaD123a nu}-\eqref{thetaD123c nu}, $j=0,1,2,3,$ to express $\theta^{(j)}(\nu+\frac{1}{2})$, $\theta^{(j)}(\nu+\frac{\tau}{2})$ and $\theta^{(j)}(\nu+\frac{1+\tau}{2})$, $j=0,1,2,3,$ in terms of only $\theta^{(j)}(\nu)$, $j=0,1,2,3$. Remarkably, the final expression on the right-hand side of \eqref{Part 1 of log r term in prop} does not contain the $\theta$-function.
\end{proof}
\begin{definition}
Let $\varphi_{\mathbb{C}}^{-1}(u)$ be the projection into the complex plane of $\varphi_{A}^{-1}(u)$. We define $a = a(r)$ by
\begin{align}\label{def of a in prop}
a := \varphi_{\mathbb{C}}^{-1}(\nu) = \varphi_{\mathbb{C}}^{-1}\Big(-\frac{\Omega }{2\pi}r^\frac{3}{2} \Big).
\end{align}
From the relations $\varphi_{\mathbb{C}}^{-1}(u)=\varphi_{\mathbb{C}}^{-1}(-u)$, $\varphi_{\mathbb{C}}^{-1}(u+1)=\varphi_{\mathbb{C}}^{-1}(u)$ and $\Omega > 0$, we conclude that the function $r \mapsto a(r)$ is oscillatory as $r \to + \infty$. Furthermore, we see that $\varphi_{A}(x)$ is monotone for $x \in (y_{1},+\infty)$, which follows from direct inspection of \eqref{def of u}. Since $\varphi_{A}(y_{1}) = 0$ and $\varphi_{A}(\infty) = \frac{1}{2}$, we conclude that $a(r)$ is bigger than $y_{1}$ for any $r$ and oscillates between $y_{1}$ and $+\infty$.
\end{definition}
\begin{proposition}\label{prop: log r term second term}
The second term on the right-hand side of (\ref{secontermexpression}) is given by
\begin{align}\label{log r term second term in prop}
\frac{1}{2\pi i r^{\frac{5}{2}}}\sum_{j=1}^3 (-1)^{j} x_j\sum_{\substack{j'=1 \\ j'\neq j}}^3 \frac{1}{(y_{j}-y_{j'})^{2}} (\mathcal{J}_{y_{j}})_{y_{j'}}^{(-1)} = n^{(0)}\partial_{r} \log r + n^{(-1)} \partial_{r}\int_{M}^{r} \frac{d\tilde{r}}{\tilde{r}a(\tilde{r})},
\end{align}
where $a=a(r)$ is oscillatory and given by \eqref{def of a in prop}, $M>0$ is a large constant, and $n^{(-1)}, n^{(0)}$ are given by
\begin{align}
& n^{(-1)} = -\frac{1}{32}\left( \frac{1}{p(y_{1})}+\frac{1}{p(y_{2})}+\frac{1}{p(y_{3})} \right)(x_{1}-x_{3})(x_{2}-x_{3})(x_{1}+x_{2}+x_{3}), \label{npm1p prop} \\ 
& n^{(0)} = - \frac{x_{1} \frac{y_{2}}{y_{1}}p(y_{1})-x_{2} \frac{y_{1}}{y_{2}}p(y_{2})}{16 p(y_{3})(x_{1}-x_{2})} - \frac{x_{3} p(y_{3})}{16}\left( \frac{1}{y_{1}p(y_{1})}+\frac{1}{y_{2}p(y_{2})} \right). \label{npop prop}
\end{align}
\end{proposition}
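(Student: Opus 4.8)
The plan is to start from the six explicit formulas for $(\mathcal{J}_{y_{j}})_{y_{j'}}^{(-1)}$, $j\neq j'$, recorded in Section~\ref{section:smallnorm}, and to evaluate the double sum on the left of \eqref{log r term second term in prop} directly. Each coefficient $(\mathcal{J}_{y_{j}})_{y_{j'}}^{(-1)}$ is $r^{3/2}$ times a quantity that depends on $r$ only through $\nu$, and then only through the factors $e^{2\pi i\nu}$, $e^{4\pi i\nu}$ and through ratios of $\mathcal{G}$ and $\widetilde{\mathcal{G}}$ evaluated at the four points $0,\tfrac12,\tfrac\tau2,\tfrac{1+\tau}2$. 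I would first use \eqref{thetaD123a nu}--\eqref{thetaD123c nu} to rewrite $\mathcal{G}(\tfrac\tau2)^{2}$, $\mathcal{G}(\tfrac12)^{2}$ and $\widetilde{\mathcal{G}}(\tfrac{1+\tau}2)^{2}$ in terms of $\theta(\nu)^{2}$, of $a=\varphi_{\mathbb{C}}^{-1}(\nu)$ from \eqref{def of a in prop}, and of the constants $D_{1},D_{2},D_{3}$; doing so cancels all factors of $\theta(\nu)$ and of $e^{\pm2\pi i\nu}$, $e^{\pm4\pi i\nu}$. Then, using Corollary~\ref{coro:identities without nu} together with the values of $D_{1},D_{2},D_{3}$ from Proposition~\ref{prop: Abel map and theta function} to trade the remaining $\theta$-values at the half-periods for $\theta(0)$, $y_{1}$, $y_{2}$, $c_{0}$, each $(\mathcal{J}_{y_{j}})_{y_{j'}}^{(-1)}$ becomes $r^{3/2}$ times an explicit rational function of $a$. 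Hence the left side of \eqref{log r term second term in prop} equals $\tfrac{1}{2\pi i r}$ times an explicit rational function of $a(r)$.

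The six terms group into three pairs indexed by the unordered set $\{y_{j},y_{j'}\}$. For $\{y_{1},y_{2}\}$ only the ratio $\mathcal{G}(\tfrac\tau2)^{2}/\mathcal{G}(0)^{2}$ occurs; by \eqref{thetaD123a nu} and Corollary~\ref{coro:identities without nu} it equals $e^{-2\pi i\nu}y_{1}/a$, so after cancelling $e^{2\pi i\nu}$ this pair contributes a term proportional to $1/a$. For $\{y_{1},y_{3}\}$ the ratio $\widetilde{\mathcal{G}}(\tfrac{1+\tau}2)^{2}/\mathcal{G}(0)^{2}$ appears; by \eqref{thetaD123b nu} it is proportional to $(a-y_{1})/a=1-y_{1}/a$, contributing both a constant and a $1/a$ term. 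The pair $\{y_{2},y_{3}\}$ is the delicate one: it involves the square of the difference $\varphi_{y_{2}}^{(\frac12)}\varphi_{y_{3}}^{(\frac12)}(\beta_{y_{2}}^{(-\frac14)})^{2}\mathcal{G}(\tfrac\tau2)^{2}-(\beta_{y_{3}}^{(\frac14)})^{2}\widetilde{\mathcal{G}}(\tfrac{1+\tau}2)^{2}$ divided by $\mathcal{G}(0)^{2}\mathcal{G}(\tfrac12)^{2}$. After the substitutions, this difference becomes, up to an overall factor $e^{-2\pi i\nu}\theta(\nu)^{2}$, of the form $\tfrac1a\bigl(\alpha-\gamma(a-y_{1})\bigr)$ with $a$-independent constants $\alpha,\gamma$; plugging in the explicit leading expansion coefficients $\beta_{y_{j}}^{(\cdot)}$, $\varphi_{y_{j}}^{(\cdot)}$ from Section~\ref{section: global parametrix} I expect to obtain the identity $\alpha=\gamma(y_{2}-y_{1})$, whence $\alpha-\gamma(a-y_{1})=-\gamma(a-y_{2})$ and the entire ratio collapses to something proportional to $(a-y_{2})^{2}/\bigl(a(a-y_{2})\bigr)=(a-y_{2})/a=1-y_{2}/a$. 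Consequently each of the three pairs — and hence the full sum — is an affine function of $1/a$, equal to $2\pi i\,(n^{(0)}+n^{(-1)}/a)$ for constants $n^{(0)},n^{(-1)}$ obtained as the $a$-independent part and the coefficient of $1/a$, respectively.

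It then remains to simplify $n^{(0)}$ and $n^{(-1)}$ and to conclude. The $a$-independent part gets contributions only from the $\{y_{1},y_{3}\}$ and $\{y_{2},y_{3}\}$ pairs, whereas all three pairs feed the $1/a$ part; carrying out the (routine but lengthy) algebra — that is, writing $c_{y_{j}}=p(y_{j})^{2}/(\cdots)$, $\beta_{y_{j}}^{(\cdot)}$ and $\varphi_{y_{j}}^{(\cdot)}$ explicitly and substituting $y_{j}=x_{j}-x_{3}$ — I expect the three $1/a$-contributions to recombine into the symmetric expression $\bigl(\tfrac1{p(y_{1})}+\tfrac1{p(y_{2})}+\tfrac1{p(y_{3})}\bigr)$ appearing in \eqref{npm1p prop}, and $n^{(0)}$ to reduce to \eqref{npop prop}. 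Finally, since $\tfrac1{2\pi i r}\cdot2\pi i\,(n^{(0)}+n^{(-1)}/a(r))=\tfrac{n^{(0)}}{r}+\tfrac{n^{(-1)}}{r\,a(r)}$, and since $\tfrac1r=\partial_{r}\log r$ while $\tfrac1{r\,a(r)}=\partial_{r}\int_{M}^{r}\tfrac{d\tilde r}{\tilde r\,a(\tilde r)}$ by the fundamental theorem of calculus (using that $a(\tilde r)\geq y_{1}>0$), the formula \eqref{log r term second term in prop} follows.

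The hard part is the identity $\alpha=\gamma(y_{2}-y_{1})$ that forces the $\{y_{2},y_{3}\}$-contribution to be affine in $1/a$: absent this cancellation the sum would carry a genuine $1/a^{2}$ (or more singular) dependence on the oscillatory quantity $a(r)$, and the Proposition would fail, so the collapse is essential rather than cosmetic. Establishing it requires combining the $\theta$-function relations of Section~\ref{Section: theta identities} with the explicit parametrix expansions of Section~\ref{section: global parametrix}, and — crucially — keeping careful track of the branch choices of the fourth roots defining $\beta$ and of the square roots defining $\varphi$ and $\sqrt{\mathcal{R}}$, since a sign error here would already break the affine structure. The secondary difficulty is purely organizational: recognizing, after all cancellations, that the three pair-contributions to $n^{(-1)}$ assemble into the manifestly symmetric form of \eqref{npm1p prop}.
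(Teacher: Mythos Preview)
Your approach is essentially the same as the paper's: both start from the six explicit $(\mathcal{J}_{y_{j}})_{y_{j'}}^{(-1)}$, apply the identities \eqref{thetaD123a nu}--\eqref{thetaD123c nu} together with Corollary~\ref{coro:identities without nu} and the values of $D_{1},D_{2},D_{3}$ to reduce each to $r^{3/2}$ times an explicit rational function of $a=\varphi_{\mathbb C}^{-1}(\nu)$, and then read off that the full sum is affine in $1/a$. The paper simply records the six simplified coefficients (each already of the form $\tfrac{\pi i}{8}\cdot(\text{const})\cdot\tfrac{a-y_{k}}{a}$ or $\tfrac{\pi i}{8}\cdot(\text{const})\cdot\tfrac{1}{a}$) without isolating the $\{y_{2},y_{3}\}$ cancellation you highlight, and then collects terms; your pairwise organization and the explicit identification of the key identity $\alpha=\gamma(y_{2}-y_{1})$ make the mechanism behind the affine structure more transparent, but the underlying computation is identical.
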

\begin{proof}
By using Proposition \ref{p_prop}, Corollary \ref{coro:identities without nu} together with the derivatives with respect to $\nu$ of \eqref{thetaD123a nu}-\eqref{thetaD123c nu}, we can simplify the quantities obtained at the end of Section \ref{section:smallnorm} for $(\mathcal{J}_{y_{j}})_{y_{j'}}^{(-1)}$, $1 \leq j \neq j' \leq 3,$ as follows
\begin{align*}
& (\mathcal{J}_{y_{1}})_{y_{2}}^{(-1)} = - \frac{\pi i}{8} \frac{(y_{1}-y_{2})y_{2} p(y_{1})}{p(y_{2})}\frac{r^\frac{3}{2}}{a}, & & (\mathcal{J}_{y_{2}})_{y_{1}}^{(-1)} = - \frac{\pi i}{8} \frac{(y_{1}-y_{2})y_{1} p(y_{2})}{p(y_{1})}\frac{r^\frac{3}{2}}{a}, \\
& (\mathcal{J}_{y_{1}})_{y_{3}}^{(-1)} = \frac{\pi i}{8}\frac{y_{1} y_{2} p(y_{1})}{(y_{1}-y_{2})p(y_{3})} \frac{r^\frac{3}{2}(a-y_{1})}{a}, & & (\mathcal{J}_{y_{3}})_{y_{1}}^{(-1)} = \frac{\pi i}{8} \frac{y_{1}p(y_3)}{p(y_{1})}\frac{r^\frac{3}{2}(a-y_{1})}{a}, \\
& (\mathcal{J}_{y_{2}})_{y_{3}}^{(-1)} = \frac{\pi i}{8}\frac{y_{1} y_{2} p(y_{2})}{(y_{1}-y_{2})p(y_{3})} \frac{r^\frac{3}{2}(a-y_{2})}{a}, & & (\mathcal{J}_{y_{3}})_{y_{2}}^{(-1)} = \frac{\pi i}{8}\frac{y_{2}p(y_{3})}{p(y_{2})}\frac{r^\frac{3}{2}(a-y_ {2})}{a},
\end{align*}
where $a = a(r)$ depends on $r$ and is given by \eqref{def of a in prop}. Then, a long but straightforward computation gives
\begin{align*}
\frac{1}{2\pi i r^{\frac{5}{2}}}\sum_{j=1}^3 (-1)^{j} x_j\sum_{\substack{j'=1 \\ j'\neq j}}^3 \frac{1}{(y_{j}-y_{j'})^{2}} (\mathcal{J}_{y_{j}})_{y_{j'}}^{(-1)} = n^{(0)}\partial_{r} \log r + n^{(-1)} \partial_{r}\int_{M}^{r} \frac{d\tilde{r}}{\tilde{r}a(\tilde{r})},
\end{align*}
where $n^{(0)}$ is given by \eqref{npop prop} and where $n^{(-1)}$ is given by
\begin{align*}
n^{(-1)} = \frac{1}{16}\left( \frac{1}{p(y_{1})}+\frac{1}{p(y_{2})}+\frac{1}{p(y_{3})} \right)\left( x_{1}p(y_{1}) \frac{x_{2}-x_{3}}{x_{1}-x_{2}}-x_{2}p(y_{2}) \frac{x_{1}-x_{3}}{x_{1}-x_{2}}+x_{3}p(y_{3}) \right).
\end{align*}
Substituting the explicit expression \eqref{p} for $p$ leads to further simplifications and we find  \eqref{npm1p prop}.
\end{proof}
\noindent The next proposition shows that $\int_{M}^{r} \frac{d\tilde{r}}{\tilde{r}a(\tilde{r})}$ is proportional to $\log r$ and contains no oscillations of order $1$. 
\begin{proposition}\label{prop: asymptotic of Abel integral}
As $r \to + \infty$, we have
\begin{align}\label{Abel integral asymptotics}
\int_{M}^{r} \frac{d\tilde{r}}{\tilde{r} \, a(\tilde{r})} = 2 \int_{y_{1}}^{+\infty} \frac{\varphi'(x)}{x}dx \log r + \tilde{C} +  \bigO (r^{-1}),
\end{align}
where $\tilde{C}$ is a constant independent of $r$.
\end{proposition}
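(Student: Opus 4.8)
The plan is to turn the $\tilde r$‑integral into a standard average of a bounded periodic function against $d\mu/\mu$, extract the $\log r$‑coefficient as the mean of that periodic function, and finally identify the mean with the elliptic integral on the right‑hand side. First I would change variables $\tilde r \mapsto \nu = -\frac{\Omega}{2\pi}\tilde r^{3/2}$. Since $\log(-\nu) = \log\frac{\Omega}{2\pi} + \tfrac32\log\tilde r$, we have $\frac{d\tilde r}{\tilde r} = \tfrac23\frac{d\nu}{\nu}$, so with $\nu_M := -\frac{\Omega}{2\pi}M^{3/2}$ and $\nu_r := -\frac{\Omega}{2\pi}r^{3/2}$,
\[
\int_{M}^{r} \frac{d\tilde r}{\tilde r\, a(\tilde r)} = \frac23 \int_{\nu_M}^{\nu_r} \frac{d\nu}{\nu\, \varphi_{\mathbb{C}}^{-1}(\nu)}.
\]
Here $\nu_r < \nu_M < 0$ and $\nu_r \to -\infty$; substituting $\mu = -\nu$ and using that $\varphi_{\mathbb{C}}^{-1}$ is even and $1$‑periodic, the right‑hand side equals $\tfrac23 \int_{a_0}^{b_0} h(\mu)\mu^{-1}\, d\mu$, where $h(\mu) := 1/\varphi_{\mathbb{C}}^{-1}(\mu)$, $a_0 := \frac{\Omega}{2\pi}M^{3/2}$, and $b_0 := \frac{\Omega}{2\pi}r^{3/2} \to +\infty$. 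The function $h$ is continuous, $1$‑periodic, and bounded, because $\varphi_{\mathbb{C}}^{-1}(\mu) \ge y_1 > 0$ for all real $\mu$ (this is the oscillation property of $a(r)$ established in the preceding Definition).

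Next I would invoke the classical asymptotics of $\int_{a_0}^{b_0} h(\mu)\mu^{-1}\, d\mu$ for $h$ bounded and $1$‑periodic. Writing $h = \bar h + (h - \bar h)$ with $\bar h := \int_0^1 h(\mu)\, d\mu$, the constant part contributes $\bar h\log(b_0/a_0)$, while the primitive $H(\mu) := \int_0^\mu (h - \bar h)\, dt$ of the mean‑zero part is bounded and $1$‑periodic, so integration by parts gives
\[
\int_{a_0}^{b_0} \frac{h(\mu)-\bar h}{\mu}\, d\mu = \frac{H(b_0)}{b_0} - \frac{H(a_0)}{a_0} + \int_{a_0}^{b_0} \frac{H(\mu)}{\mu^2}\, d\mu,
\]
which converges to an $r$‑independent constant up to an error $\bigO(1/b_0) = \bigO(r^{-3/2})$. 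Since $\log b_0 = \tfrac32\log r + \log\frac{\Omega}{2\pi}$, this yields $\int_M^r \frac{d\tilde r}{\tilde r\, a(\tilde r)} = \bar h\log r + \tilde C + \bigO(r^{-1})$ for some $r$‑independent constant $\tilde C$ (the stated $\bigO(r^{-1})$ is weaker than what the argument gives, and is retained only for uniformity with the rest of the paper).

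It then remains to identify $\bar h$ with $2\int_{y_1}^{+\infty} \varphi'(x)x^{-1}\, dx$. Since $h$ is even and $1$‑periodic, $\bar h = \int_{-1/2}^{1/2} h(\mu)\, d\mu = 2\int_0^{1/2} \varphi_{\mathbb{C}}^{-1}(\mu)^{-1}\, d\mu$. By \eqref{special values of u} and Figure \ref{fig: mapping Abel map}, $\varphi$ restricts to an increasing bijection of $(y_1,+\infty)$ on the first sheet onto $(0,\tfrac12)$, with $\varphi(y_1)=0$ and $\varphi(\infty)=\tfrac12$, and $\varphi_{\mathbb{C}}^{-1}$ is its inverse there; substituting $x = \varphi_{\mathbb{C}}^{-1}(\mu)$, i.e.\ $\mu = \varphi(x)$, $d\mu = \varphi'(x)\, dx$, gives $\bar h = 2\int_{y_1}^{+\infty} \frac{\varphi'(x)}{x}\, dx$. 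This integral converges: from \eqref{asymp u and beta at y1} one has $\varphi'(x) = \bigO((x-y_1)^{-1/2})$ as $x\downarrow y_1$, and from the expansion $\varphi(z) = \tfrac12 - 2c_0 z^{-1/2} + \bigO(z^{-3/2})$ one has $\varphi'(x) = \bigO(x^{-3/2})$ as $x\to+\infty$, so $\varphi'(x)/x = \bigO(x^{-5/2})$.

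The argument is essentially routine once the right change of variables is in place; the only point requiring care is the bookkeeping of signs through the substitution $\nu\mapsto\mu=-\nu$ (recall $\nu<0$ is decreasing as $r$ grows), together with the observation that all of the oscillations of $a(\tilde r)$ are absorbed into the bounded primitive $H$ and hence contribute only to the constant $\tilde C$, leaving the period‑average $\bar h$ of $1/a$ as the sole source of the $\log r$ term.
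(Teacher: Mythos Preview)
Your proof is correct and takes a genuinely different route from the paper's. The paper argues by hand: it partitions $[M,r]$ at the successive values $r_j = (r_0^{3/2} + j\pi/\Omega)^{2/3}$ where $a(\cdot)$ hits $y_1$ or $+\infty$, estimates each block $\int_{r_j}^{r_{j+1}}$ by first changing to $\nu$ and then to $x = \varphi_{\mathbb{C}}^{-1}(\nu)$, and finally sums the resulting terms $\frac{4\pi}{3\Omega}r_j^{-3/2}\int_{y_1}^{\infty}\varphi'(x)x^{-1}\,dx$ as a Riemann-type series to extract the $\log r$ coefficient. Your approach collapses all of this into a single change of variables followed by the standard averaging lemma for $\int_{a_0}^{b_0} h(\mu)\mu^{-1}\,d\mu$ with $h$ bounded and $1$-periodic, handled by one integration by parts against the bounded periodic primitive $H$. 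The identification of the mean $\bar h$ with $2\int_{y_1}^{\infty}\varphi'(x)x^{-1}\,dx$ via $x = \varphi_{\mathbb{C}}^{-1}(\mu)$ on $(0,\tfrac12)$ is the same in spirit as the paper's block-by-block substitution, just done once globally. Your argument is shorter, avoids the explicit partition and the summation of $\sum r_j^{-3/2}$, and in fact gives the sharper error $\bigO(r^{-3/2})$; the paper's block decomposition is more explicit but heavier. The only point I would make explicit for completeness is that $h$ extends continuously (indeed smoothly) through $\mu \in \tfrac12 + \mathbb{Z}$, since $\varphi_{\mathbb{C}}^{-1}(\mu) \sim 4c_0^2(\tfrac12-\mu)^{-2}$ there, so $h(\mu)\to 0$; this ensures $H$ is genuinely bounded.
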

\begin{proof}
Let $r_{0} > M$ be a large constant smaller than $r$ and such that $a(r_{0})=y_{1}$. Since $\Omega > 0$, the values of $r\geq r_{0}$ for which $a(r) = +\infty$ or $a(r) = y_{1}$ are given by
\begin{align*}
r_{j} := \Big(r_{0}^\frac{3}{2} + j \frac{\pi}{\Omega}\Big)^{\frac{2}{3}}, \qquad j \geq 0.
\end{align*} 
Let us write
\begin{align*}
r = \Big(r_{0}^\frac{3}{2} + 2k \frac{\pi}{\Omega} + \lambda \frac{\pi}{\Omega}\Big)^{\frac{2}{3}}, \quad \mbox{ where } \quad k \in \mathbb{N} \quad \mbox{ and } \quad 0 \leq \lambda < 2,
\end{align*}
so that
\begin{align*}
\int_{M}^{r} \frac{d\tilde{r}}{\tilde{r} \, a(\tilde{r})} = \int_{M}^{r_{0}} \frac{d\tilde{r}}{\tilde{r} \, a(\tilde{r})} + \sum_{j=0}^{k-1} \bigg[ \int_{r_{2j}}^{r_{2j+1}} \frac{d\tilde{r}}{\tilde{r} \, a(\tilde{r})} + \int_{r_{2j+1}}^{r_{2j+2}} \frac{d\tilde{r}}{\tilde{r} \, a(\tilde{r})} \bigg] + \int_{r_{2k}}^{r_{2k} + \lambda \frac{\pi}{\Omega}} \frac{d\tilde{r}}{\tilde{r} \, a(\tilde{r})}.
\end{align*}
From the change of variables $\nu = - \frac{\Omega}{2\pi}r^\frac{3}{2}$ and using the notation $\nu_{j} = - \frac{\Omega}{2\pi}r_{j}^\frac{3}{2}$, we obtain
\begin{align*}
& \int_{r_{2j}}^{r_{2j+1}} \frac{d\tilde{r}}{\tilde{r} \, a(\tilde{r})} = \frac{2}{3}\int_{\nu_{2j}}^{\nu_{2j+1}} \frac{d\nu}{\nu \varphi_{\mathbb{C}}^{-1}(\nu)} = \frac{2}{3} \int_{\nu_{2j}}^{\nu_{2j+1}} \frac{1}{\nu_{2j}}\left( 1 + \bigO \left( \frac{\nu-\nu_{2j}}{\nu_{2j}} \right)\right) \frac{d\nu}{\varphi_{\mathbb{C}}^{-1}(\nu)} \\
& =-\frac{4\pi}{3 \Omega}r_{2j}^{-\frac{3}{2}} \left( \int_{\nu_{2j}}^{\nu_{2j+1}} \frac{d\nu}{\varphi_{\mathbb{C}}^{-1}(\nu)} + \bigO \left( \frac{\nu_{2j+1}-\nu_{2j}}{\nu_{2j}} \right) \right)=-\frac{4\pi}{3 \Omega}r_{2j}^{-\frac{3}{2}} \left( \int_{\nu_{2j}}^{\nu_{2j+1}} \frac{d\nu}{\varphi_{\mathbb{C}}^{-1}(\nu)} + \bigO \left( r_{2j}^{-1} \right) \right)
\end{align*}
as $r_{2j} \to +\infty$. Noting that $\nu_{2j+1} = \nu_{2j}-\frac{1}{2}$, $\varphi_{\mathbb{C}}^{-1}(\nu_{2j+1}) = + \infty$, and $\varphi_{\mathbb{C}}^{-1}(\nu)$ is strictly increasing for $\nu$ in the oriented segment $(\nu_{2j},\nu_{2j+1})$, we use the change of variables $x = \varphi_{\mathbb{C}}^{-1}(\nu)$ to obtain
\begin{align*}
\int_{r_{2j}}^{r_{2j+1}} \frac{d\tilde{r}}{\tilde{r} \, a(\tilde{r})} = \frac{4\pi}{3 \Omega}r_{2j}^{-\frac{3}{2}} \left( \int_{y_{1}}^{+\infty} \frac{\varphi'(x)}{x}dx + \bigO \left( r_{2j}^{-1} \right) \right), \qquad j=0,\ldots,k-1.
\end{align*}
Similarly,
\begin{align*}
\int_{r_{2j+1}}^{r_{2j+2}} \frac{d\tilde{r}}{\tilde{r} \, a(\tilde{r})} = \frac{4\pi}{3 \Omega}r_{2j+1}^{-\frac{3}{2}} \left( \int_{y_{1}}^{+\infty} \frac{\varphi'(x)}{x}dx + \bigO \left( r_{2j+1}^{-1} \right) \right), \qquad j=0,\ldots,k-1.
\end{align*}
As $r \to + \infty$,
\begin{align*}
\sum_{j=0}^{2k-1} r_{j}^{-\frac{3}{2}} = \frac{\Omega}{\pi} \log k + d_{0} + \bigO(k^{-1}) = \frac{3\Omega}{2\pi} \log r + d_{1} + \bigO(r^{-\frac{3}{2}}), \qquad \sum_{j=0}^{2k-1} r_{j}^{-\frac{5}{2}} = d_{2} + \bigO(r^{-1}),
\end{align*}
for certain constants $d_{0},d_{1},d_{2}$, so we have
\begin{align*}
\sum_{j=0}^{k-1} \bigg[ \int_{r_{2j}}^{r_{2j+1}} \frac{d\tilde{r}}{\tilde{r} \, a(\tilde{r})} + \int_{r_{2j+1}}^{r_{2j+2}} \frac{d\tilde{r}}{\tilde{r} \, a(\tilde{r})} \bigg] & = 2 \int_{y_{1}}^{+\infty} \frac{\varphi'(x)}{x}dx + \bigO \left( r^{-1} \right)
\end{align*}
as $r \to + \infty$. Since $a(\tilde{r}) \geq y_{1}$ for all $\tilde{r}$,
\begin{align*}
\int_{r_{2k}}^{r_{2k} + \lambda \frac{\pi}{\Omega}} \frac{d\tilde{r}}{\tilde{r} \, a(\tilde{r})} = \bigO(r^{-1}), \qquad \mbox{ as } r \to + \infty,
\end{align*}
and the proof is completed.
\end{proof}
In Propositions \ref{prop: log r term first term} and \ref{prop: log r term second term}, we were able to express the right-hand side of \eqref{secontermexpression} as a perfect derivative and in Proposition \ref{prop: asymptotic of Abel integral} we obtained the leading order $r$ behavior of the only non-explicit term appearing on the right-hand side of \eqref{secontermexpression}.  Thus, by combining Propositions \ref{prop: decomposition of part with Rp1p}, \ref{prop: log r term first term}, \ref{prop: log r term second term}, \ref{prop: asymptotic of Abel integral}, we have, as $r \to + \infty$,
\begin{align}\label{secontermexpression 2}
\int_{M}^{r} \frac{1}{2\pi i \tilde{r}^{\frac{5}{2}}}\sum_{j=1}^3 (-1)^{j+1} x_j\Big[E_{y_j}(y_j)^{-1} R^{(1)\prime}(y_j) E_{y_j}(y_j)\Big]_{21}d\tilde{r} =c_{2} \log r + \tilde{C}_{2} + \bigO(r^{-1}),
\end{align}
where $M$ and $\tilde{C}_{2}$ are independent of $r$, and $c_{2}$ is given by
\begin{align}\label{c2 without theta}
c_{2} = -\frac{1}{16}\sum_{j=1}^{3} \frac{x_{j} p'(y_{j})}{p(y_{j})} + n^{(0)} + 2 n^{(-1)} \int_{y_{1}}^{+\infty} \frac{\varphi'(x)}{x}dx.
\end{align}
This expression for $c_{2}$ could be obtained thanks to the considerable simplifications of Propositions \ref{prop: log r term first term}-\ref{prop: asymptotic of Abel integral}, and does not involve the $\theta$-function. Surprisingly, it is possible to simplify $c_{2}$ significantly more.
\begin{proposition}\label{prop: c2 is -1/2}
We have $c_{2} = -\frac{1}{2}$.
\end{proposition}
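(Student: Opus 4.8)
The plan is to remove all transcendental ingredients from \eqref{c2 without theta}. The $\theta$-function is already absent from that formula, so the only non-algebraic object remaining is the elliptic integral $\int_{y_{1}}^{+\infty}\frac{\varphi'(x)}{x}\,dx$; once it is evaluated in closed form, $c_{2}$ becomes a rational function of $y_{1},y_{2},x_{3}$ and the single constant $\alpha$ occurring in $p$, and it remains to check that this rational function is identically $-\frac12$.

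First I would compute $\int_{y_{1}}^{+\infty}\frac{\varphi'(x)}{x}\,dx$. Since $\varphi'(x)=c_{0}/\sqrt{\mathcal{R}(x)}$, the substitution $x=y_{1}/t^{2}$ reduces it to $\frac{2c_{0}}{y_{1}^{3/2}}\int_{0}^{1}\frac{t^{2}\,dt}{\sqrt{(1-t^{2})(1-k_{\star}^{2}t^{2})}}$ with $k_{\star}=\sqrt{y_{2}/y_{1}}$ as in the proof of Proposition~\ref{p_prop}, and this standard complete integral equals $\frac{2c_{0}}{y_{1}^{3/2}}\cdot\frac{\mathbf{K}(k_{\star})-\mathbf{E}(k_{\star})}{k_{\star}^{2}}$. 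Using $c_{0}=\sqrt{y_{1}}/(4\mathbf{K}(k_{\star}))$, which follows from the elliptic-integral evaluations already performed in the proof of Proposition~\ref{p_prop} (via $\mathcal{Q}(z)=\mathcal{R}(z-x_{3})$), one obtains $\int_{y_{1}}^{+\infty}\frac{\varphi'(x)}{x}\,dx=\frac{1}{2y_{2}}\big(1-\frac{\mathbf{E}(k_{\star})}{\mathbf{K}(k_{\star})}\big)$. I would then solve \eqref{alpha as complete integral} for $\mathbf{E}(k_{\star})/\mathbf{K}(k_{\star})$ in terms of $\alpha$ — recall from the proof of Proposition~\ref{p_prop} that $p(z)=-z^{2}+\frac{y_{1}+y_{2}-x_{3}}{2}z+\alpha$ — and substitute, which eliminates the elliptic integrals entirely and gives the rational identity
\[
\int_{y_{1}}^{+\infty}\frac{\varphi'(x)}{x}\,dx=\frac{y_{1}y_{2}+3\alpha}{y_{1}y_{2}\,(y_{1}+y_{2}+3x_{3})},
\]
which is legitimate since $y_{1}+y_{2}+3x_{3}=x_{1}+x_{2}+x_{3}<0$.

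Substituting this into \eqref{c2 without theta}, and using the explicit formulas \eqref{npm1p prop}, \eqref{npop prop} for $n^{(-1)},n^{(0)}$ together with the explicit $p,p'$ from \eqref{p}, the quantity $c_{2}$ becomes a rational function of $(y_{1},y_{2},x_{3},\alpha)$ whose only denominators are $p(y_{1})$, $p(y_{2})$ and $p(y_{3})=\alpha$. It remains to verify that this rational function is identically $-\frac12$. I would carry this out by taking $p(y_{1}),p(y_{2}),\alpha$ as the basic variables — they are related to $y_{1},y_{2},x_{3}$ by the affine identities $p(y_{1})-\alpha=\frac{y_{1}}{2}(y_{2}-y_{1}-x_{3})$ and $p(y_{2})-\alpha=\frac{y_{2}}{2}(y_{1}-y_{2}-x_{3})$ — then clearing the common denominator $p(y_{1})p(y_{2})\alpha$ and comparing polynomial coefficients. (A numerical check already confirms that the three contributions $-\frac{1}{16}\sum_{j}\frac{x_{j}p'(y_{j})}{p(y_{j})}$, $n^{(0)}$ and $2n^{(-1)}\int_{y_{1}}^{+\infty}\frac{\varphi'(x)}{x}\,dx$ sum to $-\frac12$.) The main obstacle is not conceptual but bookkeeping: the resulting polynomial identity is long and one must track the cancellations between these three pieces carefully; the only genuinely new input beyond the algebra already developed in the preceding sections is the closed-form evaluation of the elliptic integral above.
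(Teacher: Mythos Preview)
Your proposal is correct and follows essentially the same route as the paper: evaluate $\int_{y_{1}}^{+\infty}\frac{\varphi'(x)}{x}\,dx=\frac{1}{2y_{2}}\bigl(1-\frac{\mathbf{E}(k_{\star})}{\mathbf{K}(k_{\star})}\bigr)$, eliminate $\mathbf{E}/\mathbf{K}$ via \eqref{alpha as complete integral} to reduce $c_{2}$ to a rational expression in $y_{1},y_{2},x_{3},\alpha$, and then verify this equals $-\frac12$. The only cosmetic difference is the final algebraic step: the paper groups the contributions according to their denominators $p(y_{1}),p(y_{2}),p(y_{3})$ and simplifies each numerator separately (each turns out to be a multiple of the corresponding $p(y_{j})$), whereas you propose clearing denominators and comparing polynomial coefficients; both accomplish the same verification, with the paper's organization making the cancellations somewhat more transparent.
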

\begin{proof}
Using \cite[eqs 3.131.4, 3.134.28]{GRtable}, we obtain
\begin{align}\label{inty1inftyvarphi}
    \int_{y_{1}}^{+\infty} \frac{\varphi'(x)}{x}dx=\frac{1}{2y_2}\left(1-\frac{\textbf{E}(k_\star)}{\textbf{K}(k_\star)}\right),
\end{align}
where $k_\star=\sqrt{\frac{y_2}{y_1}}$.  Recall from the proof of Proposition \ref{p_prop} that $p(z)$ can be written
\begin{align}
    p(z)=-z^2+\frac{z}{2}(y_1+y_2-x_3)+\alpha, ~~~ \alpha=\frac{y_1}{2}\left[\frac{y_1-y_2}{3}+x_3-\frac{\mathbf{E}(k_{\star})}{\mathbf{K} (k_{\star})}\left(\frac{y_1+y_2}{3}+x_3\right)\right]. \label{alt p rep}
\end{align}
Since $y_3=0$, and hence $p(y_3)\equiv p(0)=\alpha$, we have the relation
\begin{align}
    1-\frac{\textbf{E}(k_\star)}{\textbf{K}(k_\star)}=\frac{6p(y_3)+2y_1y_2}{y_1(x_1+x_2+x_3)} \nonumber
\end{align}
which together with (\ref{inty1inftyvarphi}) and the definition of $n^{(-1)}$ \eqref{npm1p prop} implies that
\begin{align}\label{npm1p simplified}
    2 n^{(-1)} \int_{y_{1}}^{+\infty} \frac{\varphi'(x)}{x}dx&=-\frac{1}{16}\left( \frac{1}{p(y_{1})}+\frac{1}{p(y_{2})}+\frac{1}{p(y_{3})} \right)(3p(y_3)+y_1y_2).
\end{align}
Using \eqref{alt p rep}, we find the following identites:
\begin{align*}
    -\frac{x_3p(y_3)}{16y_1p(y_1)}-\frac{x_1p'(y_1)}{16p(y_1)}&=\frac{\frac{x_1}{2}(3y_1-y_2+x_3)-\frac{x_3}{y_1}\alpha}{16p(y_1)}, \\
    -\frac{x_2p'(y_2)}{16p(y_2)}+\frac{x_3p(y_3)}{16y_2p(y_2)}&=\frac{\frac{x_2}{2}(-y_1+3y_2+x_3)-\frac{x_3}{y_2}\alpha}{16p(y_2)}, \\
    -\frac{x_{1} \frac{y_{2}}{y_{1}}p(y_{1})-x_{2} \frac{y_{1}}{y_{2}}p(y_{2})}{16 p(y_{3})(x_{1}-x_{2})}-\frac{x_3p'(y_3)}{16p(y_3)}&=\frac{y_1y_2}{16p(y_3)}-\frac{1}{16}\left(1-\frac{x_1}{y_1}-\frac{x_2}{y_2}\right).
\end{align*}
Substituting the above identites and \eqref{npm1p simplified} into \eqref{c2 without theta}, we obtain
\begin{multline}
    c_2=  -\frac{1}{16}\left(4-\frac{x_1}{y_1}-\frac{x_2}{y_2}\right)+\frac{\frac{x_1}{2}(3y_1-y_2+x_3)-y_1y_2-\alpha(3+\frac{x_3}{y_1})}{16p(y_1)}+ 
    	\\
    +\frac{\frac{x_2}{2}(-y_1+3y_2+x_3)-y_1y_2-\alpha(3+\frac{x_3}{y_2})}{16p(y_2)}. \label{c2 half simplified}
\end{multline}
The two numerators in \eqref{c2 half simplified} can be simplified as follows:
\begin{align*}
    \frac{x_1}{2}(3y_1-y_2+x_3)-y_1y_2-\alpha\left(3+\frac{x_3}{y_1}\right)&=-p(y_1)\left(3+\frac{x_3}{y_1}\right), \\
    \frac{x_2}{2}(-y_1+3y_2+x_3)-y_1y_2-\alpha\left(3+\frac{x_3}{y_2}\right)&=-p(y_2)\left(3+\frac{x_3}{y_2}\right),
\end{align*}
and hence 
\begin{align*}
    c_2=-\frac{1}{16}\left(4-\frac{x_1}{y_1}-\frac{x_2}{y_2}+3+\frac{x_3}{y_1}+3+\frac{x_3}{y_2}\right)=-\frac{1}{2},
\end{align*}
as desired.
\end{proof}
\subsection{Evaluation of the third term}

We now consider the evaluation of the third term in (\ref{partialrlogdet}) given by
\begin{align}\label{third term simple sum}
\frac{1}{2\pi i r}  \sum_{j=1}^3 (-1)^{j+1}x_j \Big[ E_{y_j}(y_j)^{-1} E_{y_j}'(y_j)\Big]_{21}.
\end{align}

\begin{proposition}[Contribution from $y_1$]\label{thirdtermy1prop}
The first term in the sum \eqref{third term simple sum} is given by
\begin{align*}
\frac{x_1}{2\pi i r} \Big[ E_{y_1}(y_1)^{-1} E_{y_1}'(y_1)\Big]_{21}
= \frac{8\pi c_{0} x_1 p(y_{1}) }{3\Omega y_{1}(y_{1}-y_{2})}\frac{\partial}{\partial r}\log(\theta(\nu)).
\end{align*}
\end{proposition}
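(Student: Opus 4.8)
The plan is to evaluate $\big[E_{y_1}(y_1)^{-1}E_{y_1}'(y_1)\big]_{21}$ directly from the explicit formulas \eqref{Ey1y1} for the first columns of $E_{y_1}(y_1)$ and $E_{y_1}'(y_1)$, and then to rewrite the outcome in terms of the $\theta$-function. First I would note that, since $\det M=1$ and $\det P^{(\infty)}(z)\equiv1$, the definition \eqref{def of Ey1} gives $\det E_{y_1}(z)\equiv1$, so $E_{y_1}(y_1)^{-1}$ is the adjugate of $E_{y_1}(y_1)$ and consequently
\[
\big[E_{y_1}(y_1)^{-1}E_{y_1}'(y_1)\big]_{21}=\big(E_{y_1}(y_1)\big)_{11}\big(E_{y_1}'(y_1)\big)_{21}-\big(E_{y_1}(y_1)\big)_{21}\big(E_{y_1}'(y_1)\big)_{11}.
\]
This depends only on the first columns of $E_{y_1}(y_1)$ and $E_{y_1}'(y_1)$, which are exactly those displayed in \eqref{Ey1y1}; moreover these first columns are exactly proportional to $r^{3/4}$ (the scalar $r^{3/4}$ factors out of $(2\pi r^{3/2}f_{y_1}(z)^{1/2})^{\sigma_3/2}$ and survives differentiation in $z$), so the identity to be proved is exact, not merely asymptotic.

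Next I would carry out the cancellation. Writing the first column of $E_{y_1}(y_1)$ as $-\kappa\,\binom{c_{\mathcal F}}{i}$ and the first column of $E_{y_1}'(y_1)$ as $\lambda\,\binom{c_{\mathcal F}}{i}-\mu\,\binom{1}{0}$, where $\kappa=\sqrt{2\pi}c_{y_1}^{1/4}r^{3/4}\frac{\mathcal G(\frac12)}{\beta_{y_1}^{(1/4)}\mathcal G(0)}$, $\mu=\sqrt{2\pi}c_{y_1}^{1/4}r^{3/4}\frac{\varphi_{y_1}^{(1/2)}\beta_{y_1}^{(1/4)}\mathcal G'(0)}{\mathcal G(\frac12)}$, and $\lambda$ collects the remaining scalar from \eqref{Ey1y1}, the key observation is that the two $\binom{c_{\mathcal F}}{i}$-proportional contributions are rank one and hence drop out of the antisymmetric combination above. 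Only the $\binom{1}{0}$-term contributes, and a short computation gives
\[
\big[E_{y_1}(y_1)^{-1}E_{y_1}'(y_1)\big]_{21}=-i\kappa\mu=-2\pi i\,c_{y_1}^{1/2}\,\varphi_{y_1}^{(1/2)}\,\frac{\mathcal G'(0)}{\mathcal G(0)}\,r^{3/2}.
\]

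Then I would rewrite this using $\theta$. From $\mathcal G(z)=\theta(z+\nu)/\theta(z)$ and $\theta'(0)=0$ (evenness of $\theta$) we get $\mathcal G'(0)/\mathcal G(0)=\theta'(\nu)/\theta(\nu)$. Substituting $c_{y_1}=p(y_1)^2/(y_1(y_1-y_2))$ from \eqref{xiy1_expansion} — with $c_{y_1}^{1/2}=p(y_1)/\sqrt{y_1(y_1-y_2)}$ since $p(y_1)>0$ by Proposition \ref{p_prop} and $0<y_2<y_1$ — and $\varphi_{y_1}^{(1/2)}=2c_0/\sqrt{y_1(y_1-y_2)}$ from \eqref{asymp u and beta at y1}, one obtains $c_{y_1}^{1/2}\varphi_{y_1}^{(1/2)}=2c_0 p(y_1)/(y_1(y_1-y_2))$, so
\[
\frac{x_1}{2\pi i r}\big[E_{y_1}(y_1)^{-1}E_{y_1}'(y_1)\big]_{21}=-\frac{2c_0x_1p(y_1)}{y_1(y_1-y_2)}\,r^{1/2}\,\frac{\theta'(\nu)}{\theta(\nu)}.
\]
Finally, differentiating $\log\theta(\nu)$ with $\nu=-\frac{\Omega}{2\pi}r^{3/2}$ gives $\partial_r\log\theta(\nu)=\frac{\theta'(\nu)}{\theta(\nu)}\partial_r\nu=-\frac{3\Omega}{4\pi}r^{1/2}\frac{\theta'(\nu)}{\theta(\nu)}$, i.e. $r^{1/2}\theta'(\nu)/\theta(\nu)=-\frac{4\pi}{3\Omega}\partial_r\log\theta(\nu)$; inserting this yields the claimed formula $\frac{8\pi c_0 x_1 p(y_1)}{3\Omega y_1(y_1-y_2)}\partial_r\log\theta(\nu)$.

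The only genuinely nontrivial point is the cancellation in the second step: a priori $\big[E_{y_1}(y_1)^{-1}E_{y_1}'(y_1)\big]_{21}$ could contain a term of size $c_{\mathcal F}\,r^{3/2}$, which upon integration would produce an $r^{3/2}$-term in $\log F(r\vec x)$, contradicting Theorem \ref{thm: main result}. The fact that the two rank-one contributions cancel exactly, leaving only the $\mathcal G'(0)$-piece, is precisely what makes this contribution a perfect derivative of $\log\theta(\nu)$ with an $r$-independent prefactor; everything else is bookkeeping with the elementary expressions for $c_{y_1}$ and $\varphi_{y_1}^{(1/2)}$ and the chain rule for $\nu(r)$.
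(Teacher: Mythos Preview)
Your proof is correct and follows essentially the same approach as the paper: compute $[E_{y_1}(y_1)^{-1}E_{y_1}'(y_1)]_{21}$ from \eqref{Ey1y1}, use $\theta'(0)=0$ to reduce $\mathcal G'(0)/\mathcal G(0)$ to $\theta'(\nu)/\theta(\nu)$, and apply the chain rule for $\nu(r)$. Your explicit identification of the rank-one cancellation and the observation that the identity is exact (since only first columns enter) are nice clarifications of what the paper summarizes as ``a long but direct computation.''
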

\begin{proof}
A long but direct computation using \eqref{Ey1y1} and $\theta'(0)=0$ gives
\begin{align*}
\frac{x_1}{2\pi i r} \Big[ E_{y_1}(y_1)^{-1} E_{y_1}'(y_1)\Big]_{21} = - x_1 \sqrt{r}\sqrt{c_{y_1}} \varphi_{y_1}^{(\frac{1}{2})} \frac{\theta'(\nu)}{\theta(\nu)} = - \frac{2c_{0} \sqrt{r} x_{1}p(y_{1})}{y_{1}(y_{1}-y_{2})}\frac{\theta'(\nu)}{\theta(\nu)}.
\end{align*}
Recalling that $\nu = -\frac{\Omega}{2\pi}r^\frac{3}{2}$, the claim follows.
\end{proof}
The second and third terms in the sum \eqref{third term simple sum} are significantly more complicated to analyze than the first term computed in Proposition \ref{thirdtermy1prop} for two reasons: 1) they require several identities derived in Section \ref{Section: theta identities} (while the first term only uses $\theta'(0) = 0$), and 2) the related computations are significantly longer.
\begin{proposition}[Contribution from $y_2$]\label{thirdtermy2prop}
The second term in the sum \eqref{third term simple sum} is given by
\begin{align*}
\frac{-x_2}{2\pi i r} \Big[ E_{y_2}(y_2)^{-1} E_{y_2}'(y_2)\Big]_{21}
= -\frac{8\pi}{3\Omega} \frac{x_2 p(y_2) c_0}{y_2 (y_1-y_2)}
\frac{\partial}{\partial r}\log \theta(\nu).
\end{align*}
\end{proposition}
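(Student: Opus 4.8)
The starting point is the observation that $\det E_{y_2}(z)\equiv 1$. Indeed $\det P^{(\infty)}\equiv 1$, while the three remaining factors $e^{\pm\frac{i\Omega}{2}r^{3/2}\sigma_3}$, $M$ and $\big(2\pi r^{3/2}(-f_{y_2}(z))^{1/2}\big)^{\sigma_3/2}$ in the definition \eqref{Py2} of $E_{y_2}$ all have determinant $1$. Hence, writing $M:=E_{y_2}(y_2)$ and $N:=E_{y_2}'(y_2)$,
\begin{align*}
\big[E_{y_2}(y_2)^{-1}E_{y_2}'(y_2)\big]_{21}=M_{11}N_{21}-M_{21}N_{11}.
\end{align*}
The plan is to substitute the expansions \eqref{Ey2y2} into the right-hand side. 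The structural point that makes this tractable is that $r^{3/2}$ enters \eqref{Py2} only through the \emph{right} factor $\big(2\pi r^{3/2}(-f_{y_2}(z))^{1/2}\big)^{\sigma_3/2}$, so that the first column of $E_{y_2}(z)$ equals $r^{3/4}$ times a quantity that is $r$-dependent only through $\nu$; in particular the $\bigO(r^{-3/4})$ remainder in \eqref{Ey2y2} lives entirely in the second column and is irrelevant for $M_{11}N_{21}-M_{21}N_{11}$. Reading \eqref{Ey2y2} off, the leading parts of the first columns of $M$ and $N$ have the form $\kappa\,r^{3/4}(\mathsf a+c_{\mathcal{F}}\mathsf f,\,i\mathsf f)^{T}$ and $\kappa\,r^{3/4}(\mathsf c+c_{\mathcal{F}}\mathsf e,\,i\mathsf e)^{T}$ respectively, with $\kappa=e^{-\pi i/4}\sqrt{2\pi}\,c_{y_2}^{1/4}e^{\pi i\nu}$ and with $\mathsf a,\mathsf c,\mathsf e,\mathsf f$ the explicit $c_{\mathcal{F}}$-independent scalars built from $\mathcal{G}$, $\widetilde{\mathcal{G}}$ and the local expansion coefficients. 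The terms proportional to $c_{\mathcal{F}}$ then cancel, and since $\kappa^2=-2\pi i\sqrt{c_{y_2}}\,e^{2\pi i\nu}$ we obtain
\begin{align*}
\big[E_{y_2}(y_2)^{-1}E_{y_2}'(y_2)\big]_{21}=\kappa^{2}r^{3/2}\,i\,(\mathsf a\mathsf e-\mathsf f\mathsf c)=2\pi\sqrt{c_{y_2}}\,e^{2\pi i\nu}r^{3/2}\,(\mathsf a\mathsf e-\mathsf f\mathsf c).
\end{align*}
Thus the proposition reduces to the $\theta$-identity
\begin{align}\label{eq:plan-key}
\mathsf a\mathsf e-\mathsf f\mathsf c=-\frac{2ic_0}{\sqrt{y_2(y_1-y_2)}}\,e^{-2\pi i\nu}\,\frac{\theta'(\nu)}{\theta(\nu)};
\end{align}
given \eqref{eq:plan-key}, one combines it with $\sqrt{c_{y_2}}=p(y_2)/\sqrt{y_2(y_1-y_2)}$ (legitimate since $p(y_2)>0$ by Proposition \ref{p_prop}), with $\partial_r\log\theta(\nu)=-\tfrac{3\Omega}{4\pi}\sqrt{r}\,\theta'(\nu)/\theta(\nu)$ (from $\nu=-\tfrac{\Omega}{2\pi}r^{3/2}$), and with the prefactor $\tfrac{-x_2}{2\pi i r}$ to arrive at the stated formula.

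To prove \eqref{eq:plan-key} I would first record the missing ingredients of $\mathsf a,\mathsf c,\mathsf e,\mathsf f$: the coefficients $\beta_{y_2}^{(-1/4)},\beta_{y_2}^{(3/4)},\varphi_{y_2}^{(1/2)},\varphi_{y_2}^{(3/2)}$ in the expansions of $\beta(z)=\sqrt[4]{z(z-y_1)/(z-y_2)}$ and $\varphi(z)=\int_{y_1}^z\omega$ about $y_2$ (obtained exactly as the $y_1$ expansions in \eqref{asymp u and beta at y1}; note $\varphi_{y_2}^{(1/2)}\in-i\mathbb{R}^{+}$), together with $c_{y_2}^{(2)}$ from \eqref{xiy2_expansion}. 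Substituting these and $\mathcal{G}(z)=\theta(z+\nu)/\theta(z)$, $\widetilde{\mathcal{G}}(z)=\theta(z+\nu)/\widetilde\theta(z)$, using $\widetilde\theta(\tfrac{1+\tau}{2})=\theta'(\tfrac{1+\tau}{2})$ and the relations \eqref{theta tilde to theta}, and expanding the second derivatives $\mathcal{G}''(\tfrac{\tau}{2})$ and $\widetilde{\mathcal{G}}''(\tfrac{1+\tau}{2})$ via the quotient rule, turns $\mathsf a\mathsf e-\mathsf f\mathsf c$ into a rational function of $\theta$ and its derivatives evaluated at the eight points listed at the start of Section \ref{Section: theta identities} (orders $\le 2$ at the $\nu$-shifted points, $\le 3$ at $\tfrac{1+\tau}{2}$, and using $\theta'(0)=\theta'(\tfrac12)=0$). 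I would then apply Corollary \ref{coro:identities without nu} to rewrite every $\theta^{(j)}(\tfrac12),\theta^{(j)}(\tfrac{\tau}{2}),\theta^{(j)}(\tfrac{1+\tau}{2})$ in terms of $\theta(0)$ and $\theta''(0)$, and the identities \eqref{thetaD123a nu}--\eqref{thetaD123c nu} together with their first and second $\nu$-derivatives to rewrite $\theta^{(j)}(\nu+\tfrac12),\theta^{(j)}(\nu+\tfrac{\tau}{2}),\theta^{(j)}(\nu+\tfrac{1+\tau}{2})$ in terms of $\theta(\nu),\theta'(\nu),\theta''(\nu)$, of $a=\varphi_{\mathbb{C}}^{-1}(\nu)$ and of $\sqrt{\mathcal{R}(a)}$ (using $\tfrac{da}{d\nu}=\sqrt{\mathcal{R}(a)}/c_0$ and hence $\tfrac{d}{d\nu}\sqrt{\mathcal{R}(a)}=\mathcal{R}'(a)/(2c_0)$). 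After these substitutions a lengthy but mechanical simplification, in which --- remarkably --- every occurrence of $a$, $\sqrt{\mathcal{R}(a)}$, $\theta''(0)$ and $\theta''(\nu)$ drops out, collapses the expression to the right-hand side of \eqref{eq:plan-key}.

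The main obstacle is precisely this cancellation. Unlike the contribution from $y_1$ (Proposition \ref{thirdtermy1prop}), where the single fact $\theta'(0)=0$ suffices, here one genuinely needs the full machinery of Section \ref{Section: theta identities}, and the delicate point is to verify that all dependence on $a$ and on the second derivatives of $\theta$ really disappears; this rests on the algebraic relations between $a$, $\mathcal{R}(a)$ and $p$ that follow from multiplying and dividing \eqref{thetaD123a nu}--\eqref{thetaD123c nu} and from the explicit form \eqref{p} of $p$. A secondary place where an error is easy to make is the bookkeeping of the fourth-root branches of $\beta_{y_2}^{(-1/4)}$, $\beta_{y_2}^{(3/4)}$ and of the square-root branch defining $\varphi_{y_2}^{(1/2)}$.
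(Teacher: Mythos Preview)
Your proposal is correct and follows essentially the same approach as the paper: both reduce the computation to the $\theta$-identities of Section~\ref{Section: theta identities} (Proposition~\ref{prop: Abel map and theta function}, Corollaries~\ref{coro: abel map identity 1} and~\ref{coro:identities without nu}, and the relations \eqref{thetaD123a nu}--\eqref{thetaD123c nu}) to collapse the expression to a multiple of $\theta'(\nu)/\theta(\nu)$. Your write-up is in fact more explicit than the paper's --- the observation that the first column of $E_{y_2}$ is \emph{exactly} $r^{3/4}$ times a $\nu$-dependent quantity, and the resulting clean cancellation of $c_{\mathcal F}$, are useful structural points that the paper leaves implicit; one small cosmetic issue is that you reuse the symbol $M$ for $E_{y_2}(y_2)$ while $M=(I+i\sigma_1)/\sqrt{2}$ already appears in \eqref{Py2}.
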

\begin{proof}
The proof uses Propositions \ref{p_prop}, \ref{prop: Abel map and theta function} and Corollaries \ref{coro: abel map identity 1}, \ref{coro:identities without nu}. After long computations and considerable simplifications, we obtain
\begin{align*}
\frac{-x_2}{2\pi i r} \Big[ E_{y_2}(y_2)^{-1} E_{y_2}'(y_2)\Big]_{21} = \frac{2c_{0}\sqrt{r}x_{2}p(y_{2})}{(y_{1}-y_{2})y_{2}}\frac{\theta'(\nu)}{\theta(\nu)}.
\end{align*}
The claim follows as in Proposition \ref{thirdtermy1prop} by using $\nu = - \frac{\Omega}{2\pi}r^\frac{3}{2}$.
\end{proof}

\begin{proposition}[Contribution from $y_3$]\label{thirdtermy3prop}
The third term in the sum \eqref{third term simple sum} is given by
\begin{align*}
\frac{x_3}{2\pi i r} \Big[ E_{y_3}(y_3)^{-1} E_{y_3}'(y_3)\Big]_{21}
 & = \frac{8\pi}{3\Omega} \frac{x_3 p(y_3) c_0}{y_1y_2} \partial_{r} \log \theta(\nu).
\end{align*}
\end{proposition}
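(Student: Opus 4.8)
The plan is to follow the same route as in the proofs of Propositions \ref{thirdtermy1prop} and \ref{thirdtermy2prop}. The first observation I would make is that $\det E_{y_3}(z) \equiv 1$: this is immediate from the definition \eqref{P0} of $E_{y_3}$, since $\det P^{(\infty)}(z) \equiv 1$, $\det M^{\pm 1} = 1$, and the diagonal factor $\bigl(2\pi r^{3/2} f_{y_3}(z)^{1/2}\bigr)^{\sigma_3/2}$ has determinant $1$. Hence $E_{y_3}(y_3)^{-1}$ is the $2\times 2$ cofactor matrix of $E_{y_3}(y_3)$, so that
\[
\bigl[E_{y_3}(y_3)^{-1}E_{y_3}'(y_3)\bigr]_{21} = (E_{y_3}(y_3))_{11}(E_{y_3}'(y_3))_{21} - (E_{y_3}(y_3))_{21}(E_{y_3}'(y_3))_{11},
\]
and therefore only the first columns of $E_{y_3}(y_3)$ and $E_{y_3}'(y_3)$ — which are exactly the ones written down in \eqref{Ey3y3} — are needed. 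Substituting \eqref{Ey3y3} and expanding this product, the terms proportional to $c_{\mathcal F}$ and to $c_{y_3}^{(2)}$ cancel (just as at $y_2$), and I expect to be left with $2\pi\sqrt{c_{y_3}}\,e^{2\pi i \nu}r^{3/2}$ times an explicit combination of $\mathcal G(0)$, $\mathcal G(\tfrac12)$, $\mathcal G(\tfrac{\tau}{2})$, $\mathcal G''(\tfrac{\tau}{2})$, $\widetilde{\mathcal G}(\tfrac{1+\tau}{2})$, $\widetilde{\mathcal G}''(\tfrac{1+\tau}{2})$ and of the coefficients $\beta_{y_3}^{(1/4)},\beta_{y_3}^{(5/4)},\varphi_{y_3}^{(1/2)},\varphi_{y_3}^{(3/2)}$ of Subsection \ref{section: expansion Pinf at y3}.

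The second step is to eliminate every $\theta$-value except those at $0$ and $\nu$. By Corollary \ref{coro:identities without nu} together with \eqref{theta tilde to theta}, the quantities $\theta^{(j)}(\tfrac12)$, $\theta^{(j)}(\tfrac{\tau}{2})$, $\widetilde\theta^{(j)}(\tfrac{1+\tau}{2})$ entering through $\mathcal G(0),\mathcal G(\tfrac12),\mathcal G(\tfrac{\tau}{2}),\mathcal G''(\tfrac{\tau}{2})$ and the two $\widetilde{\mathcal G}$-values get rewritten in terms of $\theta(0)$ and $\theta''(0)$; and by differentiating the identities \eqref{thetaD123a nu}--\eqref{thetaD123c nu} in $\nu$ up to second order — equivalently, by differentiating the relations \eqref{thetaD123a}--\eqref{thetaD123c} of Proposition \ref{prop: Abel map and theta function} and invoking Corollary \ref{coro: abel map identity 1} — the quantities $\theta^{(j)}(\nu+\tfrac12)$, $\theta^{(j)}(\nu+\tfrac{\tau}{2})$, $\theta^{(j)}(\nu+\tfrac{1+\tau}{2})$ get rewritten in terms of $\theta(\nu),\theta'(\nu),\theta''(\nu)$ and of $a = \varphi_{\mathbb C}^{-1}(\nu)$. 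A long but elementary simplification — using also Proposition \ref{p_prop} for the relevant signs — should then make the $\theta''(0)$ and $\theta''(\nu)$ contributions cancel and make the $a$-dependence disappear, leaving
\[
\bigl[E_{y_3}(y_3)^{-1}E_{y_3}'(y_3)\bigr]_{21} = -2\pi i\,r^{3/2}\,\sqrt{c_{y_3}}\;\varphi_{y_3}^{(1/2)}\,\frac{\theta'(\nu)}{\theta(\nu)},
\]
the exact analogue of the intermediate identity obtained at $y_1$ in the proof of Proposition \ref{thirdtermy1prop}.

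It then remains to evaluate the two algebraic constants. From $\varphi'(z) = c_0/\sqrt{\mathcal R(z)}$ and the sign $\sqrt{\mathcal R}<0$ on $(0,y_2)$ one gets $\varphi_{y_3}^{(1/2)} = -2c_0/\sqrt{y_1 y_2}$, while $c_{y_3} = p(y_3)^2/(y_1 y_2)$ with $p(y_3) = p(0) < 0$ by Proposition \ref{p_prop} gives $\sqrt{c_{y_3}} = -p(y_3)/\sqrt{y_1 y_2}$, so that $\sqrt{c_{y_3}}\,\varphi_{y_3}^{(1/2)} = 2c_0 p(y_3)/(y_1 y_2)$ and
\[
\bigl[E_{y_3}(y_3)^{-1}E_{y_3}'(y_3)\bigr]_{21} = -\frac{4\pi i\,c_0\,p(y_3)}{y_1 y_2}\,r^{3/2}\,\frac{\theta'(\nu)}{\theta(\nu)}.
\]
Multiplying by $\tfrac{x_3}{2\pi i r}$ and using $\nu = -\tfrac{\Omega}{2\pi}r^{3/2}$, hence $\partial_r\log\theta(\nu) = -\tfrac{3\Omega}{4\pi}r^{1/2}\tfrac{\theta'(\nu)}{\theta(\nu)}$, gives the claimed expression $\tfrac{8\pi}{3\Omega}\tfrac{x_3 p(y_3) c_0}{y_1 y_2}\partial_r\log\theta(\nu)$.

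The hard part will be the second step. The second derivatives $\mathcal G''(\tfrac{\tau}{2})$ and $\widetilde{\mathcal G}''(\tfrac{1+\tau}{2})$ force one to differentiate products of $\theta$-functions and to keep track of the simple pole of $\mathcal G$ at $\tfrac{1+\tau}{2}$ — exactly what the regularizations $\widetilde{\mathcal G}$, $\widetilde\theta$ absorb — and once the identities of Section \ref{Section: theta identities} are inserted one is faced with a lengthy expression in $a$, $\theta(\nu)$, $\theta'(\nu)$, $\theta''(\nu)$ whose collapse to a single multiple of $\theta'(\nu)/\theta(\nu)$ is not visible in advance; checking these cancellations, together with the branch-dependent signs of $\varphi_{y_3}^{(1/2)}$ and $\sqrt{c_{y_3}}$, is where the real effort lies.
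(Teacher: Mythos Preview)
Your proposal is correct and follows essentially the same approach as the paper. The paper's own proof consists of a single sentence stating that the argument is similar to that of Proposition \ref{thirdtermy2prop} and uses Propositions \ref{p_prop}, \ref{prop: Abel map and theta function} and Corollaries \ref{coro: abel map identity 1}, \ref{coro:identities without nu}; you have correctly identified exactly these ingredients, the intermediate target $-2\pi i\,r^{3/2}\sqrt{c_{y_3}}\,\varphi_{y_3}^{(1/2)}\,\theta'(\nu)/\theta(\nu)$, and the final algebraic evaluation (including the signs of $p(y_3)$ and $\varphi_{y_3}^{(1/2)}$).
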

\begin{proof}
The proof is similar to the proof of Proposition \ref{thirdtermy2prop}, and uses also Propositions \ref{p_prop}, \ref{prop: Abel map and theta function} and Corollaries \ref{coro: abel map identity 1}, \ref{coro:identities without nu}.
\end{proof}
It follows from Propositions \ref{thirdtermy1prop}, \ref{thirdtermy2prop} and \ref{thirdtermy3prop} that
\begin{align}\label{third term simple sum and c3}
\frac{1}{2\pi i r}  \sum_{j=1}^3 (-1)^{j+1}x_j \Big[ E_{y_j}(y_j)^{-1} E_{y_j}'(y_j)\Big]_{21} = c_{3} \, \partial_{r} \log \theta(\nu),
\end{align}
where the constant $c_{3}$ is given by
\begin{align}\label{def of c3 first}
c_{3} = \frac{8\pi c_{0}}{3\Omega} \left( \frac{x_{1} p(y_{1})}{(x_{1}-x_{3})(x_{1}-x_{2})} - \frac{x_{2}p(y_{2})}{(x_{2}-x_{3})(x_{1}-x_{2})} + \frac{x_{3}p(y_{3})}{(x_{1}-x_{3})(x_{2}-x_{3})} \right).
\end{align}
The three terms of the sum appearing on the left-hand side of \eqref{third term simple sum and c3} have been simplified separately in Propositions \ref{thirdtermy1prop}, \ref{thirdtermy2prop} and \ref{thirdtermy3prop}, so that the expression \eqref{def of c3 first} for $c_{3}$ does not involve the $\theta$-function. Remarkably, these three terms  combine together and it possible to further simplify $c_{3}$. The tools needed to simplify $c_{3}$ are of a different nature than those needed to simplify $c_{2}$ in \eqref{secontermexpression 2}. In Proposition \ref{prop: simplify c3} below, we show that $c_{3} = 1$ by using Riemann's bilinear identity.
\begin{proposition}\label{prop: simplify c3}
We have $c_{3} = 1$.
\end{proposition}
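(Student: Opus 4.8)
The plan is to reduce the identity $c_3=1$ to a single relation between the periods of the genus-one surface $X$ and then to extract that relation from Riemann's bilinear identity applied to the pair $(\omega, dg)$. Starting from \eqref{def of c3 first} and using $y_j = x_j-x_3$ (so $y_1-y_2 = x_1-x_2$ and $y_3 = 0$) together with $x_j = y_j + x_3$, I first collapse the bracket: writing the quadratic \eqref{p} as $p(z) = -z^2 + \tfrac{y_1+y_2-x_3}{2}z + \mathrm{const}$, an elementary computation gives
\[
\frac{p(y_1)-p(y_2)}{y_1-y_2} = -\frac{y_1+y_2+x_3}{2},
\qquad
\frac{p(y_1)}{y_1(y_1-y_2)} - \frac{p(y_2)}{y_2(y_1-y_2)} + \frac{p(0)}{y_1y_2} = -1,
\]
the second being the identity $y_2p(y_1) - y_1p(y_2) + (y_1-y_2)p(0) = -y_1y_2(y_1-y_2)$, valid for any quadratic with leading coefficient $-1$ (the linear and constant terms of $p$ drop out). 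Hence the bracket in \eqref{def of c3 first} equals $-\tfrac12(x_1+x_2+x_3)$, so that $c_3 = -\tfrac{4\pi c_0 (x_1+x_2+x_3)}{3\Omega}$, and it remains to prove $\Omega = -\tfrac{4\pi c_0}{3}(x_1+x_2+x_3)$.

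Next I set up the bilinear identity. Consider $dg = g'(z)\,dz$ as a meromorphic differential on $X$. Because $g' = p/\sqrt{\mathcal{R}}$ and, by Proposition \ref{p_prop}, $p$ does not vanish at $0, y_1, y_2$, the differential $dg$ is holomorphic at all three finite branch points, so its only pole is at $\infty$, where \eqref{asymp g 2cuts} shows it has a pole of order four and vanishing residue. Its $A$-period vanishes by \eqref{g' integrate to 0 2cuts}, and its $B$-period equals $\pm i\Omega$: the $B$-cycle is homologous to the cycle encircling $[y_2,y_1]$, around which $dg$ has period $2\int_{y_2}^{y_1}g_+'(s)\,ds$, and Lemma \ref{g_lemma}(2) identifies $\Omega = 2i\int_{y_2}^{y_1}g_+'(s)\,ds$. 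Together with $\oint_A\omega = 1$, $\oint_B\omega = \tau$, Riemann's bilinear identity for the holomorphic form $\omega$ and the second-kind differential $dg$ reads
\[
\oint_A\omega\oint_B dg - \oint_B\omega\oint_A dg = 2\pi i\,\mathrm{Res}_{z=\infty}\!\big(\varphi\,dg\big),
\qquad \varphi(z) = \int_{y_1}^z\omega,
\]
and since $\oint_A dg = 0$ the left-hand side is just $\oint_B dg = \pm i\Omega$.

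It then remains to compute the residue. Near $\infty$ both $\varphi$ and $g$ are single-valued in the local coordinate $t = z^{-1/2}$, so $d(\varphi g)$ has zero residue there and $\mathrm{Res}_{z=\infty}(\varphi\,dg) = -\mathrm{Res}_{z=\infty}(g\,\omega)$. Expanding $g$ via \eqref{asymp g 2cuts} and $\omega = c_0\,dz/\sqrt{\mathcal{R}(z)}$ with $\sqrt{\mathcal{R}(z)} = z^{3/2}\big(1 - \tfrac{y_1+y_2}{2}z^{-1} + \cdots\big)$, I read off
\[
\mathrm{Res}_{z=\infty}(g\,\omega) = 2c_0\Big(x_3 + \frac{y_1+y_2}{3}\Big) = \frac{2c_0}{3}(x_1+x_2+x_3),
\]
using $y_1+y_2 = x_1+x_2-2x_3$. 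Substituting into the bilinear identity gives $\pm i\Omega = -\tfrac{4\pi i c_0}{3}(x_1+x_2+x_3)$; since $\Omega>0$ and $x_1+x_2+x_3<0$ the sign must be $+$, whence $\Omega = -\tfrac{4\pi c_0}{3}(x_1+x_2+x_3)$, and combined with the first step this yields $c_3 = 1$.

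The one genuinely delicate point is the orientation and branch bookkeeping, so that the sign conventions implicit in $\oint_A\omega = 1$, $\oint_B\omega = \tau$, $\Omega = 2i\int_{y_2}^{y_1}g_+'$ and in Riemann's bilinear identity are mutually compatible; everything else is routine, and the positivity of $c_0$ and $\Omega$ removes the residual sign ambiguity at the end. As a cross-check, the relation $\Omega = -\tfrac{4\pi c_0}{3}(x_1+x_2+x_3)$ also follows directly from the complete elliptic integral expressions for $\Omega$ and $c_0$ via Legendre's relation, which is precisely Riemann's bilinear identity in the Legendre normalization.
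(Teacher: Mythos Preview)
Your proof is correct and follows essentially the same route as the paper: first reduce the bracket in \eqref{def of c3 first} to $-\tfrac12(x_1+x_2+x_3)$ so that $c_3=-\tfrac{4\pi c_0}{3\Omega}(x_1+x_2+x_3)$, and then use Riemann's bilinear identity to identify $\Omega$ with $-\tfrac{4\pi c_0}{3}(x_1+x_2+x_3)$. The paper applies the bilinear relation to the pair $\omega_1=dz/\sqrt{\mathcal R}$ and $\omega_2=(p+g_1)\,dz/\sqrt{\mathcal R}$, whereas you use $\omega$ and $dg=p\,dz/\sqrt{\mathcal R}$ directly; since $\omega_2-dg=g_1\omega_1$ is holomorphic, the two choices differ only by a trivial instance of the identity. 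Your integration-by-parts step $\mathrm{Res}_\infty(\varphi\,dg)=-\mathrm{Res}_\infty(g\,\omega)$, which lets you read the residue straight off \eqref{asymp g 2cuts}, is a small but pleasant shortcut over the paper's direct residue computation, and your sign resolution via $\Omega>0$, $c_0>0$, $x_1+x_2+x_3<0$ is a legitimate way to bypass the orientation bookkeeping.
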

\begin{proof}
First, we use the explicit expression \eqref{p} for $p$ to note that \eqref{def of c3 first} can be simplified into
\begin{align}\label{is C3 0 ?}
c_{3} = - \frac{4\pi c_{0}}{3\Omega} (x_{1}+x_{2}+x_{3}).
\end{align}
We consider the following one-forms on the Riemann surface $X$:
\begin{align*}
\omega_{1} = \frac{dz}{\sqrt{\mathcal{R}(z)}}, \hspace{2cm} \omega_{2} = \frac{(p(z)+g_{1})dz}{\sqrt{\mathcal{R}(z)}}.
\end{align*}
Let us deform the cycle $B$ of Figure \ref{fig: cycles Riemann surface} such that it has finite length and surrounds the interval $[y_{2},y_{1}]$ in the positive direction. Riemann's bilinear identity gives the relation
\begin{align*}
A_{1}B_{2}-A_{2}B_{1} = 2\pi i \sum_{p=y_{3},y_{2},y_{1},\infty} \mbox{Res} \left( \omega_{2} \int_{\infty}^{P} \omega_{1} ; P=p \right),
\end{align*}
where
\begin{align*}
& A_{1} = \int_{A} \omega_{1} = 2 \int_{0}^{y_{2}} \frac{dx}{\big| \sqrt{\mathcal{R}(x)} \big|}, & & B_{1} = \int_{B} \omega_{1} = 2i \int_{y_{2}}^{y_{1}} \frac{dx}{\big| \sqrt{\mathcal{R}(x)} \big|}, \\
& A_{2} = \int_{A} \omega_{2} = 2 \int_{0}^{y_{2}} \frac{(p(x)+g_{1})dx}{\big| \sqrt{\mathcal{R}(x)} \big|}, & & B_{2} = \int_{B} \omega_{2} = 2i \int_{y_{2}}^{y_{1}} \frac{(p(x)+g_{1})dx}{\big| \sqrt{\mathcal{R}(x)} \big|}.
\end{align*}
A simple computation shows that the Abelian differential
\begin{align*}
\omega_{2} \int_{\infty}^{P} \omega_{1}
\end{align*}
has no residue at $y_{3},y_{2},y_{1}$, and has a residue at $\infty$ given by $-\frac{2}{3}(x_{1}+x_{2}+x_{3})$ in the local coordinate. Therefore, we have
\begin{align}\label{Riemann Bilinear identity}
A_{1}B_{2}-A_{2}B_{1} = - \frac{4\pi i}{3}(x_{1}+x_{2}+x_{3}).
\end{align}
Using the definition of $g_{1}$, $\Omega$ and $c_{0}$, it is a direct computation to verify from \eqref{is C3 0 ?} that the claim $c_{3}=1$ is equivalent to \eqref{Riemann Bilinear identity}.
\end{proof}
Theorem \ref{thm: main result} follows directly by combining Propositions \ref{prop: asymp splitting}, \ref{prop: leading term}, \ref{prop: c2 is -1/2} and \ref{prop: simplify c3}, together with equations \eqref{secontermexpression 2} and \eqref{third term simple sum and c3}.


\appendix
\section{Bessel Model RH problem}\label{Section:Appendix}
In this section, we recall the Bessel model RH problem (see \cite{DIZ}) whose solution is denoted by $\Phi_{\mathrm{Be}}$.
\begin{itemize}
\item[(a)] $\Phi_{\mathrm{Be}} : \mathbb{C} \setminus \Sigma_{\mathrm{Be}} \to \mathbb{C}^{2\times 2}$ is analytic, where
$\Sigma_{\mathrm{Be}}$ is shown in Figure \ref{figBessel}.
\item[(b)] $\Phi_{\mathrm{Be}}$ satisfies the jump conditions
\begin{equation}\label{Jump for P_Be}
\begin{array}{l l} 
\Phi_{\mathrm{Be},+}(z) = \Phi_{\mathrm{Be},-}(z) \begin{pmatrix}
0 & 1 \\ -1 & 0
\end{pmatrix}, & z \in \mathbb{R}^{-}, \\

\Phi_{\mathrm{Be},+}(z) = \Phi_{\mathrm{Be},-}(z) \begin{pmatrix}
1 & 0 \\ 1 & 1
\end{pmatrix}, & z \in e^{ \frac{2\pi i}{3} }  \mathbb{R}^{+}, \\

\Phi_{\mathrm{Be},+}(z) = \Phi_{\mathrm{Be},-}(z) \begin{pmatrix}
1 & 0 \\ 1 & 1
\end{pmatrix}, & z \in e^{ -\frac{2\pi i}{3} }  \mathbb{R}^{+}. \\
\end{array}
\end{equation}
\item[(c)] As $z \to \infty$, $z \notin \Sigma_{\mathrm{Be}}$, we have
\begin{equation}\label{large z asymptotics Bessel}
\Phi_{\mathrm{Be}}(z) = ( 2\pi z^{\frac{1}{2}} )^{-\frac{\sigma_{3}}{2}}M
\left(I+ \frac{\Phi_{\mathrm{Be},1}}{z^{\frac{1}{2}}} + \bigO(z^{-1}) \right) e^{2 z^{\frac{1}{2}}\sigma_{3}},
\end{equation}
where $\Phi_{\mathrm{Be},1} = \frac{1}{16}\begin{pmatrix}
-1 & -2i \\ -2i & 1
\end{pmatrix}$ and $M = \frac{1}{\sqrt{2}}\begin{pmatrix}
1 & i \\ i & 1
\end{pmatrix}$.
\item[(d)] As $z$ tends to 0, the behavior of $\Phi_{\mathrm{Be}}(z)$ is
\begin{equation}\label{local behavior near 0 of P_Be}
\Phi_{\mathrm{Be}}(z) = \left\{ \begin{array}{l l}
\begin{pmatrix}
\bigO(1) & \bigO(|\log z|) \\
\bigO(z) & \bigO(|\log z|) 
\end{pmatrix}, & |\arg z| < \frac{2\pi}{3}, \\
\begin{pmatrix}
\bigO(|\log z|) & \bigO(|\log z|) \\
\bigO(|\log z|) & \bigO(|\log z|) 
\end{pmatrix}, & \frac{2\pi}{3}< |\arg z| < \pi.
\end{array}  \right.
\end{equation}
\end{itemize}
\begin{figure}[t]
    \begin{center}
    \setlength{\unitlength}{1truemm}
    \begin{picture}(100,55)(-5,10)
        \put(50,40){\line(-1,0){30}}
        \put(50,39.8){\thicklines\circle*{1.2}}
        \put(50,40){\line(-0.5,0.866){15}}
        \put(50,40){\line(-0.5,-0.866){15}}
        \put(50.3,36.8){$0$}
        \put(35,39.9){\thicklines\vector(1,0){.0001}}
        \put(41,55.588){\thicklines\vector(0.5,-0.866){.0001}}
        \put(41,24.412){\thicklines\vector(0.5,0.866){.0001}}
    \end{picture}
    \caption{\label{figBessel}The jump contour $\Sigma_{\mathrm{Be}}$ for $\Phi_{\mathrm{Be}}$.}
\end{center}
\end{figure}
The unique solution to this RH problem is given by 
\begin{equation}\label{Psi explicit}
\Phi_{\mathrm{Be}}(z)=
\begin{cases}
\begin{pmatrix}
I_{0}(2 z^{\frac{1}{2}}) & \frac{ i}{\pi} K_{0}(2 z^{\frac{1}{2}}) \\
2\pi i z^{\frac{1}{2}} I_{0}^{\prime}(2 z^{\frac{1}{2}}) & -2 z^{\frac{1}{2}} K_{0}^{\prime}(2 z^{\frac{1}{2}})
\end{pmatrix}, & |\arg z | < \frac{2\pi}{3}, \\

\begin{pmatrix}
\frac{1}{2} H_{0}^{(1)}(2(-z)^{\frac{1}{2}}) & \frac{1}{2} H_{0}^{(2)}(2(-z)^{\frac{1}{2}}) \\
\pi z^{\frac{1}{2}} \big( H_{0}^{(1)} \big)^{\prime} (2(-z)^{\frac{1}{2}}) & \pi z^{\frac{1}{2}} \big( H_{0}^{(2)} \big)^{\prime} (2(-z)^{\frac{1}{2}})
\end{pmatrix}, & \frac{2\pi}{3} < \arg z < \pi, \\

\begin{pmatrix}
\frac{1}{2} H_{0}^{(2)}(2(-z)^{\frac{1}{2}}) & -\frac{1}{2} H_{0}^{(1)}(2(-z)^{\frac{1}{2}}) \\
-\pi z^{\frac{1}{2}} \big( H_{0}^{(2)} \big)^{\prime} (2(-z)^{\frac{1}{2}}) & \pi z^{\frac{1}{2}} \big( H_{0}^{(1)} \big)^{\prime} (2(-z)^{\frac{1}{2}})
\end{pmatrix}, & -\pi < \arg z < -\frac{2\pi}{3},
\end{cases}
\end{equation}
where $H_{0}^{(1)}$ and $H_{0}^{(2)}$ are the Hankel functions of the first and second kind, and $I_0$ and $K_0$ are the modified Bessel functions of the first and second kind.

For certain computations, we will need a more precise expansion than \eqref{local behavior near 0 of P_Be}. By \cite[Section 10.30(i)]{NIST}), as $z \to 0$ we have
\begin{align}
& \Phi_{\mathrm{Be}}(z) = \begin{pmatrix}
1 + \bigO(z) & \bigO(|\log z|) \\
\bigO(z) & \bigO(1) 
\end{pmatrix}, \qquad z \to 0, \quad |\arg z | < \frac{2\pi}{3},
\label{asymp Bessel at 0 1}	\\
& \Phi_{\mathrm{Be}}(z)^{-1} = \begin{pmatrix}
\bigO(1) & \bigO(|\log z|) \\
\bigO(z) & 1 + \bigO(z) 
\end{pmatrix}, \qquad z \to 0, \quad |\arg z | < \frac{2\pi}{3},
\label{asymp Bessel at 0 2}	\\
& \Phi_{\mathrm{Be}}(z)^{-1}\Phi_{\mathrm{Be}}'(z) = \begin{pmatrix}
\bigO(|\log z|) & \bigO(z^{-1}) \\
2\pi i + \bigO(z) & \bigO(|\log{z}|) 
\end{pmatrix}, \qquad z \to 0, \quad |\arg z | < \frac{2\pi}{3}. \label{asymp Bessel at 0 3}
\end{align}


\end{document}